\renewcommand\footnotetextcopyrightpermission[1]{} 
\newlength\abovesectionskip
\newlength\belowsectionskip
\def\sectionfont{\normalfont\Large\bfseries}
\newlength\abovesubsectionskip
\newlength\belowsubsectionskip
\def\subsectionfont{\normalfont\large\bfseries}
\newlength\abovesubsubsectionskip
\newlength\belowsubsubsectionskip
\def\subsubsectionfont{\normalfont\normalsize\bfseries}
\newlength\aboveparagraphskip
\newlength\belowparagraphskip
\def\paragraphfont{\normalfont\normalsize\bfseries}
\def\section{\@startsection{section}{1}{\z@}{-\abovesectionskip}%
               {\belowsectionskip}{\sectionfont}}
\def\subsection{\@startsection{subsection}{2}{\z@}{-\abovesubsectionskip}%
                  {\belowsubsectionskip}{\subsectionfont}}
\def\subsubsection{\@startsection{subsubsection}{3}{\z@}%
                     {-\abovesubsubsectionskip}{\belowsubsubsectionskip}%
                     {\subsubsectionfont}}
\def\paragraph{\@startsection{paragraph}{4}{\z@}{-\aboveparagraphskip}%
                 {-\belowparagraphskip}{\paragraphfont}}
\renewenvironment{align*}{%
  \abovedisplayskip 5pt plus 1pt%
  \belowdisplayskip 5pt plus 1pt%
  \start@align\@ne\st@rredtrue\m@ne
}{%
  \endalign
}
\let\stdequation\equation
\renewcommand*\equation{%
  \abovedisplayskip 5pt plus 1pt%
  \belowdisplayskip 5pt plus 1pt%
  \stdequation}
\DeclareRobustCommand{\[}{
  \abovedisplayskip 5pt plus 1pt%
  \belowdisplayskip 5pt plus 1pt%
  \begin{equation*}
}
\tikzset{%
  remember picture with id/.style={%
    remember picture,
    overlay,
    save picture id=#1,
  },
  save picture id/.code={%
    \edef\pgf@temp{#1}%
    \immediate\write\pgfutil@auxout{%
      \noexpand\savepointas{\pgf@temp}{\pgfpictureid}}%
  },
  if picture id/.code args={#1#2#3}{%
    \@ifundefined{save@pt@#1}{%
      \pgfkeysalso{#3}%
    }{
      \pgfkeysalso{#2}%
    }
  }
}
\def\savepointas#1#2{%
  \expandafter\gdef\csname save@pt@#1\endcsname{#2}%
}
\def\tmk@labeldef#1,#2\@nil{%
  \def\tmk@label{#1}%
  \def\tmk@def{#2}%
}
\definecolor{mycolor}{HTML}{7DE7FE}
\definecolor{mycolor2}{HTML}{80F0FE}
\newlength\AlgIndent
\newcounter{mymark}
\newcommand\ColorLine{%
  \stepcounter{mymark}%
  \tikz[remember picture with id=mark-\themymark,overlay] {;}%
  \begin{tikzpicture}[remember picture,overlay]%
    \filldraw[mycolor2]%
   let \p1=(pic cs:mark-\themymark), 
   \p2=(current page.east)  in 
   ([xshift=-\ALG@thistlm-0.03em,yshift=-0.7ex]0,\y1)  rectangle ++(\linewidth+\AlgIndent,\baselineskip);
  \end{tikzpicture}%
}%
\newcommand\ColorLineT{%
  \stepcounter{mymark}%
  \tikz[remember picture with id=mark-\themymark,overlay] {;}%
  \begin{tikzpicture}[remember picture,overlay]%
    \filldraw[mycolor2]%
   let \p1=(pic cs:mark-\themymark), 
   \p2=(current page.east)  in 
   ([xshift=-\ALG@thistlm-0.03em,yshift=-0.7ex]0,\y1-\baselineskip-0.1em)  rectangle ++(\linewidth+\AlgIndent,2.1*\baselineskip);
  \end{tikzpicture}%
}%
\algnewcommand\CREQUIRE{\item[\setlength\AlgIndent{1.6em}\ColorLine\algorithmicrequire]}%
\algnewcommand\CENSURE{\item[\setlength\AlgIndent{1.6em}\ColorLine\algorithmicensure]}%
\algnewcommand\CSTATE{\State\ColorLine}%
\algnewcommand\CSTATET{\State\ColorLineT}%
\algnewcommand\CSTATEx{\Statex\ColorLine}%
\algnewcommand\CCOMMENT{\Comment\ColorLine}%
    \newtheoremstyle{mythmstyle}
      {6pt}   %
      {6pt}   
      {}            %
      {}            %
      {\bfseries}   %
      {. }          %
      {2.5pt}       %
      {\thmname{#1}\thmnumber{ #2}\thmnote{ \normalfont (#3)}}   %
    \theoremstyle{mythmstyle}
    \newtheorem{theorem}{Theorem}[section]\numberwithin{equation}{section}
    \newtheorem{corollary}[theorem]{Corollary}
    \newtheorem{claim}[theorem]{Claim}
    \newtheorem{lemma}[theorem]{Lemma}
    \newtheorem{fact}[theorem]{Fact}
    \newtheorem{remark}[theorem]{Remark}
    \newcommand{\abs}[1]{\lvert #1 \rvert}
    \newcommand{\card}[1]{\abs{#1}}
    \newcommand{\set}[1]{\left \{ #1 \right \}}                     
    \newcommand{\setst}[2]{\left\{\; #1 \,:\, #2 \;\right\}}        
    \newcommand{\union}{\cup}                                       
    \newcommand{\Union}{\bigcup}
    \newcommand{\intersect}{\cap}                                   
    \newcommand{\ceil}[1]{\left\lceil #1 \right\rceil}
    \newcommand{\argmax}{\operatornamewithlimits{arg\,max}}
    \newcommand{\bR}{\mathbb{R}}
    \newcommand{\cE}{\mathcal{E}}
    \newcommand{\tO}{\tilde{O}}
    \newcommand{\Algorithm}[1]{Algorithm~\ref{alg:#1}}
    \newcommand{\AppendixName}[1]{\label{app:#1}}
    \newcommand{\Appendix}[1]{Appendix~\ref{app:#1}}
    \newcommand{\ClaimName}[1]{\label{clm:#1}}
    \newcommand{\Claim}[1]{Claim~\ref{clm:#1}}
    \newcommand{\CorollaryName}[1]{\label{cor:#1}}
    \newcommand{\Corollary}[1]{Corollary~\ref{cor:#1}}
    \newcommand{\Fact}[1]{Fact~\ref{fact:#1}}
    \newcommand{\FactName}[1]{\label{fact:#1}}
    \newcommand{\Figure}[1]{Figure~\ref{fig:#1}}
    \newcommand{\FigureName}[1]{\label{fig:#1}}
    \newcommand{\LemmaName}[1]{\label{lem:#1}}
    \newcommand{\Lemma}[1]{Lemma~\ref{lem:#1}}
    \newcommand{\Section}[1]{Section~\ref{sec:#1}}
    \newcommand{\SectionName}[1]{\label{sec:#1}}
	\newcommand{\Theorem}[1]{Theorem~\ref{thm:#1}}
    \newcommand{\TheoremName}[1]{\label{thm:#1}}
    \renewcommand{\comment}[1]{}
\newcommand{\eps}{\varepsilon}
\newcommand{\E}{\mathbb{E}}
\newcommand{\R}{\mathbb{R}}
\newcommand{\cA}{\mathcal{A}}
\newcommand{\cB}{\mathcal{B}}
\newcommand{\cC}{\mathcal{C}}
\newcommand{\cS}{\mathcal{S}}
\newcommand{\cX}{\mathcal{X}}
\DeclareMathOperator{\OPT}{OPT}
\newcommand{\mem}{\mu}
\newcommand{\wmax}{w_\text{max}}
\newcommand{\wmin}{w_\text{min}}
\newcommand{\polylog}{\operatorname{polylog}}
\newcommand{\poly}{\operatorname{poly}}
\begin{document}

\copyrightyear{2018} 
\acmYear{2018} 
\setcopyright{acmlicensed}
\acmConference[SPAA '18]{30th ACM Symposium on Parallelism in Algorithms and Architectures}{July 16--18, 2018}{Vienna, Austria}
\acmBooktitle{SPAA '18: 30th ACM Symposium on Parallelism in Algorithms and Architectures, July 16--18, 2018, Vienna, Austria}
\acmPrice{15.00}
\acmDOI{10.1145/3210377.3210386}
\acmISBN{978-1-4503-5799-9/18/07}

\settopmatter{printacmref=false}

\title{Greedy and Local Ratio Algorithms in the MapReduce Model}
\author{Nicholas J. A. Harvey}
\affiliation{
  \institution{University of British Columbia}
  \city{Vancouver}
  \country{Canada}
}
\email{nickhar@cs.ubc.ca}

\author{Christopher Liaw}
\affiliation{
  \institution{University of British Columbia}
  \city{Vancouver}
  \country{Canada}
}
\email{cvliaw@cs.ubc.ca}

\author{Paul Liu}
\affiliation{
  \institution{Stanford University}
  \city{Stanford}
  \state{California}
  \country{USA}
}
\email{paulliu@stanford.edu}

\begin{abstract}
MapReduce has become the \textit{de facto} standard model for
designing distributed algorithms to process big data on a cluster.
There has been considerable research on designing efficient
MapReduce algorithms for clustering, graph optimization, 
and submodular optimization problems.
We develop new techniques for designing greedy and local ratio
algorithms in this setting.
Our randomized local ratio technique gives
$2$-approximations for weighted vertex cover and weighted matching, and an $f$-approximation for weighted set cover, all in a constant number of MapReduce rounds.
Our randomized greedy technique gives algorithms for maximal independent set, maximal clique,
and a $(1+\eps)\ln \Delta$-approximation for weighted set cover.
We also give greedy algorithms for vertex colouring with $(1+o(1))\Delta$ colours and edge colouring with $(1+o(1))\Delta$ colours.
\comment{
Using variations of the local ratio theorem, we present two simple randomized approximation algorithms in the MapReduce model of Karloff et al \cite{KSV10}. The first is a 2-approximation for the minimum weight vertex cover, and the second is a 2-approximation for the maximum weight matching. Both algorithms run in a constant number of MapReduce rounds and improve significantly in approximation ratio, while preserving the low memory of existing algorithms. To our knowledge, our algorithm is the first for minimum weight vertex cover that runs in a constant number of rounds in the MapReduce model.
}
\end{abstract}

\begin{CCSXML}
<ccs2012>
  <concept>
  <concept_id>10003752.10003809.10003636.10003810</concept_id>
  <concept_desc>Theory of computation~Packing and covering problems</concept_desc>
  <concept_significance>500</concept_significance>
  </concept>
  <concept>
  <concept_id>10003752.10003809.10010172.10003817</concept_id>
  <concept_desc>Theory of computation~MapReduce algorithms</concept_desc>
  <concept_significance>500</concept_significance>
  </concept>
  <concept>
  <concept_id>10003752.10003753.10003761.10003763</concept_id>
  <concept_desc>Theory of computation~Distributed computing models</concept_desc>
  <concept_significance>300</concept_significance>
  </concept>
  <concept>
  <concept_id>10003752.10003809.10011254</concept_id>
  <concept_desc>Theory of computation~Algorithm design techniques</concept_desc>
  <concept_significance>300</concept_significance>
  </concept>
</ccs2012>
\end{CCSXML}

\ccsdesc[500]{Theory of computation~Packing and covering problems}
\ccsdesc[500]{Theory of computation~MapReduce algorithms}
\ccsdesc[300]{Theory of computation~Distributed computing models}
\ccsdesc[300]{Theory of computation~Algorithm design techniques}

\keywords{local ratio, vertex cover, weighted matching, mapreduce, graph colouring}

\maketitle


\section{Introduction}
\SectionName{intro}

A wealth of algorithmic challenges arise in processing the large data sets common in social networks, machine learning, and data mining.
The tasks to be performed on these data sets commonly involve graph optimization, clustering, selecting representative subsets, etc.\ --- tasks whose theory is well-understood in a sequential model of computation. 

The modern engineering reality is that these large data sets are distributed across clusters or data centers, for reasons including  bandwidth and memory limitations, and fault tolerance.
New programming models and infrastructure, such as MapReduce and Spark, have been developed to process this data 
efficiently.
The MapReduce model is attractive to theoreticians
as it is clean, simple, and has rigorous foundations in the work of Valiant \cite{V90} and Karloff et al.~\cite{KSV10}.
Designing MapReduce algorithms often involves concepts arising in parallel, distributed, and streaming algorithms, so
it is necessary to understand classic optimization problems in a new light.

\comment{
\vspace{3cm}

In recent years, the necessity of algorithms for large-scale data processing has grown rapidly, and several computational models for massive parallelization have been proposed. Such models build upon the familiar PRAM model, and restrict computation in realistic ways encountered in practice. MapReduce, as well as its open-source alternative, Hadoop, have gained widespread popularity following enthusiastic adoptation in industry for highly-parallel data analysis.}

In the formalization of the MapReduce model \cite{KSV10}, the data for a given problem is partitioned across many machines, where each machine has memory sublinear in the input size.
Computation proceeds in a sequence of rounds: in each round, a machine performs a polynomial-time computation on its local data.
Between rounds, each machine simultaneously sends data to all other machines, the size of which is restricted only by the sender's and recipients' memory capacity.
The primary efficiency consideration in this model is the number of rounds, although it is 
also desirable to constrain other metrics, such as the memory overhead, the amount of data communicated, and the processing time.

Greedy approximation algorithms have been 
a popular choice for adapting to the MapReduce model, in the hopes that their simple structure suits the restrictions of the model.
Unfortunately, many greedy algorithms also seem to be inherently sequential, a property which is rather incompatible with the parallel nature of MapReduce computations. 
This phenomenon is, of course, well known to parallel and distributed algorithm researchers, as it was the impetus for Valiant's \cite{V82} and Cook's \cite{C83} encouragement to study the maximal independent set problem in a parallel setting.
Primal-dual approximation algorithms have received comparably less use in the MapReduce setting, perhaps because they seem even more sequential than greedy algorithms.

\subsection{Techniques and Contributions}

We develop two new techniques for designing MapReduce algorithms. 
The first is a ``randomized local ratio'' technique, which combines the 
local ratio method \cite{BBFR04} with judicious random sampling in order to enable parallelization.
At a high level, the technique choses an i.i.d.\ random sample of elements, then runs an ordinary local ratio algorithm on this sample.
The crux is to show that the weight adjustments performed by the local ratio algorithm cause a significant fraction of the non-sampled elements to be eliminated.
Repeating this several times allows us to execute the local ratio method with no loss in the approximation ratio.
We use this technique to give a $2$-approximation for min-weight vertex cover (\Section{fSC}) and a $2$-approximation for max-weight matching (\Section{matching}).
The vertex cover result is a special case of our $f$-approximation for min-weight set cover, where $f$ is the largest frequency of any element.
Additional subtleties arise in applying this technique to the max-weight $b$-matching problem ($b \geq 2$), for which we obtain a $(3\!-\!\frac{2}{b}\!+\!\eps)$-approximation (\Appendix{bmatching}).
Adapting these randomized local ratio algorithms to the MapReduce setting is straightforward: all machines participate in the random sampling, and a single central machine executes the local ratio algorithm.
It is worth emphasizing that these algorithms are very simple and could easily be implemented in practice.



Our second technique we call the ``hungry-greedy'' technique.
Whereas randomized local ratio uses i.i.d.\ sampling, hungry-greedy first samples ``heavy'' elements, not to maximize an objective function, but to disqualify a large fraction of elements from
the solution.
Doing so allows us to rapidly shrink the problem size, and thereby execute the greedy method in just a few rounds.
We emphasize that the hungry-greedy technique is applicable even to problems without an objective function or an \textit{a priori} ordering on the greedy choices.
We use this technique to give efficient algorithms for
maximal independent set (\Section{MIS}),
maximal clique (\Appendix{maxclique}),
and a $(1+\eps)\ln \Delta$ approximation to weighted set cover (\Section{logDeltaSC}).

At first glance one may think that the maximal clique result follows by a trivial reduction from the independent set result, but this is not the case. 
The formalization of the MapReduce model requires that computations be performed in linear space, so it is not possible to complement a sparse graph. 
Other problems that are widely studied in the distributed computing literature and can usually be solved by a trivial reduction to maximal independent set include vertex colouring with $\Delta+1$ colours and edge colouring with $2\Delta-1$ colours; see the monograph of Barenboim and Elkin \cite{BE17}.
Again, these reductions are not possible in the MapReduce model due to space restrictions.
We make progress on these problems by giving algorithms for $(1+o(1))\Delta$ vertex colouring and $(1+o(1))\Delta$ edge colouring (\Section{colouring}).

\begin{figure*}[ht]
\onecolumn
\caption{\footnotesize Results for MapReduce algorithms.
For graph algorithms, the number of vertices is $n$, the number of edges is assumed to be $n^{1+c}$, and the maximum degree is $\Delta$.
For set cover, the number of sets is $n$,
the size of the ground set is $m$, 
the size of the largest set is $\Delta$,
and the largest number of sets containing any element is $f$, the maximum (resp.~minimum) weight is $\wmax$ (resp.~$\wmin$).
For \Theorem{fSCMR} it is assumed that $m = n^{1+c}$. The space per machine is typically $n^{1+\mem}$, which for most results (except \cite{AG15}) need not be constant; taking $\mem=1/\log n$ one can often get $O(n)$ space and $O(\log n)$ rounds.
For \Theorem{logDeltaSCMR}, $n$ can depend arbitrarily on $m$.
}
\FigureName{table}
\sffamily
\scriptsize
\setlength\extrarowheight{3pt}
\begin{longtabu} to \textwidth{X[1.9,l]|X[1,c]|X[1.5,l]|X[3.2,l]|X[1.6,l]|X[1,l]}
\textbf{Problem} & \textbf{Weighted?} & \textbf{Approximation} & \textbf{MapReduce Rounds} & \textbf{Space per machine} & \textbf{Reference}
\\[1pt]\hline
Vertex Cover
	& 
    & $2$
    & $O(c/\mem)$
    & $O(n^{1+\mem})$
    & \cite{KMVV15}
    \\
    & Y
    & $O(\log^2 n)$
    & $2$
    & $\tO(n^{1.5})$
    & \cite{AK17}
    \\
    & Y
    & $2$
    & $O(c/\mem)$
    & $O(n^{1+\mem})$
    & \Theorem{fSCMR}
    \\

    \hline
Set Cover
	&
    & $O(\log m)$
    & $O(\log(m) \card{OPT})$
    & not analyzed
    & \cite{MKBK15}
	\\
	&
	& $(1+\eps)\ln m$
	& $O\big(\log(\frac{n}{\card{OPT}}) \log(\Delta) / \eps + \log m\big)$
    & $O(\sqrt{n\card{OPT}}\Delta)$
    & \cite{MZK16}
    \\
    & Y
    & $f$
    & $O((c/\mem)^2)$
    & $O(f \cdot n^{1+\mem})$
    & \Theorem{fSCMR}
   	\\
    & Y
    & $(1+\eps)\ln \Delta$
    & $O\Bigg(\frac{\log\big(\frac{\wmax}{\wmin}\Delta\big)}{\mem^2 \eps}\Bigg)$ if $n=\operatorname{poly}(m)$
    & $O(m^{1+\mem})$
    & \Theorem{logDeltaSCMR}

	\\\hline
Maximal Indep.\ Set
	& 
    & 
    & $O(c/\mem)$
    & $O(n^{1+\mem})$
    & \Theorem{MIS2}
	\\

	\hline
Maximal Clique
	& 
    & 
    & $O(1/\mem)$
    & $O(n^{1+\mem})$
    & \Corollary{maxclique}
	\\


    \hline
Matching
	&
	& $2$
    & $O(c/\mem)$
    & $O(n^{1+\mem})$
    & \cite{KMVV15}
    \\
	& Y
	& $8$
    & $O(c/\mem)$
    & $O(n^{1+\mem})$
    & \cite{KMVV15}
    \\
	& Y
	& $4+\eps$
    & $O(c/\mem)$
    & $O(n^{1+\mem})$
    & \cite{CS14}
    \\
	& Y
	& $3.5+\eps$
    & $O(c/\mem)$
    & $O(n^{1+\mem})$
    & \cite{GMZ16}
    \\
	& Y
	& $2+\eps$
    & $O( (c/\mem) \log(1/\eps))$
    & $O(n^{1+\mem})$
    & \cite{LPP15}
    \\
	& Y
	& $1+\eps$
    & $O(1/\mem\eps)$
    & $O(n^{1+\mem})$
    & \cite{AG15}
    \\
    & Y
    & $O(1)$
    & $2$
    & $\tO(n^{1.5})$
    & \cite{AK17}
	\\
    & 
    & $2+\eps$
    & $O((\log \log n)^2 \log(1/\eps))$
    & $O(n)$
    & \cite{CLMMOS17}
    \\
    & 
    & $1+\eps$
    & $(1/\eps)^{O(1/\eps)} \log \log n$
    & $\tilde{O}(n)$
    & \cite{ABB17}
    \\
	& Y
	& $2$
    & $O(c/\mem)$
    & $O(n^{1+\mem})$
    & \Theorem{matchingMR}
	\\

    
    \hline
Vertex Colouring
	& 
    & $(1+o(1))\Delta$ colours
    & $O(1)$
    & $O(n^{1+\mem})$
    & \Theorem{vtxcolouring}
	\\[2pt]
\hline

Edge Colouring
	& 
    & $(1+o(1))\Delta$ colours
    & $O(1)$
    & $O(n^{1+\mem})$
    & \Theorem{edgecolouring}
	\\[2pt]
\hline
\end{longtabu}
\twocolumn
\end{figure*}



\subsection{Related work}

Within the theory community, part of the appeal of the MapReduce model is the connection to several established topics, including distributed algorithms, PRAM algorithms, streaming algorithms, submodular optimization and composable coresets. 

One of the earliest papers on MapReduce algorithms is due to Lattanzi et al.~\cite{LMSV11}, in which the filtering technique is introduced.
Filtering involves choosing a random sample, from which a partial solution is constructed and used to eliminate candidate elements from the final solution; the process is repeated until all elements are exhausted.
This technique has been quite influential, and many subsequent papers can be viewed as employing a similar methodology.
Indeed, our randomized local ratio technique can be viewed as a descendant of the filtering technique in which the local ratio weights are used to eliminate elements.
The original filtering technique yielded $2$-approximations for unweighted matching and vertex cover, and was combined with layering ideas to give an $8$-approximation for weighted matching \cite{LMSV11}.
These layering ideas were improved by Crouch and Stubbs \cite{CS14} and Grigorescu et al.~\cite{GMZ16} to give a $(3.5+\eps)$-approximation for weighted matching in the semi-streaming and MapReduce models.
In addition, it is also known that any $c$-approximation to maximum weighted matching can be automatically converted to a $(2+\eps)$-approximation algorithm
by using the former as a blockbox~\cite{LPP15}.

Recently Paz and Schwartzman~\cite{PS17} gave a $(2+\eps)$-approximation for weighted matching in the semi-streaming model, using the local ratio method with clever pruning techniques to bound the recursion depth.
This result was slightly improved and substantially simplified by Ghaffari~\cite{G17}.
Unfortunately, this work does not apply to the MapReduce setting: it is very space-efficient, but not distributed.
Our randomized local ratio technique is inspired by this work, but different.
Whereas \cite{PS17,G17} are deterministic and process the input in a streaming fashion, our technique achieves its space efficiency through random sampling and requires several rounds to process the entire input.

Submodular optimization has also been a 
fruitful avenue for MapReduce algorithms. 
Kumar et al.~\cite{KMVV15} gave a general framework for maximizing monotone submodular functions over hereditary constraints in the MapReduce setting while losing only $\eps$ in the approximation ratio.
Their Sample-and-Prune and $\eps$-Greedy techniques are vaguely related to our hungry-greedy technique, but nevertheless different as our technique does not maximize an objective.
Barbosa et al.~\cite{BENW16} develop a different framework that achieves the same approximation ratio while using fewer MapReduce rounds; it also supports the continuous greedy algorithm and
non-monotone functions.
A specific result relevant to our work is a $\frac{1}{3+\eps}$-approximation for weighted matching in bipartite graphs \cite{KMVV15} in $O(\log(\frac{\wmax}{\wmin})/\mem\eps)$ rounds and $O(n m^\mem)$ space per machine.
The results of Barbosa et al.~seem to improve this to $O(1/\eps)$ rounds and $O(\sqrt{nm})$ space per machine.

Another problem related to submodular maximization is the submodular set cover problem~\cite{W82}, which generalizes the weighted set cover problem, and has been studied\footnote{
	The set cover problem was studied earlier in the MapReduce setting by Stergiou and Tsioutsiouliklis~\cite{ST15}, although their work appears not to have rigorous guarantees regarding concurrent access to shared state.
}
in the MapReduce model by Mirzasoleiman et al.\ \cite{MKBK15,MZK16}.
Both of our techniques are applicable to weighted set cover.
Our randomized local ratio technique matches the classic $f$-approximation~\cite{BE85}, where $f$ is the largest frequency of an element.
Our hungry-greedy technique gives a $(1+\eps)\ln \Delta$ approximation,
nearly matching the optimal $\ln \Delta$-approximation for polynomial-time algorithms \cite{C79}.
It is worth noting that our $f$-approximation is intended (as with vertex cover) for the scenario that $n \ll m$, whereas our second algorithm assumes $m \ll n$.
Our latter result has several advantages over the work of Mirzasoleiman et al.: we handle the weighted case, improve the $\ln m$ in the approximation ratio to $\ln \Delta$, our space usage is controllable by an arbitrary parameter $\mem$, and use significantly less space when $m \ll n$.

The classical PRAM model
admits algorithms for many of the problems we consider.
There are general simulations of EREW PRAMs \cite{KSV10} and CREW PRAMs \cite{GSZ11} by MapReduce algorithms, although these lead to $\polylog(n)$-rounds algorithms, which does not meet the MapReduce gold standard of $O(1)$ rounds.

Composable core-sets have recently emerged as a useful primitive for the design of distributed approximation algorithms; see, e.g., \cite{BEL13,IMMM14,MZ15} and the references therein.
The idea is to partition the data, compute on each part a representative subset called a core-set, then solve the problem on the union of the core-sets.
Relevant to this is the work of Assadi et al.~\cite{ABB17} (extending work by Assadi and Khanna~\cite{AK17}), which designs $3/2 + \eps$ (resp.~$3+\eps$) approximate core-sets for unweighted matching (resp.~vertex cover).
They obtain 2-round MapReduce algorithms as a corollary. Combining with the technique of Crouch and Stubbs~\cite{CS14}, these can be extended to $O(1)$ (resp. $O(\log n)$) approximations for weighted variants of the problem.

Primal-dual algorithms have previously been studied in the MapReduce model, notably in the work of Ahn and Guha \cite{AG15}. 
They develop sophisticated multiplicative-weight-type techniques to approximately solve the matching (and $b$-matching) LP in very few iterations.
Their technique gives a $(1+\eps)$-approximation in $O(1/\mem\eps)$ MapReduce rounds while using only $O(n^{1+\mem})$ space per machine.
While their algorithm achieves a superior approximation factor to ours, it is also highly technical.
Our algorithm is concise, simple, and could plausibly be the end-of-the-road for filtering-type techniques for the matching problem.

Very recently there have been some exciting developments in MapReduce algorithms.
Im et al.~\cite{IMS17} have given a general dynamic programming framework that seems very promising for designing more approximation algorithms.
Czumaj et al.~\cite{CLMMOS17} introduced a round compression technique for MapReduce and applied it to matching to obtain a linear space MapReduce algorithm which gives a $(2+\eps)$-approximation algorithm for unweighted matching in only $O((\log \log n)^2 \log(1/\eps))$ rounds.
For unweighted matching, Assadi et al.~\cite{ABB17} built on this work and gave a $(2+\eps)$-approximation algorithm in $O((\log \log n) \log(1/\eps))$ rounds with $\tilde{O}(n)$ space and a $(1+\eps)$-approximation algorithm in $(1/\eps)^{O(1/\eps)} \log \log n$ rounds with $\tilde{O}(n)$ space.
Round compression has also been used to give a $O(\log n)$ approximation for unweighted vertex cover using $O(\log \log n)$ MapReduce rounds~\cite{Assadi17}.

\comment{
 - Emphasize importance of MIS in distributed literature

Done:
 - Interest to Data Mining, ML, Parallel algorithms and Theory Community.
 - Comparison to Lattanzi Filtering
 - Mention submodular literature (Kumar, Barbosa...)
 - Set cover:
    - n<<m and m>>n regimes
    - StergiouTsioutsiouliklis not rigorous
 - PRAM
 - PazSchwartzman
 - Composable Coresets
 - Cite Barbosa
 - Cite Moseley STOC17
}

\comment{
Though MapReduce is a relatively recent framework, there have already been numerous results adapting
classical algorithms to the MapReduce model \cite{KSV10,KMVV15,LMSV11,SV11}. In many cases, it has
been shown by Goodrich~\cite{G10} that $O(\log n)$ rounds of MapReduce are sufficient to simulate
certain types of PRAM algorithms, allowing simple translations between the two computational models.
For matching and vertex cover in particular, the method of \emph{filtering} by Lattanzi et
al.~\cite{LMSV11} provide simple, constant round randomized algorithms that achieve a constant
approximation ratio. For the problems of unweighted maximum matchings and unweighted minimum vertex covers,
they obtain a 2-approximation in $O(c/\mem)$ rounds, where the graph is assumed to have $O(n^{1+c})$
edges and each machine uses $O(n^{1+\mem})$ memory. 

To extend their maximal matching result to maximum \emph{weighted} matchings, they first partition the edge set of the graph into subsets of roughly equal weight.
Then, they run the maximal matching algorithm on each of the subsets and combine the resulting matchings into a single matching.
 They show that their algorithm achieves an 8-approximation in $O(c/\mem)$ MapReduce rounds.
Crouch and Stubbs~\cite{CS14} later gave a different parallelization scheme where each parallel instance consists only of edges with large enough weight.
Again, they compute a maximal matching on each parallel instance and then combine the matchings
greedily to obtain a $(4 + \eps)$-approximation.
Grigorescu, Monemizadeh, and Zhou~\cite{GMZ16} show that replacing the final greedy matching with an
optimal matching gives a $(3.5+\eps)$-approximation.
However, both \cite{CS14, GMZ16} could potentially replicate the dataset $O(\log n)$ times. In the case of weighted vertex covers, we are not aware of any approximations in the MapReduce model.
}

\comment{
\subsection{Contributions}
Using variations of the local ratio theorem \cite{BBFR04,BE85,PS17}, we present simple randomized
algorithms for minimum weight vertex cover and maximum weight matching. Our algorithms run in $O(c /
\mem)$ rounds, and beat all existing algorithms in terms of approximation ratio, and improving or matching them in number of rounds. Our approach is to sample the graph using the filtering method of Lattanzi et al.~\cite{LMSV11}, allowing a subset of the graph to fit on one machine. We then apply variants of the local ratio algorithm to the sampled graph, removing a large fraction of the edges in the input graph after each round of MapReduce. Our approximation for maximum weight matching matches the best known approximation for MapReduce maximum cardinality matching. To the best of our knowledge, our algorithm for minimum weight vertex cover is the first that runs in a constant number of rounds in the MapReduce model.
}

\subsection{The MapReduce Model}

In this paper we adopt the MapReduce model as formalized by Karloff et al.~\cite{KSV10}; see also \cite{G10}.
In their setting, there is an input of size $N$ distributed across $O(N^\delta)$ machines, each with $O(N^{1-\delta})$ memory. 
This models the property that modern big data sets  must be stored across a cluster of machines. 
Computation proceeds in a sequence of rounds.
In each round, each machine receives input of size $O(N^{1-\delta})$; it performs a computation on that input; then it produces a sequence of output messages, each of which is to be delivered to another machine at the start of the next round.
The total size of these output messages must also be $O(N^{1-\delta})$.
In between rounds, the output messages are delivered to their recipients as input.

The data format and computation performed is not completely arbitrary.
Data is stored as a sequence of key-value pairs, and the computation is performed by the eponymous \textit{map} and \textit{reduce} functions that operate on such sequences.
The requirements of these functions are not particularly relevant to our work, so we refer the interested reader to standard textbooks \cite{LRU14} or the work of Karloff et al.~\cite{KSV10}.

For graph algorithms, we will
assume that the space per machine is bounded as a function of $n$, the number of vertices, whereas the input size is $O(N)$ where $N=m$ is the number of edges. As in Lattanzi et al.~\cite{LMSV11}, we will typically assume that the space per machine is $O(n^{1+\mem})$ and the graph has $m=n^{1+c}$ edges, $c>\mem$.
This conforms to the requirements of Karloff et al.\ with $\delta = \frac{c-\mem}{1+c}$. 

The memory constraints of graph problems in the MapReduce model stem from the early work of Karloff et al.~\cite{KSV10} and Leskovec et al.~\cite{LKF05}. In these papers, graph problems with $n$ nodes and $n^{1+c}$ edges were examined. Specifically, Leskovec et al.~\cite{LKF05} found that real world graph problems were often not sparse, but had $n^{1+c}$ edges where $c$ varied between 0.08 and larger than 0.5. In practice, MapReduce algorithms also require $\Omega(n)$ memory per machine. One such example is the diameter estimation algorithm of Kang et al.~\cite{KTAFL11}, in which $O((n+m)/M)$ memory is required per machine, where $M$ is the number of machines in total.

\comment{
In this section we 
describe the MapReduce 
model. For more details, see \cite{G10,KSV10} and the references therein.

In the MapReduce (MRC) model, computations are carried out as a sequence of \emph{map}, \emph{shuffle}, and \emph{reduce} steps that operate with $(key, value)$ pairs as input and output. A single \emph{round} of MapReduce consists of a map, shuffle, and reduce in that order. Roughly speaking, the map and shuffle steps distribute the data to parallel machines, and the reduce step performs local computation with the data it's given. To implement a MapReduce algorithm, an algorithm designer implements the map and reduce functions:
\begin{itemize}
\item A mapper $\mem$ takes in a $(key, value)$ pair and outputs any number of $(key, value)$ pairs. 
\item A reducer $\rho$ takes in a single key and a list of values, and produces any number of $(key, value)$ pairs. 
\end{itemize}
The computation of both the mapper and reducer must depend only on its input arguments. The shuffle phase, which is system dependent, is responsible for distributing the output of the mappers to the reducers. Most importantly, the shuffle phase guarantees that all $(key, value)$ pairs with the same key will get their values sent to the same reducer. 

The attractiveness of MapReduce is mainly due to two advantages over traditional PRAM algorithms: (1) the scheduling -- often the most difficult part of a standard PRAM algorithm -- is left entirely to the system implementation of the shuffler, and (2) an algorithm designer only has to implement two relatively simple functions per round, the mapper and the reducer. 

To capture the real world restrictions of MapReduce, Karloff et al. \cite{KSV10} introduced constraints to the mapper, the reducer, and the total memory used in the shuffle phase. In our particular setting, any MapReduce algorithm must satisfy the following two properties, dependent on the input size $N$:
\begin{itemize}
\item Both the mapper and reducer must use at most $O(N^{1-\delta})$ memory.
\item The total amount of data output by the mapper is at most $O(N^{2-2\delta})$.
\end{itemize}

Since many PRAM algorithms take $O(\log n)$ rounds to simulate, we seek MapReduce algorithms where the number of rounds is a constant.}


\paragraph{Related models.}
There have been a series of refinements to the original MRC model of Karloff et al.~\cite{KSV10}, the most prominent of which is the massive parallel computation (MPC) model developed by Beame et al.~\cite{BKS13}. The main difference with the MPC model is that the space requirements are more stringent than in MRC. Given an input of size $N$ and $M$ machines, each machine is only allowed to have at most $S = O(N/M)$ space. In each round, each machine can send $O(S)$ words to other machines. The majority of our algorithms apply in the MPC model as well as the MRC model. The only exceptions are our set cover and $b$-matching algorithms, where our space bound depends on structural parameters of the input.

\comment{
\subsection{The local ratio method}
The local ratio method was introduced by Bar-Yehuda and Even as a method for finding a low cost vertex cover of a graph \cite{BE85}.
Since then, it has found numerous applications; see \cite{BBFR04} for an excellent survey.

The intuition behind the local ratio method is as follows.
Suppose we wish to find an $x \in \cC \subseteq \R^n$ that minimizes the quantity $w^{\top} x$ where $w \in \R^n$ is some fixed vector.
In the local ratio method, we try to find a decomposition of $w$, say $w = w_1 + w_2$, such that both $w_1^\top x$ and $w_2^\top x$ can be approximately minimized by a single $x$. That is, assuming that $x_1^*$ and $x_2^*$ are the minima to $w_1$ and $w_2$ respectively, we find a single $x$ for which $w_i^\top x \leq \alpha w_i^\top x_i^*$, where $\alpha \geq 1$ is the approximation ratio. It is often the case that $w_2$ is chosen based on structural properties of the problem, exploiting locality in some fashion. For example, in our local ratio algorithm for weighted vertex cover, the weight function $w_2$ is determined by choosing an arbitrary edge from the input graph and making $w_2$ non-zero only on the endpoints of the edge. Moreover, a desirable property is that a large fraction of weight function $w_1$ can be ignored, while maintaining good approximation properties on $w_2$. Repeating the decomposition on $w_1$ allows us to eventually reach the trivial case $w_1 = 0$ in a small number of iterations. In our applications, our choice of $w_2$ is drawn i.i.d. from a class of possible decompositions.

We now give two examples of this framework which form the basis of our MapReduce algorithms for minimum vertex cover and maximum matching. Sequential and streaming variants of these local ratio algorithms appear in \cite{BBFR04,PS17}. However these algorithms cannot be used as is, due to the restrictive memory and round complexity requirements of the MapReduce model. Although there are fairly standard alternative proofs of these algorithms, we give a unified approach using LP duality for the sake of presentation.
}

\subsection{Organization}
In Sections~\ref{sec:fSC} and \ref{sec:MIS}, we develop our ``randomized local ratio'' and ``hungry-greedy'' techniques through the weighted set cover and maximal independent set problems respectively. We then apply these techniques to specific problems in the subsequent sections. Our randomized local ratio technique is applied to maximum weight matching (\Section{matching}) and $b$-matching (\Appendix{bmatching}), and our hungry-greedy technique is applied to maximal clique (\Appendix{maxclique}) as well as an alternate approximation of weighted set cover (\Section{logDeltaSC}). Algorithms for vertex and edge colouring that may be of independent interest are given in \Section{colouring}.

\section{$f$-approximation for weighted set cover}
\SectionName{fSC}

For notational convenience, let $[n] = \set{1,\ldots,n}$, $S(X) = \bigcup_{i \in X} S_i$
and $w(X) = \sum_{i \in X} w_i$ for all $X \subseteq [n]$.
%
In the weighted set cover problem, we are given $n$ sets $S_1, \ldots, S_n \subseteq [m]$ with weights $w_1, \ldots, w_n \in \R_{> 0}$.
The goal is to find a subset $X$ such that
$w(X)$
is minimal amongst all $X$ with $S(X) = [m]$.

Let $OPT$ denote a fixed set $X$ achieving the minimum.
The frequency of element $j \in U$ is defined to be $\card{ \setst{ i }{ j \in S_i } }$.
Let $f$ denote the maximum frequency of any element.

\begin{theorem}[Bar-Yehuda and Even~\cite{BE85}]
	\label{thm:lr_sc}
	The following algorithm gives an $f$-approximation to weighted set cover.
\end{theorem}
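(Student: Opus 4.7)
The plan is to prove this via the classical local ratio method, which naturally gives the $f$-approximation. The algorithm in question is presumably the recursive local ratio procedure: while some element is uncovered, pick any uncovered element $j$, let $I_j = \setst{i}{j \in S_i}$ (so $|I_j| \le f$), let $\delta = \min_{i \in I_j} w_i$, subtract $\delta$ from each $w_i$ with $i \in I_j$, and recurse; return the set $X = \setst{i}{w_i = 0 \text{ at termination}}$.

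First I would establish the key ``local ratio lemma'': if $w = w_1 + w_2$ (with $w_1, w_2 \ge 0$), and the same feasible $X$ is an $\alpha$-approximate solution for the instance with weights $w_1$ and for the instance with weights $w_2$, then $X$ is also $\alpha$-approximate for $w$. This is immediate from
\[
w(X) = w_1(X) + w_2(X) \le \alpha\, w_1(OPT) + \alpha\, w_2(OPT) = \alpha\, w(OPT),
\]
where the inequality uses that $OPT$ is feasible for both sub-instances, so its cost dominates the optimum for each.

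Next, I would argue correctness of the algorithm by induction on the number of strictly positive weights. The base case is when every element is already covered by $\setst{i}{w_i=0}$; here $OPT$ must also have total (residual) weight zero, so any such $X$ is trivially $f$-approximate. For the inductive step, decompose the current weight vector $w$ as $w_1 + w_2$, where $w_2$ is $\delta$ on the coordinates in $I_j$ and $0$ elsewhere, and $w_1 = w - w_2 \ge 0$. By construction, the recursive call operates on the instance with weights $w_1$, which has strictly fewer positive entries (the minimizer in $I_j$ is zeroed out), so by induction the returned $X$ is $f$-approximate for $w_1$. For $w_2$, observe that any feasible cover must intersect $I_j$ (in particular $OPT$ does), so $w_2(OPT) \ge \delta$, while the returned cover satisfies $w_2(X) \le |I_j|\,\delta \le f\,\delta$; hence $X$ is $f$-approximate for $w_2$ as well. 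Applying the local ratio lemma with $\alpha = f$ completes the induction.

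The main subtlety — really the only non-routine point — is verifying that $X$ is genuinely a feasible cover at the end. This follows because in each recursive step the minimizer in $I_j$ gets weight exactly $0$, hence joins $X$ and covers $j$; combined with the termination condition (recurse only if some element is still uncovered) this guarantees $S(X) = [m]$. Everything else is a straightforward two-step decomposition, and the charge $w_2(X) \le f\,\delta$ is exactly where the factor $f$ (rather than the standard ``$2$'' of vertex cover) enters.
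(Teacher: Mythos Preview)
Your proof is correct and is the standard local ratio argument. The paper does not actually include a proof of this theorem: it is stated as a known result attributed to Bar-Yehuda and Even~\cite{BE85}, and no proof appears in the text. So there is nothing to compare your argument against, but your decomposition-and-induction approach is exactly the one associated with~\cite{BE85,BBFR04}, which the paper cites; an equally standard alternative is the LP-duality proof (raise the dual variable $y_j$ by $\epsilon$ for the selected element $j$, observe the dual stays feasible, and charge each set in $X$ against at most $f$ tight dual constraints), which yields the same factor with a slightly different bookkeeping.
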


\begin{tcolorbox}[title=Sequential local ratio algorithm for minimum weight set cover]
	Arbitrarily select an element $j \in U$ such that the minimum weight of all sets containing
    $j$ is strictly positive, then reduce all those weights by that minimum value.
    Remove all sets with zero weight, and add them to the cover.
    Repeat so long as uncovered elements remain.
\end{tcolorbox}

More explicitly, the weight reduction step works as follows.
If element $j \in U$ was selected, then we
compute $\epsilon = \min_{i \in [n] \::\: j \in S_i} w_i$
then perform the update $w_i \gets w_i - \epsilon$
for all $i$ with $j \in S_i$.

\comment{
\begin{proof}
	The output of the algorithm gives a valid set cover.
    To see this, suppose that an element was not covered.
    Then all sets containing this element must have positive weight
    (since all sets with zero weight are part of the cover).
    This is impossible, as the algorithm does not terminate so long as there are elements 
    for which all sets containing it have positive weight.

    \textbf{TODO}
    \comment{
	We now prove that the set cover that is output by the algorithm is indeed an $f$-approximation.
	The linear program relaxation for minimum weight set cover is
    \begin{equation}
      \label{eqn:vc_p}
      \min \sum_{v \in V} w_v x_v \quad \text{s.t.} \quad x_u + x_v \geq 1 \,\, \forall (u,v) \in E, \quad x \geq 0.
   \end{equation}
   Taking the dual of \eqref{eqn:vc_p} gives
   \begin{equation}
      \label{eqn:vc_d}
      \max \sum_{e \in E} y_e \quad \text{s.t.} \quad \sum_{e \ni v} y_e \leq w_v \,\, \forall v \in V, \quad y \geq 0.
   \end{equation}
   For the proof, we can modify the basic algorithm to construct a feasible dual solution as follows.
   We initialize $y$ to be the $0$ vector.
   Consider the $t$-th iteration of the algorithm and suppose that we select the edge $e = (u,v)$.
   Let $w'_u, w'_v$ denote the current weights of the vertices $u, v$, respectively.
   Then we will increment $y_e$ by $\min\{w'_u, w'_v\}$ and decrement both $w'_u, w'_v$ by the same amount
   ($y_e'$ does not change for $e' \neq e$).
   To see that $y$ is feasible, we claim that at any iteration $t$ we have $w'_v + \sum_{e \ni v} y_e = w_v$.
   Indeed, this is true because whenever we select an edge $e$ incident to $v$, increment $w'_v$ and $y_e$ by the same amount.
   Finally, let $U$ be the set of vertices with weight $0$ at the end of the algorithm.
   Then $U$ is a vertex cover, as mentioned above, and since $y$ is feasible we have
   \[
      2\cdot \OPT \geq 2 \sum_{e \in E} y_e = \sum_{v \in V} \sum_{e \ni v} y_e \geq \sum_{v \in U} \sum_{e \ni v} y_e = \sum_{v \in U} w_v.
   \]
   To conclude, we have that $U$ is a 2-approximate vertex cover to the minimum weight vertex cover problem.
   }
\end{proof}
}

\subsection{Randomized local ratio}
\SectionName{fSCRLR}


The local ratio method described above is not terribly parallelizable, as it processes elements sequentially.
Nor is it terribly space-efficient, as it might require processing every element in $[m]$. 
However, it does have the virtue that elements can be processed in a fairly arbitrary order.
Our randomized local ratio technique takes advantage of that flexibility and combines the local ratio algorithm with random sampling. There are essentially two components to our randomized local ratio technique: (1) we can cover a large fraction of the remaining sets with just a random sample of elements, and (2) this random sample will be comparable in weight to the optimal solution since the processing order of local ratio is fairly arbitrary. Intuitively, property (1) ensures that the algorithm runs in a few number of rounds, and property (2) ensures a good approximation ratio. These two ideas will come up frequently in the algorithms we present in this paper. 

In \Algorithm{minsc}, the size of each sample $U'$ is $O(\eta)$ w.h.p., and $\eta$ will be taken to be $n^{1+\mem}$, the space available on each machine.

\begin{algorithm}
    \caption{\small
    	An $f$-approximation for minimum weight set cover.
    	The lines highlighted in blue are run sequentially on a central machine, 
    	and all other lines are run in parallel across all machines.
    }
    \label{alg:minsc}
    \begin{algorithmic}[1] 
        \Procedure{ApproxSC}{$S_1,\ldots,S_n \subseteq U = [m]$}
        	\State $U_1 \gets U$, $r \gets 1$
            \While{$U_r \neq \emptyset$}
                \State $\rhd$: $U_r$ is the set of all $j \in [m]$ such that $\min_{i \::\: j \in S_i} w_i > 0$
            	\State \label{alg:minscsample}
                    Compute $U'$ by sampling each element of $U_r$
                    independently with probability $p = \min(1, \frac{2\eta}{|U_r|})$ 
                 \State \textbf{if $|U'| > 6\eta$ then fail}
				\CSTATET Run the local ratio algorithm for set cover on the sets
                    $S_1 \intersect U', \ldots, S_n \intersect U'$
                    \label{alg:minsclocalratio}
                \CSTATE Let $C \subseteq [n]$ be the indices of all sets with zero weight
                	\label{alg:minscC}
                \State $U_{r+1} \gets U_r \setminus S(C)$
                    \label{alg:minscresidual}
                \State $r \gets r + 1$
            \EndWhile
            \State \textbf{return} the indices $C$ of all sets with zero weight 
        \EndProcedure
    \end{algorithmic}
\end{algorithm}

Our analysis uses the following useful fact, relating to the filtering technique of Lattanzi et al.~\cite{LMSV11}.

\begin{lemma}
	\LemmaName{filterSC}
	Let $U_{r+1}$ be the set computed on line \Cref{alg:minscresidual} of \Cref{alg:minsc}.
    If $p=1$ then $U_{r+1}=\emptyset$,
    otherwise $\card{U_{r+1}} < 2n/p$ with probability at least $1 - e^{-n}$.
\end{lemma}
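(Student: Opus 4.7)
The plan is to dispatch the $p = 1$ case directly and handle $p < 1$ via a union bound over the at most $2^n$ possible outputs of the local-ratio subroutine on line \Cref{alg:minsclocalratio}.

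First, when $p = 1$ we have $U' = U_r$, so the local ratio subroutine returns a valid cover of $U_r$, giving $S(C) \supseteq U_r$ and $U_{r+1} = \emptyset$. The rest of the argument treats $p < 1$.

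The key structural fact I would use is that for any realization of $U'$, the subroutine produces $C \subseteq [n]$ satisfying $U' \subseteq S(C)$, because local ratio always outputs a valid cover of its input (Theorem~\ref{thm:lr_sc}). For any fixed $C^* \subseteq [n]$, this implies
\[ \{C = C^*\} \;\subseteq\; \{U' \cap B(C^*) = \emptyset\}, \qquad \text{where } B(C^*) := U_r \setminus S(C^*). \]
Since each element of $U_r$ is sampled into $U'$ independently with probability $p$, $\Pr[U' \cap B(C^*) = \emptyset] = (1-p)^{|B(C^*)|}$. Moreover, on the event $\{C = C^*\}$ we have $U_{r+1} = B(C^*)$, so further restricting to $\{|U_{r+1}| \geq 2n/p\}$ forces $|B(C^*)| \geq 2n/p$, yielding
\[ \Pr\big[C = C^* \text{ and } |U_{r+1}| \geq 2n/p\big] \;\leq\; (1-p)^{2n/p} \;\leq\; e^{-2n}, \]
where the last step uses $1-p \leq e^{-p}$.

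Finally, I would take a union bound over the at most $2^n$ subsets $C^* \subseteq [n]$:
\[ \Pr\big[|U_{r+1}| \geq 2n/p\big] \;\leq\; 2^n \cdot e^{-2n} \;\leq\; e^{-n}, \]
since $2 - \ln 2 > 1$.

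The one point to verify carefully is the structural implication $\{C = C^*\} \subseteq \{U' \cap B(C^*) = \emptyset\}$, which is the crux of the argument. Pleasingly, this is the \emph{main obstacle only in appearance}: it requires no fine analysis of the local-ratio dynamics and follows immediately from the fact that the subroutine's output always covers its input. Everything else is a routine application of $1-p \leq e^{-p}$ and a union bound.
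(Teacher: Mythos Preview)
Your proof is correct and essentially identical to the paper's: both observe that the local-ratio output $C$ always satisfies $U' \cap (U_r \setminus S(C)) = \emptyset$, bound the probability that any fixed ``large'' $C^* \subseteq [n]$ (one with $|U_r \setminus S(C^*)| \geq 2n/p$) could arise by $(1-p)^{2n/p} \leq e^{-2n}$, and then union-bound over the at most $2^n$ such subsets. The only cosmetic difference is that you partition by the realized value of $C$ whereas the paper phrases it as a union over all large $X$, but the underlying argument is the same.
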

\begin{proof}
    If $p=1$ then $U'=U_r$,
    so the local ratio method produces $C$ covering all of $U_r$,
    and so $U_{r+1} = \emptyset$. So assume $p<1$.

    For each $X \subseteq [n]$,
    let $\overline{S(X)}$ be the elements left uncovered by $X$.
    Let $\cE_X$ be the event that $U' \intersect \overline{S(X)} = \emptyset$.
    Say that $X$ is large if $\card{\overline{S(X)}} \geq 2n/p =: \tau$.
    For large $X$, $\Pr[\cE_X] \leq (1-p)^\tau \leq e^{-p \tau} = e^{-2n}$.
    By a union bound, the probability that all large $\cE_X$ fail to occur is at least $1-e^{-n}$.
    Since the local ratio method produces a set $C$ for which
    $U' \intersect \overline{S(C)} = \emptyset$,
    with high probability it must be that $|C|$ is not large.
\end{proof}

\begin{theorem}
    \TheoremName{SCf}
	Suppose that $m \leq n^{1+c}$.
    Let $\eta = n^{1+\mem}$, for any $\mem>0$.
    Then \Cref{alg:minsc} terminates within $\ceil{c / \mem}$ iterations
    and returns an $f$-approximate set cover, w.h.p.
\end{theorem}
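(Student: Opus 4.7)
The plan is to prove the two claims of \Cref{thm:SCf} separately: the bound on the number of iterations, and the $f$-approximation guarantee. The approximation guarantee will reduce to the sequential local ratio analysis of \Cref{thm:lr_sc}, while the round bound will follow by applying \Cref{lem:filterSC} once per round.

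For the round count, I would first argue that $|U_r|$ shrinks by a factor of $n^{\mem}$ each round, as long as the sampling probability $p_r$ is strictly less than $1$. Concretely, when $p_r = 2\eta/|U_r| < 1$, \Cref{lem:filterSC} gives $|U_{r+1}| < 2n/p_r = n|U_r|/\eta = |U_r| \cdot n^{-\mem}$, so by induction from $|U_1| = m \leq n^{1+c}$ we obtain $|U_r| \leq n^{1+c-(r-1)\mem}$. Once $r \geq c/\mem$, this bound drops to $|U_r| \leq n^{1+\mem} \leq 2\eta$, forcing $p_r = 1$ and hence $U_{r+1} = \emptyset$ by the $p=1$ branch of \Cref{lem:filterSC}, terminating the loop. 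A union bound over the $\lceil c/\mem \rceil$ rounds absorbs both the failure event of \Cref{lem:filterSC} (probability $e^{-n}$ per round) and the event that the sample $U'$ exceeds $6\eta$; the latter is a standard Chernoff bound since $\mathbb{E}[|U'|] = p_r|U_r| \leq 2\eta$.

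For the $f$-approximation guarantee, the key observation is that \Cref{alg:minsc} is nothing more than \Cref{thm:lr_sc} executed with a particular ordering of the element processing. In each iteration, the inner call on line~\ref{alg:minsclocalratio} processes the elements of $U'_r$ by the sequential local ratio rule: since $U'_r \subseteq U_r$ consists exactly of uncovered elements with strictly positive minimum set-weight, and since the reduction value $\epsilon = \min_{i : j \in S_i} w_i$ is unaffected by passing to $S_i \cap U'_r$, the per-round execution performs precisely the weight updates that sequential local ratio would perform on the same elements in the same order. Concatenating these executions across $r = 1, 2, \ldots$ yields a valid sequential run of local ratio on the global instance, with ordering dictated by $U'_1, U'_2, \ldots$. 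Because the standard local ratio decomposition $w = w_1 + w_2$ and its inductive $f$-approximation bound are insensitive to the order in which uncovered elements are selected, \Cref{thm:lr_sc} applies verbatim and certifies that the returned $C$ is an $f$-approximate cover. Feasibility is immediate from the loop exit condition: $U_r = \emptyset$ means every element of $[m]$ has a zero-weight set in $C$.

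The main obstacle I anticipate is not arithmetic but conceptual: cleanly articulating that the ``restricted'' per-round executions on samples $U'_r$ stitch together into a single legal sequential local ratio execution on the full ground set, so that the order-independent analysis of \Cref{thm:lr_sc} can be invoked as a black box. Once this reduction is in place, the remaining content of the proof is just the geometric shrinkage calculation for the round bound and the Chernoff/union-bound bookkeeping for the high-probability statement.
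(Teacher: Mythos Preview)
Your proposal is correct and follows essentially the same approach as the paper. Both argue correctness by observing that \Cref{alg:minsc} is just the sequential local ratio algorithm with a particular (random) element ordering, so \Cref{thm:lr_sc} applies directly; and both derive the round bound from the geometric shrinkage $|U_{r+1}| \leq |U_r|/n^{\mem}$ given by \Cref{lem:filterSC}, together with a Chernoff bound on $|U'|$ and a union bound over the $\lceil c/\mem\rceil$ rounds.
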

\begin{proof}
    \textit{Correctness:}
        Observe that the sequential local ratio algorithm for set cover
        can pick elements in an arbitrary order.
        \Algorithm{minsc} is an instantiation of that algorithm
        that uses random sampling to partially determine the order in which elements are picked.
        It immediately follows that \Algorithm{minsc} outputs an $f$-approximation to
        the minimum set cover, assuming it does not fail.
        By a standard Chernoff bound, an iteration fails with probability at most $\exp(-n^{1+\mem}) \leq \exp(-n)$.

	\textit{Efficiency:}
    By \Lemma{filterSC}, while $p<1$ we have
    $|U_{r+1}| \leq 2n/p = 2 n |U_r| / \eta = 2 |U_r| / n^{\mem} \leq 2 n^{1 + c - r \mem}$, whp.
    By iteration $r=\ceil{ c / \mem }-1$, we will have $\card{U_r} \leq 2 n^{1+c} = 2 \eta$ and so $p=1$.
    After one more iteration the algorithm will terminate, again by \Lemma{filterSC}.
\end{proof}

\subsection{MapReduce implementation}

\begin{theorem}
\TheoremName{fSCMR}
There is a MapReduce algorithm that computes an $f$-approximation for the minimum weight set cover
where each machine has memory $O(f \cdot n^{1+\mem})$.
The number of rounds is $O(c/\mem)$ in the case $f=2$ (i.e., vertex cover)
and $O((c/\mem)^2)$ in the case $f>2$.
\end{theorem}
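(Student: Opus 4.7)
The plan is to show that \Algorithm{minsc} can be implemented in the MapReduce model with the claimed round complexity, using \Theorem{SCf} for the $f$-approximation guarantee and for bounding the number of outer iterations by $\lceil c/\mem \rceil$. Thus the task reduces to (i) bounding the per-iteration MapReduce round cost, and (ii) verifying the memory bound $O(f \cdot n^{1+\mem})$ per machine.

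A single iteration can be carried out by a short sequence of standard MapReduce primitives. The sampling on line~\ref{alg:minscsample} is embarrassingly parallel: each machine independently samples the (set, element) incidence pairs that it stores, using the common value $p = \min(1, 2\eta/|U_r|)$, which depends only on the single integer $|U_r|$ and can be broadcast at the end of the previous iteration in $O(1)$ rounds. The sampled pairs are then routed to one designated machine via a standard gather. Conditioned on the fail-check succeeding (i.e., $|U'|\le 6\eta$), the central machine receives at most $f \cdot |U'| = O(f \cdot n^{1+\mem})$ pairs, which fits within its memory; the check succeeds with probability at least $1 - \exp(-\Omega(n))$ by the same Chernoff bound used in \Theorem{SCf}. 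The central machine then runs the sequential local ratio subroutine entirely in memory to produce the set $C$.

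The key remaining step is the broadcast-and-filter: $C$ must be propagated back so that each machine can determine which of its stored elements lie in $U_{r+1}$. For vertex cover ($f=2$), each stored edge has only two set-memberships, so filtering is a purely local endpoint-check; because $|C| \le n \le n^{1+\mem}$ fits comfortably in a single machine's memory, $C$ can be distributed to all $O(n^{c-\mem})$ machines in $O(1)$ MapReduce rounds via a standard two-round all-to-all replication pattern (first split $C$ into $K=n^{c-\mem}$ chunks and scatter, then each chunk-holder broadcasts to all others). Combined with the $\lceil c/\mem\rceil$ outer iterations from \Theorem{SCf}, this yields the $O(c/\mem)$ bound. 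For general $f>2$, the analogous broadcast-and-filter is more delicate: the sets indexed by $C$ may themselves be large and scattered across many machines, and propagating $C$ in a way that respects the bipartite incidence structure takes $O(c/\mem)$ rounds per iteration (e.g., via tree-style replication whose fanout per round is limited by the per-machine bandwidth), yielding the total $O((c/\mem)^2)$.

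The main obstacle I anticipate is the precise accounting of the broadcast-and-filter step for general $f$: arguing that the per-machine communication and memory bounds are respected while still completing the distribution of $C$ within $O(c/\mem)$ rounds, and handling the asymmetry between the ``small'' objects ($C$ itself, of size at most $n$) and the ``large'' objects (the sets $S_i$ with $i\in C$, which may collectively cover many elements spread across many machines). A final union bound over the $O((c/\mem)^2)$ iterations, using the per-iteration failure probability of $\exp(-\Omega(n))$, establishes the high-probability guarantee.
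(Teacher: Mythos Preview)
Your proposal is correct. For $f>2$ it matches the paper: both use a broadcast tree of fan-out $n^{\mem}$ and depth $O(c/\mem)$ to push $C$ to all $n^{c-\mem}$ machines, costing $O(c/\mem)$ MapReduce rounds per outer iteration of \Algorithm{minsc} and hence $O((c/\mem)^2)$ rounds in total.

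For $f=2$ your route differs from the paper's. You broadcast $C$ to every machine via a two-step scatter/all-to-all replication; the paper instead additionally stores each \emph{vertex} (together with its adjacency list) on a random machine, has the central machine send each vertex a single bit indicating membership in $C$, and then has each vertex forward that bit to all of its incident edges. Both schemes are valid $O(1)$-round implementations of the broadcast-and-filter step. One comment: your stated justification (``each stored edge has only two set-memberships, so filtering is a purely local endpoint-check'') is not really what distinguishes $f=2$ --- once $C$ is locally available, testing $T_j \cap C \neq \emptyset$ is a local check for every $f$. The genuine reason your direct two-round broadcast respects the per-machine output bound is that for simple graphs $m < n^2$, so $c<1$ and the number of machines $n^{c-\mem} < n \le n^{1+\mem}$; for general set cover $c$ can be arbitrarily large, the machine count can exceed the per-machine budget, and that is precisely why the $O(c/\mem)$-depth tree becomes necessary. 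Making this dependence on $c<1$ explicit would tighten your $f=2$ argument.
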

\begin{proof}
Instead of representing the input as the sets $S_1,\ldots,S_n$, we instead assume that
it is represented as the ``dual'' set system $T_1,\ldots,T_m$ where $T_j = \setst{ i }{ j \in S_i }$.
By definition of frequency, we have $\card{T_j} \leq f$.

Each element $j \in [m]$ will be assigned arbitrarily to one of the machines, with $n^{1+\mem}$ elements per machine, so $M:=m/n^{1+\mem}=n^{c-\mem}$ machines are required.
Element $j$ will store on its machine the set $T_j$ and a bit indicating if $j \in U_r$.
Thus, the space per machine is $O(f \cdot n^{1+\mem})$.

The main centralized step
is the local ratio computation (lines \ref{alg:minsclocalratio}-\ref{alg:minscC}).
The centralized machine receives as input the sets $T_j$ for all $j \in U'$.
As $\card{U'} \leq 6 \eta$, the input received by the central machine is $O(f \eta) = O(f n^{1+\mem})$ words.
After executing the local ratio algorithm, the centralized machine then computes $C$.

There are two key steps requiring parallel computation:
the sampling step (line \ref{alg:minscsample})
and the computation of $U_{r+1}$ (line \ref{alg:minscresidual}).
After the central machine computes $C$, it must send $C$ to all other machines.
Sending this directly could require $\card{C}\cdot M = \Omega(n^{1+c-\mem})$ bits of output,
which could exceed the machine's space of $O(n^{1+\mem})$.
Instead, we can form a broadcast tree over all machines with degree $n^{\mem}$ and depth $c/\mem$, allowing us to send $C$ to all machines in $O(c/\mem)$ MapReduce rounds.
Since each machine knows $C$, each element $j$ can determine if $j \in U_{r+1}$ by determining if $T_j \intersect C \neq \emptyset$.
The same broadcast tree can be used to compute $\card{U_{r+1}}$ and send that to all machines.
Since each element $j$ knows if $j \in U_{r}$ and knows $\card{U_{r}}$, it can independently perform the sampling step.

In the case $f=2$ (i.e., vertex cover), this can be improved. 
Each set $S_i$ (i.e., vertex) will also be assigned to one of the $M$ machines, randomly chosen.
By a Chernoff bound (using that $\card{S_i}\leq n$) the space required per machine is $O(f \cdot n^{1+\mem})$ w.h.p.
After computing $C$, the central machine sends a bit to each set $S_i$ indicating whether $i \in C$ or not.
Each set $S_i$ then forwards that bit to each element $j \in S_i$.
Thus $j$ can determine whether $j \in U_{r+1}$.
Each machine can send to a central location the number of edges on that machine that lie in $U_{r+1}$, so the central machine can compute $\card{U_{r+1}}$ and send it back to all machines.
%
\end{proof}

\section{Maximal independent set}
\SectionName{MIS}

As a warm-up to the ``hungry-greedy'' technique, we first present 
an algorithm to  find a maximal independent set (MIS) in a constant number of MapReduce rounds.
In order to clearly explain the concepts, this section presents a simple algorithm using $O(1/\mem^2)$ rounds. 
In \Appendix{improvedMIS}, we show how to further parallelize this algorithm so that it takes $O(c/\mem)$ rounds.

The algorithm, shown in \Algorithm{mis}, proceeds in phases where each phase takes $O(1/\mem)$ rounds. In phase $i \geq 1$, we reduce the maximum degree from $n^{1 - (i-1)\alpha}$ to $n^{1 - i\alpha}$.
We will choose $\alpha = \mem/2$ so that after $O(1/\mem)$ phases, the maximum degree will be at most $n^\mem$. Following this, we can finish the algorithm in one more round by placing the entire graph onto a central machine.

For a vertex $v$, we define $N(v)$ as the neighbours of $v$ and $N^+(v) = N(v) \cup \{v\}$.
For a set $I \subseteq V$, we define $N^+(I) = \cup_{v \in I} N^+(v)$ and $N_I(v) = N(v) \setminus N^+(I)$.
In other words, $N_I(v)$ is the set of neighbours of $v$ which are not in $I$ and not adjacent to a vertex in $I$.
Finally let $d_I(v) = \card{N_I(v)}$ if 
$v \notin N^+(I)$, otherwise $d_I(v)=0$.

\begin{algorithm}
    \caption{\small 
        A simple algorithm for maximal independent set.
    	The lines highlighted in blue are run sequentially on a central machine, 
    	and all other lines are run in parallel across all machines.
    }
    \label{alg:mis}
    \begin{algorithmic}[1] 
        \Procedure{MIS1}{$G = (V,E)$}
        	\State $I \gets \emptyset$ \Comment{$I$ is the current independent set}
            \For{$i = 1, \ldots, 1/\alpha$}
               \State $\rhd$: $d_I(v) \leq n^{1-(i-1)\alpha}$ for all $v \in V$
               \State $V_H \gets \{v : d_I(v) \geq n^{1-i\alpha}\}$ \Comment{Heavy vertices} \label{ln:mis_vh1}
            	\While{$|V_H| \geq n^{i\alpha}$}
                  \State Draw $n^{i\alpha}$ groups of $n^{\mem/2}$ vertices from $V_H$, say $\cX_1, \ldots, \cX_{n^{i\alpha}}$ \label{ln:mis_vertex_sampling}
                  \CFOR{$j = 1, \ldots, n^{i\alpha}$}
                     \CIF{$\exists\, v_{j} \in \cX_{j}$ such that $d_I(v_{j}) \geq n^{1 - i\alpha}$}
	                    	\CSTATE $I \gets I \cup \{v_j\}$
	                  \ENDIF
	               \ENDFOR
                  \State $V_H \gets \{v : d_I(v) \geq n^{1-i\alpha}\}$ \Comment{Update heavy vertices} \label{ln:mis_vh2}
	            \EndWhile
            \CSTATE Find maximal independent set in $V_H$ and add it to $I$ \Comment{$|V_H| < n^{i\alpha}$}
            \EndFor
            \State \textbf{return} $I$
        \EndProcedure
    \end{algorithmic}
\end{algorithm}

\begin{remark}
    In the algorithm, a vertex can 
    lose its label as a heavy vertex during the for loop.
    This is intentional as when we add a vertex to $I$, we want to make sure we make substantial progress.
    Moreover, since a vertex sends all its neighbours, it is easy to tell whether a vertex is heavy and to update $N^+(I)$.
\end{remark}
\begin{lemma}
   \label{lem:mis_progress1}
   Let $V_H$ be the set of heavy vertices at the beginning of the inner for loop (line 9) 
   and $V_H'$ be the set of heavy vertices after the for loop.
   Then $|V_H'| \leq |V_H| / n^{\mem/4}$ w.h.p.
\end{lemma}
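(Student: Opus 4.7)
The plan is to prove the lemma by first arguing that the for loop adds $(1-o(1))\,n^{i\alpha}$ vertices to $I$ with high probability, and then showing that these additions collectively demote enough of $V_H$ to force $|V_H'| \leq |V_H|/n^{\mem/4}$.

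For the first part, observe that a group $\cX_j$ fails to contribute a $v_j$ only if every one of its $n^{\mem/2}$ samples has already been demoted from heavy by some prior addition. The loop invariant $d_I(v) \leq n^{1-(i-1)\alpha}$ implies that each addition directly demotes at most $1 + n^{1-(i-1)\alpha}$ vertices, so assuming the still-heavy fraction of $V_H$ exceeds $1 - n^{-\mem/4}$ (otherwise the lemma already holds), a Chernoff bound on a uniform sample of $n^{\mem/2}$ elements from $V_H$ shows that each group succeeds except with probability $\exp(-\Omega(n^{\mem/4}))$. A union bound over the $n^{i\alpha}$ groups then gives $|I_{\text{added}}| = (1-o(1))\,n^{i\alpha}$ w.h.p.

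For the second part, fix $v \in V_H$ and observe that $v \in V_H'$ requires $v \notin N^+(I_{\text{added}})$. Each $v_j$ is essentially a uniform random still-heavy vertex of $V_H$: $\cX_j$ is itself a uniform sample from $V_H$, and if several of its members are still heavy then $v_j$ may be taken uniformly among them. Over $n^{i\alpha}$ such choices, the probability of missing $N^+(v) \cap V_H$ is at most $(1 - |N^+(v) \cap V_H|/|V_H|)^{\Omega(n^{i\alpha})}$. Summing over $v \in V_H$ and using a suitable lower bound on $\sum_v |N^+(v) \cap V_H|$ (see below) yields $\E[|V_H'|] \leq |V_H|/n^{\mem/4}$. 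A bounded-differences concentration inequality applied to the sequential choices $v_1, \ldots, v_{n^{i\alpha}}$, each of which shifts $|V_H'|$ by $O(n^{1-(i-1)\alpha})$, then lifts this to the desired high-probability statement.

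The main obstacle is the lower bound on $\sum_v |N^+(v) \cap V_H|$, i.e.\ showing that the average heavy vertex has many heavy neighbors. This is not immediate from the loop invariant, but it can be derived by a double-counting argument on fresh edges: the total fresh-degree of $V_H$ is at least $|V_H| \cdot n^{1-i\alpha}$, while the capacity of the non-heavy fresh vertices to absorb this mass is at most $(n - |V_H|) \cdot n^{1-(i-1)\alpha}$. When $|V_H|$ is not far below $n$, this forces a significant fraction of the fresh-degree of $V_H$ to be supported by internal $V_H$-$V_H$ edges, supplying the needed bound; at the other extreme, when $|V_H|$ is close to the while-loop threshold $n^{i\alpha}$, direct sampling alone covers nearly all of $V_H$ and the lemma follows with room to spare.
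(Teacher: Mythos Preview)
Your approach has a genuine gap in Part 2, and it stems from missing the key observation that makes the paper's proof short and clean.

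The paper does not track which specific heavy vertices get demoted. Instead it uses a purely volumetric argument: each time a vertex $v_j$ with $d_I(v_j)\ge n^{1-i\alpha}$ is added to $I$, at least $n^{1-i\alpha}$ \emph{fresh} vertices (those not already in $N^+(I)$) are removed. Hence at most $n^{i\alpha}$ such additions are possible before the whole vertex set is exhausted. Since there are exactly $n^{i\alpha}$ groups, and (as you yourself argue in Part~1) each group succeeds with probability $\ge 1-\exp(-n^{\mem/4})$ whenever the current heavy count exceeds $|V_H|/n^{\mem/4}$, a union bound finishes the proof: with high probability the heavy count must drop below $|V_H|/n^{\mem/4}$ at some point during the loop, and it can only decrease thereafter.

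Your Part~2, by contrast, tries to show that each fixed $v\in V_H$ is hit in $N^+(v)$ by some $v_j$. This forces you to lower-bound $|N^+(v)\cap V_H|$ for typical $v$, and the double-counting you propose does not deliver this except in corner cases. Concretely, take a bipartite instance where $V_H$ has size roughly $n^{i\alpha+\mem/2}$, every heavy vertex has all of its $\approx n^{1-i\alpha}$ fresh neighbours among the light vertices, and light vertices have degree up to $n^{1-(i-1)\alpha}$ (this is consistent with the loop invariant). Then $|N^+(v)\cap V_H|=1$ for every $v\in V_H$, your probability bound becomes $(1-1/|V_H|)^{\Theta(n^{i\alpha})}$, which is bounded away from zero, and the ``direct sampling'' fallback also fails since a fixed $v$ is sampled only $O(1)$ times in expectation. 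Neither branch of your case split covers this regime. The paper's argument handles it effortlessly: after enough additions the light side is emptied, all remaining heavy vertices have $d_I=0$, and $V_H'=\emptyset$.

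In short, your Part~1 already contains the only probabilistic ingredient needed; the missing piece is not a concentration argument on $|V_H'|$ but the one-line counting observation that $n^{i\alpha}$ successful additions remove all $n$ vertices.
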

\begin{proof}
	 Observe that the first group $\cX_1$ of vertices will definitely contain a heavy vertex
    so it will remove at least $n^{1 - i\alpha}$ vertices from the graph.\footnote{We will say $v$ is removed from the graph if it is added to $N^+(I)$.}
    Now suppose we are currently processing group $\cX_j$.
    If the number of heavy vertices is at most $|V_H| / n^{\mem/4}$ at this point then we are done. So suppose the number of heavy vertices is at least $|V_H| / n^{\mem/4}$.
    Since we sample the vertices uniformly at random,
    the group $\cX_j$ contains a heavy vertex with probability at least $1 - \left( 1 - 1/n^{\mem/4} \right)^{n^{\mem/2}} \geq 1 - \exp\left( -n^{\mem/4} \right)$.
    At this point, we add another heavy vertex to $I$ and remove it and all its neighbours.
    
    The above process can happen at most $n^{i\alpha}$ times before the number of heavy
    vertices is at most $|V_H|/n^{\mem/4}$.
    This is because if it happens $n^{i\alpha}$ times then 
    no vertices remain. 
    Hence, by taking a union bound over all groups, we have that $|V_H'| \leq |V_H| / n^{\mem/4}$ with probability at least $1 - n \cdot \exp\left( -n^{\mem/4} \right)$.
\end{proof}

\begin{theorem}
	There is a MapReduce algorithm to find a maximal independent set in $O(1/\mem^2)$ rounds and $O(n^{1+\mem})$ space per machine, w.h.p.
\end{theorem}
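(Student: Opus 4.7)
The plan is to decompose the analysis into three pieces: correctness, round complexity, and per-machine space. For \emph{correctness}, I would argue that at every point in the algorithm the set $I$ remains independent: the central machine processes the groups $\cX_1,\ldots,\cX_{n^{i\alpha}}$ sequentially, and a vertex $v_j$ is only added to $I$ if $d_I(v_j) \geq n^{1-i\alpha}$, which in particular means $v_j \notin N^+(I)$ and hence is non-adjacent to every previously chosen vertex. For \emph{maximality}, the invariant $d_I(v) \leq n^{1-(i-1)\alpha}$ at the start of phase $i$ is maintained because after the inner while loop exits (when $|V_H| < n^{i\alpha}$), we explicitly compute an MIS on the remaining heavy vertices $V_H$ and add it to $I$. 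Once this is done, every vertex not in $N^+(I)$ has degree at most $n^{1-i\alpha}$. After the last phase $i = 1/\alpha$, the threshold is $n^0 = 1$, so every surviving vertex has no uncovered neighbour and $I$ is maximal.

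For the \emph{round complexity}, the algorithm has $1/\alpha = 2/\mem$ outer phases. Within each phase, \Lemma{mis_progress1} tells us that one iteration of the inner while loop reduces $|V_H|$ by a factor of $n^{\mem/4}$ w.h.p. Since initially $|V_H| \leq n$ and we stop when $|V_H| < n^{i\alpha}$, the while loop terminates within $O(1/\mem)$ iterations. Each such iteration is implementable in $O(1)$ MapReduce rounds (sampling, shipping neighbourhoods to a central machine, centralized update of $I$, and a parallel recomputation of $V_H$), and the final MIS computation on the small $V_H$ is one more round. Multiplying, we obtain $O(1/\mem) \cdot O(1/\mem) = O(1/\mem^2)$ rounds in total, and a union bound over all iterations controls the failure probability.

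For the \emph{space bound}, the input edges are distributed across the machines using $O(n^{1+\mem})$ space per machine as in \Theorem{fSCMR}. The critical step is bounding the memory required by the central machine during one inner iteration: it receives the sampled vertices together with their current neighbourhoods. There are $n^{i\alpha} \cdot n^{\mem/2} = n^{i\alpha + \mem/2}$ sampled vertices, each of degree at most $n^{1-(i-1)\alpha}$ by the phase invariant, so the total edge count transmitted is at most $n^{1 + \alpha + \mem/2} = n^{1 + \mem}$ with $\alpha = \mem/2$. Similarly, the closing MIS step of phase $i$ receives at most $n^{i\alpha}\cdot n^{1-(i-1)\alpha} = n^{1+\alpha} \le n^{1+\mem}$ edges. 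Updating $V_H$ and sampling groups can be done in parallel: each machine locally marks which of its vertices are heavy and samples with the appropriate probability, then ships the sample to the central machine via a broadcast-style aggregation tree of depth $O(1/\mem)$, which is absorbed in the round count.

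The main subtlety I expect is the space accounting on the central machine in conjunction with the invariant. The natural attempt would be to give the central machine all neighbourhoods of sampled vertices at the \emph{current} degree, but a worst-case instance might have these degrees as large as $n^{1-(i-1)\alpha}$, so the exponent $\alpha + \mem/2$ must not exceed $\mem$; this forces the choice $\alpha = \mem/2$ (and hence $2/\mem$ phases), which is precisely what drives the $O(1/\mem^2)$ round count. Everything else is routine Chernoff/union-bound bookkeeping together with the broadcast-tree trick used in the proof of \Theorem{fSCMR}.
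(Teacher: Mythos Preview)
Your proposal is correct and follows essentially the same route as the paper's proof sketch: the identical space computation $n^{i\alpha}\cdot n^{\mem/2}\cdot n^{1-(i-1)\alpha}=n^{1+\alpha+\mem/2}=n^{1+\mem}$ forcing $\alpha=\mem/2$, the use of Lemma~\ref{lem:mis_progress1} to bound each phase's while loop by $O(1/\mem)$ iterations, and hence $O(1/\mem)\cdot O(1/\mem)=O(1/\mem^2)$ rounds overall. One small discrepancy worth noting: the paper does \emph{not} use a broadcast/aggregation tree here---since $c\le 1$, the number of machines $M=n^{c-\mem}\le n^{1+\mem}$ fits in one machine's memory, so the central machine can send a bit directly to every vertex (and receive one count from every machine) in $O(1)$ rounds; your $O(1/\mem)$-deep tree is therefore unnecessary, and if it were genuinely required inside every inner iteration it would push the total to $O(1/\mem^3)$, so the ``absorbed in the round count'' remark should be dropped.
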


\begin{proof}[Proof sketch]
As before, there are $M := n^{c-\mem}$ machines.
Each vertex and its adjacency list is assigned to one of the $M$ machines, randomly chosen.
By a Chernoff bound, the space required per machine is $O(n^{1+\mem})$ whp.
Each vertex $v$ will maintain its value of $d_I(v)$, allowing it to determine if $v \in V_H$.
Thus, the sampling step (line \ref{ln:mis_vertex_sampling}) can be performed in parallel.

A central machine maintains and updates the sets $I$ and $N(I)$.
It receives as input the sets of vertices $\cX_1,\ldots,\cX_{n^{i \alpha}}$ and their lists of alive neighbours, so the total input size is proportional to
$$
\sum_{j=1}^{n^{i \alpha}} \, \sum_{v \in \cX_j} d_I(v)
~\leq~ n^{i \alpha} \cdot n^{\mem/2} \cdot n^{1-(i-1)\alpha}
~=~ n^{1+\mem/2+\alpha}.
$$
After executing the for-loop, the central machine can use these lists of neighbours to also compute $N^+(I)$.

The next step requiring parallelization is line \ref{ln:mis_vh2}.
To execute this, the central machine sends a bit to each vertex $v$ indicating if $v \in N^+(I)$.
Then, every $v \not\in N^+(I)$ asks each neighbour $w \in N(v)$ if $w \in N^+(I)$.
The results of these queries allow $v$ to compute $d_I(v)$.

By Lemma~\ref{lem:mis_progress1}, the while loop takes $O(1/\mem)$ rounds, so the entire algorithm takes $O(1/\mem^2)$ rounds.
\end{proof}

\paragraph{Maximal clique.}
One might initially assume that an algorithm for maximal independent set immediately implies an algorithm for maximal clique.
But, as mentioned in \Section{intro}, it is not clear how to reduce maximal clique to maximal independent set in $O(m)$ space:
complementing the graph might require $\Omega(n^2)$ space.
Nevertheless, it is possible to adapt \Algorithm{mis} so that it will compute a maximal clique.
A description of the necessary modifications appears in \Appendix{maxclique}.

\section{$(1+\eps) \ln \Delta$-approximation for weighted set cover}
\SectionName{logDeltaSC}

\comment{
TODO: Berger et al.~have a really simple way to preprocess the input so that $w_{\max} / w_{\min} \leq mn / \eps$ while increasing
the approximation ratio by only a factor of $1+\eps$.
We can probably mention this somewhere..
}

For convenience, we reuse the same notation from \Section{fSC}. The standard greedy algorithm for weighted set cover is as follows.
We maintain a set $C$ of \emph{covered} elements (initially $C = \emptyset$).
In each iteration, we find the set $S_i$ which maximizes $|S_i \setminus C| / w_i$.
We add $S_i$ to our solution and add the elements of $S_i$ to $C$.
It is known \cite{C83} that this algorithm has approximation ratio $H_{\Delta}$ where $\Delta = \max_\ell |S_\ell|$ and $H_{k} = \sum_{i=1}^k \frac{1}{k}$.

Unfortunately, implementing this greedy algorithm 
is tricky in MapReduce but, following Kumar et al.~\cite{KMVV15}, we can implement the \textit{$\eps$-greedy algorithm}, which differs from the standard greedy method as follows.
Instead of choosing the set $S_i$ to maximize $|S_i \setminus C| / w_i$,
we choose $S_i$ such that $|S_i \setminus C| / w_i \geq \frac{1}{1+\eps} \max_{j} \{ |S_j \setminus C| / w_j \}$.
The standard dual fitting argument \cite{C83,V01} can be easily modified to prove that this gives a $(1+\eps) H_{\Delta}$-approximate solution.

Our algorithm for set cover is also inspired by some of the PRAM algorithms for set cover (see, for example, \cite{BRS94,RV98,BPT11,DBS17}).
Indeed, we use a similar bucketing approach as \cite{BRS94,RV98,BPT11,DBS17} which we now describe.
Let $L = \max_{\ell}\{|S_{\ell}| / w_{\ell} \}$.
We first consider only the ``bucket'' of sets that have a cost ratio of at least $L / (1+\eps)$ and continue to add sets from this bucket to the cover until the bucket is empty (after which we decrease $L$ by $1+\eps$ and repeat).
However, note that once we add a set from the bucket to the cover, some of the sets in the bucket may no longer have a sufficiently cost ratio to remain in the bucket; in this case, we need to remove them from the bucket.

It is shown in \cite{BPT11} (improving on \cite{BRS94}) that exhausting a bucket can be done in $O(\log^2 n)$ time in the PRAM model; this can be easily translated to a $O(\log^2(n) / (\mem \log(m)))$-round algorithm in the MapReduce model.
Our main contribution is to show that, in MapReduce, one can exhaust a bucket in $O(\log^2(n) / (\mem^2 \log^2(m)))$ rounds.
In particular, when $n = \poly(m)$ and $\mem$ is a constant, the number of rounds decreases from $O(\log n)$ to $O(1)$.

\comment{
It was shown by \cite{BRS94}, and later improved by \cite{BPT11}, that exhausting a bucket can be done in $O(\polylog(n))$ time in the PRAM model.

Unfortunately, implementing this greedy algorithm in MapReduce (or the CRCW PRAM model) is tricky.
However, we can use a bucketing technique where instead of choosing the set $S_i$ to maximize $|S_i \setminus C| / w_i$,
we choose $S_i$ such that $|S_i \setminus C| / w_i \geq \frac{1}{1+\eps} \max_{j} \{ |S_j \setminus C| / w_j \}$.
(Such an approach has been used in, for example, \cite{BRS94,RV98,BPT11,DBS17,KMVV15}.)
The standard dual fitting argument \cite{C83,V01} can be easily modified to prove that this gives a $(1+\eps) H_{\Delta}$-approximate solution.

Using an $\eps$-approximate greedy choice in each iteration instead of the optimal greedy choice allows one to obtain a fast parallel algorithm.
}

\comment{
Let us first give a light exposition of the parallel algorithm for approximate minimum set cover.
We will use $C$ to denote the current elements 
in the partial solution constructed so far.
Initially, $C = \emptyset$.
The algorithm begins by setting $L = \max_\ell \{ |S_{\ell}| / w_{\ell} \}$. Let us say a set $S_\ell$ is \emph{profitable} if $|S_{\ell} \setminus C| / w_{\ell} \geq L/(1+\eps)$.
In each iteration, it partitions the profitable sets into $1/\alpha$ groups where group $i$ consists of sets whose cardinality is in $[m^{1-(i+1)\alpha}, m^{1-i\alpha})$.
It then randomly samples $m^{1-(i+1)\alpha}$ collections of $m^{\mem / 2}$ sets from group $i$.
Starting from group $1$, for each collection the algorithm then finds the set $S_{\ell}$ in the collection maximizing $|S_{\ell} \setminus C| / w_{\ell}$,
adds it to the current partial solution, and updates $C$.
The crucial observation is that (modulo a very small probability) if the number of profitable sets in group $i$ has not decreased significantly
then the chosen $S_{\ell}$ is profitable.
It will also turn out that we can decrease a certain potential function by a large amount in each iteration so within a few iterations, there will be no more profitable sets.
If we have not covered the universe at this point then we decrease $L$ by $(1+\eps)$ and repeat this process.
}

The high level idea of our parallel algorithm for approximate minimum set cover is as follows.
Let $C$ denote the current elements in the partial solution constructed so far (initially $C = \emptyset$).
At this point, the algorithm considers only considers sets, $S_{\ell}$, which are almost optimal, i.e.~$|S_{\ell} \setminus C| / w_{\ell} \geq L(1+\eps)$,
where initially, $L = \max_\ell \{ |S_{\ell}| / w_{\ell} \}$ but is decreased as the algorithm runs.
We then partition these sets into $1/\alpha$ groups where group $i$ consists of
sets whose cardinatliy is in $[m^{1-(i+1)\alpha}, m^{1-i\alpha})$.
From each group $i$, we then sample $m^{1-(i+1)\alpha}$ collections of $m^{\mem / 2}$ sets.
Starting from the group $1$, we will see that either every collection contains an almost optimal set or we have made a large amount of progress.
It will also turn out that we can decrease a certain potential function by a large amount in each iteration so within a few iterations, there will be no more profitable sets.
If we have not covered the universe at this point then we decrease $L$ by $(1+\eps)$ and repeat this process.

We now begin with a more formal treatment of the algorithm for minimum weight set cover.
We will assume that each machine has $O(m^{1+\mem} \log(n))$ space and $\sum_{i=1}^n |S_i| \geq m^{1+c}$ so that the memory is sublinear in the input size.
The pseudocode for the algorithm is given in Algorithm~\ref{alg:set_cover}.

\begin{algorithm}[t]
   \caption{\small
   		A $(1+\eps)\ln \Delta$-approximation for minimum weight set cover.
        Blue lines are centralized.}
   \label{alg:set_cover}
   \begin{algorithmic}[1] 
      \Procedure{ApproxSC}{$S_1,\ldots,S_n \subseteq U = [m], ~w_1,\ldots,w_n \in \bR_{>0}$}
         \State $L \gets \max_\ell \left\{ \frac{|S_\ell|}{w_\ell} \right\}$
         \State $\cS \gets \emptyset$
         \State $C \gets \emptyset$ \Comment{$C \subseteq [m]$ maintains set of \emph{covered} elements}
         \While{$C \neq [m]$}
            \State $k \gets 1$
            \While{$\exists \ell$ s.t.~$|S_\ell \setminus C_k|/w_\ell \geq L/(1+\eps)$} \label{ln:setalg_inner}
               \State $C_k \gets C$ \Comment{Maintain temporary set of covered elements; used only for proof.}
               \State Let $\cS_{k,i} = \{S_\ell : m^{1 - i\alpha} \leq |S_\ell \setminus C_k| < m^{1 - (i-1)\alpha} \text{ and } |S_{\ell} \setminus C_k| / w_k \geq L/(1+\eps) \}$
               \For{$i = 1, \ldots, 1/\alpha$}
                  \For{$j = 1, \ldots, 2m^{(i+1)\alpha}$}
                     \State Include each $S_\ell \in \cS_{k,i}$ into group $\cX_{i,j}$ with probability $\min\{1, m^{\mem/2} / |\cS_{k,i}|\}$
                  \EndFor
               \EndFor
               \If{$|\cX_{i,j}| \leq 4m^{\mem/2}$}
                  \State Send each $\cX_{i,j}$ to the central machine.
               \Else
                  \Comment{At least one $\cX_{i,j}$ is too big so fail and continue to next iteration.}
                  \State $k \gets k + 1$ \label{ln:alg_sc_fail}
                  \State \textbf{continue}
               \EndIf
               \CFOR{$i = 1, \ldots, 1/\alpha$}
                  \CFOR{$j = 1, \ldots, 2m^{(i+1)\alpha}$}
                     \CIFT{$\exists S_{\ell} \in \cX_{i,j}$ such that $|S_{\ell} \setminus C| \geq m^{1 - (i+1) \alpha} / 2$}
                        \CSTATE $\cS \gets \cS \cup \{S_{\ell}\}$
                        \CSTATE $C \gets C \cup S_{\ell}$
                     \ENDIF
                  \ENDFOR
               \ENDFOR
               \State $k \gets k + 1$
            \EndWhile
            \State $L \gets L / (1+\eps)$
         \EndWhile
         \State \textbf{return} $\cS$
      \EndProcedure
   \end{algorithmic}
\end{algorithm}

In \Section{logDeltaSCMR}, we will describe the MapReduce implementation of this algorithm.
Of particular note is that the second while loop can be implemented in a small number of MapReduce rounds.
Our goal in this section is to show that the number of times the while loop is executed is small.

To that end, let us fix $L$ and analyze the number of iterations of the second while loop starting at Line~\ref{ln:setalg_inner}.
To do this, it will be convenient to introduce the potential function
\[
   \Phi_k \coloneqq \sum_{\ell \in [n] : \frac{|S_\ell \setminus C_k|}{w_\ell} \geq \frac{L}{1+\eps} } |S_\ell \setminus C_k|.
\]
Observe that we have the trivial upper bound $\Phi_k \leq nm$.
Moreover, $\Phi_k = 0$ if and only if we finished with the second while loop by iteration $k$.

Using a straightforward Chernoff bound (Theorem~\ref{thm:chernoff}), we can show that each iteration has a small proabilility of failure,
i.e.~in each iteration, we make it into line~\ref{ln:alg_sc_fail} with very small probability.
\begin{claim}
   \label{claim:sc_failure}
   Fix an iteration $k$.
   We have $|\cX_{i,j}| \leq 4m^{\mem/2}$ for all $i, j$ with probability at least $1 - \frac{m}{\alpha} \exp\left( -m^{\mem/2} \right)$.
\end{claim}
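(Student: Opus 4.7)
The plan is a routine multiplicative Chernoff bound followed by a union bound. The key observation is that each $\cX_{i,j}$ is a sum of independent Bernoulli random variables whose expectation has been deliberately set to $m^{\mem/2}$, which is large enough for strong concentration, and the constant $4$ in the claim gives us plenty of multiplicative slack.

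First, I would rewrite $|\cX_{i,j}| = \sum_{S_\ell \in \cS_{k,i}} Y_\ell^{(i,j)}$, where the $Y_\ell^{(i,j)} \in \{0,1\}$ are mutually independent indicators with $\Pr[Y_\ell^{(i,j)} = 1] = p_i := \min\{1, m^{\mem/2}/|\cS_{k,i}|\}$. The two cases corresponding to whether $p_i$ is truncated at $1$ should be handled separately. If $|\cS_{k,i}| \leq m^{\mem/2}$, then $p_i = 1$ and $|\cX_{i,j}| = |\cS_{k,i}| \leq m^{\mem/2} < 4m^{\mem/2}$ deterministically, so there is nothing to show. Otherwise $p_i = m^{\mem/2}/|\cS_{k,i}|$, giving $\mu := \E[|\cX_{i,j}|] = m^{\mem/2}$.

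In the nontrivial case, I would apply the standard multiplicative Chernoff bound in the form $\Pr[X \geq (1+\delta)\mu] \leq \exp(-\delta \mu / 3)$ valid for $\delta \geq 1$, with $\delta = 3$. This yields $\Pr[|\cX_{i,j}| \geq 4 m^{\mem/2}] \leq \exp(-m^{\mem/2})$ for any individual pair $(i,j)$.

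Finally, I would union bound over all $(i,j)$. The total number of groups is $\sum_{i=1}^{1/\alpha} 2 m^{(i+1)\alpha}$, which is at most $\frac{m}{\alpha}$ (up to a constant and a polynomial factor that gets absorbed into the super-polynomially small $\exp(-m^{\mem/2})$). Combining, the probability that any $\cX_{i,j}$ violates the $4m^{\mem/2}$ bound is at most $\frac{m}{\alpha}\exp(-m^{\mem/2})$, as claimed. There is no real obstacle here --- the proof is entirely mechanical once one recognizes that the sampling probability $p_i$ was chosen precisely so that the expected sample size is the large quantity $m^{\mem/2}$.
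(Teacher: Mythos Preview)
Your proposal is correct and is precisely the ``straightforward Chernoff bound'' the paper invokes without giving details: compute $\mu = m^{\mem/2}$, apply Theorem~\ref{thm:chernoff} with $\delta=3$, and union bound over all pairs $(i,j)$. Your case split on whether $p_i$ is truncated at $1$ and your remark that the union-bound count $\sum_i 2m^{(i+1)\alpha}$ is only $m/\alpha$ up to a polynomial factor absorbed by the exponential are both accurate and slightly more careful than what the paper writes.
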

The first lemma states that for a good fraction of the sets, a good fraction of their remaining elements has been covered.
This will be useful for showing that $\Phi_k$ decreases significantly after each iteration.

\begin{lemma}
   \label{lem:set_cover1}
   Fix $i$ and an iteration $k$.
   Let $\cS_{k,i}$ be as in Algorithm~\ref{alg:set_cover} and
   \begin{multline*}
      \cS_{k,i}' = \{ S_{\ell} \in \cS_{k,i} : |S_\ell \setminus C_{k+1}| \geq m^{1 - (i+1)\alpha} / 2 \\ \text{ and } |S_{\ell} \setminus C_{k+1}| / w_k \geq L/(1+\eps) \}.
   \end{multline*}
   Then $|\cS_{k,i}'| \leq |\cS_{k,i}| / 2m^{\mem/4}$ with probability at least $1 - m \exp\left( -m^{\mem/4} / 2 \right)$.
\end{lemma}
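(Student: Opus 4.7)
My plan is to prove the lemma by combining a structural observation, a covering-budget contradiction, and a deferred-revealing concentration argument. I will assume for contradiction that $|\cS_{k,i}'| > |\cS_{k,i}|/(2m^{\mem/4})$ and show that this event has probability at most $m\exp(-m^{\mem/4}/2)$.

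The first ingredient is a structural invariant: since $C$ grows monotonically throughout iteration $k$ and every $S_\ell \in \cS_{k,i}'$ satisfies $|S_\ell \setminus C_{k+1}| \geq m^{1-(i+1)\alpha}/2$ by definition, the stronger bound $|S_\ell \setminus C| \geq m^{1-(i+1)\alpha}/2$ actually holds at every moment during iteration $k$. Consequently, any group $\cX_{i,j}$ in phase $i$ that contains a member of $\cS_{k,i}'$ satisfies the algorithm's threshold test and therefore causes some set to be added to $\cS$; moreover, the added set cannot lie in $\cS_{k,i}'$, since upon addition its remaining uncovered part becomes zero.

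The second ingredient is a covering-budget contradiction. Each set added during phase $i$ contributes at least $m^{1-(i+1)\alpha}/2$ previously uncovered elements to $C$, so if all $q := 2m^{(i+1)\alpha}$ groups in the phase trigger an addition, the total new coverage is at least $q \cdot m^{1-(i+1)\alpha}/2 = m$. This forces $C_{k+1} = [m]$, and hence $\cS_{k,i}' = \emptyset$, contradicting the assumption. Thus under the bad event at least one group $\cX_{i,j}$ must fail to trigger an addition, which is equivalent to $\cX_{i,j}$ being disjoint from $\cS_{k,i}(j)$, where $\cS_{k,i}(j) \subseteq \cS_{k,i}$ denotes the subset of sets still meeting the uncovered-mass threshold just before group $j$ is processed.

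Finally, I bound the failure probability via deferred revealing over $j = 1, \ldots, q$. The family $\cS_{k,i}(j)$ is monotonically nonincreasing and contains $\cS_{k,i}'$. Conditioned on the history up through group $j-1$, the sample $\cX_{i,j}$ is independent and includes each $S_\ell \in \cS_{k,i}$ with probability $p = m^{\mem/2}/|\cS_{k,i}|$, so whenever $|\cS_{k,i}(j)| > |\cS_{k,i}|/(2m^{\mem/4})$ we get
\[
\Pr\bigl[\cX_{i,j} \cap \cS_{k,i}(j) = \emptyset \,\bigm|\, \mathrm{history}\bigr] \leq (1-p)^{|\cS_{k,i}(j)|} \leq \exp(-m^{\mem/4}/2).
\]
A union bound over the $q \leq 2m$ groups delivers the claimed tail bound. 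The main obstacle is the apparent circularity that $\cS_{k,i}'$ depends on the same sampling randomness one wants to Chernoff-bound; the deferred-revealing framing plus the containment $\cS_{k,i}' \subseteq \cS_{k,i}(j)$ for every $j$ circumvents this, turning the bad event into a union of clean per-step conditional Chernoff statements.
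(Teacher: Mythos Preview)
Your proof is correct and follows essentially the same approach as the paper's: the monotonicity of $C$ giving $\cS_{k,i}' \subseteq \cS_{k,i}(j)$, the per-group hitting probability $(1-p)^{|\cS_{k,i}(j)|} \leq \exp(-m^{\mem/4}/2)$, the covering-budget observation that $2m^{(i+1)\alpha}$ successful additions would exhaust $[m]$, and the union bound over groups are exactly the paper's ingredients. Your deferred-revealing/contrapositive packaging is slightly more explicit about the conditioning than the paper's forward sequential argument, but the two are the same proof.
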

\begin{proof}
   Suppose we are currently processing group $\cX_{i,j}$.
   If at this point, the number of sets $S_{\ell} \in \cS_{k,i}$ with both $|S_{\ell} \setminus C| \geq m^{1 - (i+1)\alpha} / 2$ and $|S_{\ell} \setminus C| / w_k \geq L/(1+\eps)$
   is at most $|\cS_{i,k}| / 2m^{\mem/4}$ then we are done (since $C \subseteq C_{k+1}$).
   Otherwise, with probability at least $1 - \left( 1 - m^{\mem/2} / |\cS_{k,i}| \right)^{|\cS_{k,i}| / 2m^{\mem/4}} \geq 1 - \exp\left( - m^{\mem/4} / 2 \right)$,
   the group $\cX_{i,j}$ contains a set $S_{\ell}$
   with both $|S_{\ell} \setminus C| \geq m^{1 - (i+1) \alpha} / 2$ and $|S_{\ell} \setminus C| / w_k \geq L/(1+\eps)$.
   The algorithm then adds $S_{\ell}$ to the solution.
   Note that if this happens $2m^{(i+1) \alpha}$ times then we have covered the ground set and so the conclusion of the lemma holds trivially.
   Hence, we conclude that with probability at least $1 - m \exp\left( -m^{\mem/4} / 2 \right)$, we have $|\cS_{k,i}'| \leq |\cS_{k,i}| / 2m^{\mem/4}$.
\end{proof}

Let us now fix $\alpha = \mem/8$.
The following lemma shows that the potential function decreases geometrically after every iteration.
\begin{lemma}
   \label{lem:set_cover_potential}
   Fix an iteration $k$.
   Then $\Phi_{k+1} \leq \Phi_k / m^{\mem/8}$ with probability at least $1 - \frac{16m}{\mem} \exp\left( -m^{\mem/4 } / 2 \right)$.
\end{lemma}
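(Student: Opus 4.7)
The plan is to show that $\Phi_{k+1} \leq \Phi_k/m^{\mem/8}$ by bounding the contribution from each group $\cS_{k,i}$ separately and summing. The key observation is that every set contributing to $\Phi_{k+1}$ must already have been contributing to $\Phi_k$, because $C_k \subseteq C_{k+1}$ means $|S_\ell \setminus C_{k+1}| \leq |S_\ell \setminus C_k|$, so its cost-ratio can only decrease. Thus it suffices to analyze $\sum_i$ (new contribution from sets in $\cS_{k,i}$).

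For a fixed group $i$, I would split $\cS_{k,i}$ into two parts: (a) those sets in $\cS_{k,i}'$, i.e.\ the ``survivors'' from \Lemma{set_cover1}, and (b) those sets in $\cS_{k,i} \setminus \cS_{k,i}'$ that still contribute. Sets in part (a) each contribute at most $m^{1-(i-1)\alpha}$ (the upper bound of the group), and there are at most $|\cS_{k,i}|/(2m^{\mem/4}) = |\cS_{k,i}|/(2m^{2\alpha})$ of them (using $\alpha=\mem/8$), so their total contribution is at most $|\cS_{k,i}| \cdot m^{1-(i+1)\alpha}/2$. A set in part (b) that still contributes must fail the size condition of $\cS_{k,i}'$ (otherwise it would have failed the cost-ratio condition and contribute $0$), so it contributes at most $m^{1-(i+1)\alpha}/2$, giving a total of at most $|\cS_{k,i}| \cdot m^{1-(i+1)\alpha}/2$. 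Summed together, the contribution of $\cS_{k,i}$ to $\Phi_{k+1}$ is at most $|\cS_{k,i}| \cdot m^{1-(i+1)\alpha} = m^{-\alpha} \cdot |\cS_{k,i}| m^{1-i\alpha}$.

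Since every $S_\ell \in \cS_{k,i}$ has $|S_\ell \setminus C_k| \geq m^{1-i\alpha}$, the quantity $|\cS_{k,i}|\cdot m^{1-i\alpha}$ is a lower bound on the contribution of $\cS_{k,i}$ to $\Phi_k$. Summing over $i=1,\ldots,1/\alpha$ therefore yields
\[
\Phi_{k+1} \;\leq\; m^{-\alpha} \sum_i |\cS_{k,i}| \, m^{1-i\alpha} \;\leq\; m^{-\alpha} \Phi_k \;=\; \Phi_k / m^{\mem/8}.
\]
The probability bound comes from a union bound of \Lemma{set_cover1} over the $1/\alpha = 8/\mem$ groups, giving failure probability at most $(8m/\mem)\exp(-m^{\mem/4}/2)$, plus the probability that \Cref{claim:sc_failure} fails (in which case no sets are added and $\Phi_{k+1}=\Phi_k$); the latter is absorbed into the former since $\exp(-m^{\mem/2}) \leq \exp(-m^{\mem/4}/2)$, yielding the stated $(16m/\mem)\exp(-m^{\mem/4}/2)$.

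The main obstacle is just the bookkeeping in step two: making the two contribution bounds (sets that survived the shrinkage versus sets that got nearly covered) both come out to $|\cS_{k,i}| m^{1-(i+1)\alpha}/2$, so that they combine into exactly the factor $m^{-\alpha}$ needed for the geometric decrease; this is precisely why the threshold in the definition of $\cS_{k,i}'$ was chosen as $m^{1-(i+1)\alpha}/2$ and the shrinkage factor in \Lemma{set_cover1} as $2m^{\mem/4}$.
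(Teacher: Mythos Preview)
Your proposal is correct and follows essentially the same argument as the paper: both decompose $\Phi_{k+1}$ group-by-group into the ``survivors'' $\cS_{k,i}'$ (bounded via \Lemma{set_cover1} by $|\cS_{k,i}'|\cdot m^{1-(i-1)\alpha} \le |\cS_{k,i}|\,m^{1-(i+1)\alpha}/2$) and the non-survivors that still meet the cost-ratio threshold (each forced to have $|S_\ell\setminus C_{k+1}|<m^{1-(i+1)\alpha}/2$), then compare the sum to the lower bound $\Phi_k\ge\sum_i|\cS_{k,i}|\,m^{1-i\alpha}$. Your handling of the probability bound, combining the $8/\mem$ applications of \Lemma{set_cover1} with \Claim{sc_failure}, also matches the paper's accounting.
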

\begin{proof}
   By Claim~\ref{claim:sc_failure}, an iteration fails with probability at most $\frac{m}{\alpha} \exp\left( -m^{\mem/2} \right) = \frac{8m}{\mem} \exp\left( -m^{\mem/2} \right)$.

   For $i \in [1/\alpha]$, let
   $\cA_i = \setst{ S_\ell \in \cS_{k,i} \cap \cS_{k,i}' }{ \frac{|S_{\ell} \setminus C_{k+1}|}{w_{\ell}} \geq \frac{L}{1+\eps}}$
   and $\cB_i = \setst{ S_\ell \in \cS_{k,i} \setminus \cS_{k,i}' }{ \frac{|S_{\ell} \setminus C_{k+1}|}{w_{\ell}} \geq \frac{L}{1+\eps}}$.
   Then
   \[
      \Phi_{k+1} = \sum_{i=1}^{1/\alpha} \Big(
      \sum_{S_\ell \in \cA_i} |S_\ell \setminus C_{k+1}| +
      \sum_{S_\ell \in \cB_i} |S_\ell \setminus C_{k+1}|
      \Big).
   \]
   By Lemma~\ref{lem:set_cover1} and the definition of $\cS_{k,i}'$, we have
   \[
      \sum_{S_\ell \in \cA_i} |S_\ell \setminus C_{k+1}|
      \leq |\cS_{k,i}'| m^{1-(i-1) \alpha} \leq |\cS_{k,i}| m^{1-(i-1) \alpha - \mem/4} / 2.
   \]
   On the other hand, we have
   $
      \sum_{S_\ell \in \cB_i} |S_\ell \setminus C_{k+1}|
      \leq |\cS_{k,i}| m^{1-(i+1)\alpha} / 2
   $.
   Plugging in $\alpha = \mem/8$, we have
   \[
      \Phi_{k+1} \leq \sum_{i=1}^{1/\alpha} |\cS_{k,i}| m^{1-i\mem/8} / m^{\mem/8}.
   \]
   Finally, we can also write
   \[
      \Phi_k = \sum_{i=1}^{1/\alpha} \sum_{S_\ell \in \cS_{k,i}} |S_\ell \setminus C_k|
      \geq \sum_{i=1}^{1/\alpha} |\cS_{k,i}| m^{1 - i\mem/8}.
   \]
   Hence, we conclude that $\Phi_{k+1} \leq \Phi_{k} / m^{\mem/8}$.
\end{proof}

We now show that the number of iterations of the inner while loop is quite small with high probability.
\begin{lemma}
   \label{lem:sc_rounds}
   Let $K = \inf \{k\!>\!0 : \Phi_k\!=\!0\}$.
   Then $K \leq \frac{18\log \Phi_0}{\mem \log m}$ with probability
   $\geq 1 - \frac{32m}{\mem} \exp\left( -m^{\mem/4} / 2 \right)$.
\end{lemma}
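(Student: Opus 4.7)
The plan is to iterate \Lemma{set_cover_potential} for at most $K$ steps and exploit the integrality of $\Phi_k$. The key observation I would start from is that $\Phi_k$ takes values in $\mathbb{Z}_{\geq 0}$, since it is a sum of nonnegative integers $|S_\ell \setminus C_k|$. Moreover, $\Phi_k = 0$ precisely when no $\ell$ satisfies $|S_\ell \setminus C_k|/w_\ell \geq L/(1+\eps)$, i.e., exactly when the inner while loop at Line~\ref{ln:setalg_inner} exits. Hence $K$ is determined by the first iteration at which the potential drops below $1$.

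Next I would invoke \Lemma{set_cover_potential} inductively. Define the event $\cE_k$ that $\Phi_{k+1} \leq \Phi_k / m^{\mem/8}$. If $\cE_1, \ldots, \cE_{k-1}$ all occur, composing the geometric shrinkage gives
\[
   \Phi_k \;\leq\; \Phi_0 \cdot m^{-k \mem/8}.
\]
Setting $k^\star = \lceil 18 \log \Phi_0 / (\mem \log m) \rceil$ yields
\[
   \Phi_{k^\star} \;\leq\; \Phi_0 \cdot m^{-18 \log \Phi_0 / (8 \log m)} \;=\; \Phi_0^{1 - 18/8} \;<\; 1,
\]
so by the integrality argument above, $\Phi_{k^\star} = 0$ and thus $K \leq k^\star \leq 18 \log \Phi_0/(\mem \log m)$, provided all the events $\cE_1,\ldots,\cE_{k^\star - 1}$ hold. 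The factor $18$ (versus the bare threshold $8$) gives the slack needed to absorb the union bound.

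Finally, I would take a union bound over the at most $k^\star$ applications of \Lemma{set_cover_potential}: the probability that some $\cE_k$ fails is at most $k^\star \cdot \frac{16m}{\mem}\exp(-m^{\mem/4}/2)$. Since $\Phi_0 \leq nm$ and $k^\star = O(\log(nm)/(\mem \log m))$, this polylogarithmic factor is readily absorbed into the exponential (or into the leading constant, using $k^\star \cdot 16 \leq 32$ once one accounts for the $\exp(-m^{\mem/4}/2)$ suppression), giving the stated bound $1 - \frac{32m}{\mem}\exp(-m^{\mem/4}/2)$.

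There is no real obstacle here beyond packaging the pieces; the only subtlety worth flagging is the integrality of $\Phi_k$, which is what lets the geometric decrease terminate cleanly at zero rather than merely becoming small, and the choice of the constant $18$ to provide enough slack for the union bound.
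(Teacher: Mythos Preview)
Your overall strategy---iterate \Lemma{set_cover_potential} and use integrality of $\Phi_k$---is sound, but your union bound does \emph{not} recover the stated constant. A union bound over $k^\star$ applications gives failure probability $k^\star \cdot \frac{16m}{\mem}\exp(-m^{\mem/4}/2)$, and $k^\star = \Theta(\log \Phi_0 /(\mem \log m))$ is not bounded by $2$ in general, so your parenthetical claim ``$k^\star \cdot 16 \leq 32$'' is simply false. You can absorb $k^\star$ into the exponential tail, but then you get a bound of the form $\frac{Cm}{\mem}\exp(-m^{\mem/4}/3)$ rather than the one stated. Relatedly, your explanation for the constant $18$ (``slack for the union bound'') is off: under a pure union-bound argument you would only need $\approx 8\log\Phi_0/(\mem\log m)$ iterations, and padding to $18$ buys you nothing.

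The paper takes a genuinely different route that explains both the $18$ and the $32$. It couples $\Phi_k$ to an auxiliary process $\Phi_k'$, calls an iteration \emph{good} if $\Phi_{k+1}' \leq \Phi_k'/m^{\mem/8}$ and \emph{bad} otherwise, and observes that $\frac{9\log\Phi_0}{\mem\log m}$ good iterations suffice to drive $\Phi'$ below $1$. Setting $K' = \frac{18\log\Phi_0}{\mem\log m}$, it is enough that at most $K'/2$ of the first $K'$ iterations are bad. The expected number of bad iterations is at most $K' \cdot \frac{16m}{\mem}\exp(-m^{\mem/4}/2)$, so Markov's inequality bounds the probability of more than $K'/2$ bad iterations by $\frac{32m}{\mem}\exp(-m^{\mem/4}/2)$---the $K'$ factors cancel. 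This is why the paper chooses $18 = 2 \cdot 9$: half the budget can be wasted on bad iterations and the other half still finishes the job.
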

\begin{proof}
   Let us define a new random process $\Phi'_k$, which is coupled to $\Phi_k$ as follows. First, $\Phi_0' = \Phi_0$.
   For $k \geq 1$ we have two cases.
   If $\Phi_k \geq 1$ then we set
   \[
      \Phi_{k+1}' =
      \begin{cases}
         \Phi_{k}' / m^{\mem/8} & \text{if $\Phi_{k+1} \leq \Phi_{k} / m^{\mem/8}$} \\
         0                      & \text{otherwise}
      \end{cases}.
   \]
   On the other hand if $\Phi_k = 0$ then we set $\Phi_{k+1}' = \Phi_{k}' / m^{\mem/8}$ with probability $1 - \frac{16m}{\mem} \exp\left( -m^{\mem/4} / 2 \right)$.
   With the remainder probability, we set $\Phi_{k+1}' = \Phi_k$.

   Observe that with this coupling, we have $\Phi_{k} \leq \Phi_k'$ for all $k \geq 0$.
   Set $K' = \frac{18 \ln \Phi_0}{\mem \ln m}$.
   We will show that $\Phi'_{K'} < 1$ with probability at least $1 - \frac{32m}{\mem} \exp\left( -m^{\mem/4} / 2 \right)$.
   This then implies that $K \leq K'$ with the same probability since $\Phi_{K'}$ must be a nonnegative integer.

   Let us say iteration $k$ is \emph{good} if $\Phi_{k+1}' \leq \Phi_{k}' / m^{\mem/8}$.
   Otherwise, we say it is \emph{bad}.
   By Lemma~\ref{lem:set_cover_potential} and the definition of $\{\Phi_k'\}_k$, it follows that iteration $k$ is good with probability
   at least $1 - \frac{16m}{\mem} \exp\left( -m^{\mem/4} / 2 \right)$.
   After $\frac{8 \ln \Phi_0}{\mem \ln m} + 1 \leq \frac{9 \ln \Phi_0}{\mem \ln m}$ good iterations, we have $\Phi_k' < 1$ so it suffices to show
   that after $K'$ iterations there are at most $\frac{9 \ln \Phi_0}{\mem \ln m} = K'/2$ bad iterations.
   Indeed, after $K'$ iterations, the expected number of bad iterations is at most $K' \cdot \frac{16m}{\mem} \exp\left( -m^{\mem/4} / 2 \right)$ so by Markov's Inequality,
   there are more than $K'/2$ bad iterations with probability at most $\frac{32m}{\mem} \exp \left( -m^{\mem/4} / 2 \right)$.
   This completes the proof.
\end{proof}

We are now ready to prove the main theorem in this section.
Define $\Phi = \sum_{\ell \in [n]} |S_\ell|$.
\begin{theorem}
   Algorithm~\ref{alg:set_cover} returns $(1+\eps)H_{\Delta}$-approximate minimum set cover.
   Moreover, the inner loop is executed at most 
   \[
       O\left( \frac{\ln(\Phi) \log_{1+\eps}(\Delta w_{\max} / w_{\min})}{\mem \ln m} \right)
   \]
   times with probability
   at least 
   \[
       1 - \frac{64m}{\mem}\exp\left( -m^{\mem/4} / 2\right).
   \]
\end{theorem}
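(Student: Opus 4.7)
The proof has two essentially independent parts: bounding the approximation ratio, and bounding the number of times the inner while loop runs.

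\textbf{Approximation ratio.} The strategy is a standard dual-fitting / Chv\'atal-style argument, adapted to the fact that Algorithm~\ref{alg:set_cover} is an $\eps$-greedy variant that adds several sets in parallel per inner iteration. I would first observe that whenever a set $S_\ell$ is added, it lay in $\cS_{k,i}$ for some iteration $k$ and bucket $i$, so $|S_\ell\setminus C_k|/w_\ell\geq L/(1+\eps)$; moreover $L$ is a running upper bound on $\max_{\ell'}|S_{\ell'}\setminus C|/w_{\ell'}$ since covering can only decrease ratios and $L$ is decreased by $(1+\eps)$ only after the inner loop exits. Thus every added set has cost-ratio within a $(1+\eps)$ factor of the maximum available ratio at the start of that iteration. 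Charging each element $j\in S_\ell\setminus C_k$ a price $w_\ell/|S_\ell\setminus C_k|$, the total charge equals $\sum_{\ell\in\cS}w_\ell$. For any fixed $S^\ast\in\OPT$ with elements covered in order $j_1,\dots,j_q$, the ratio $|S^\ast\setminus C_k|/w^\ast$ at the moment $j_t$ is first covered is at most $(1+\eps)\cdot|S_\ell\setminus C_k|/w_\ell$, yielding a charge on $j_t$ of at most $(1+\eps)w^\ast/(q-t+1)$. Summing the harmonic series over $t$ and then over OPT gives $\sum_{\ell\in\cS}w_\ell\leq(1+\eps)H_\Delta\cdot\OPT$.

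\textbf{Round count.} For a fixed value of $L$, Lemma~\ref{lem:sc_rounds} guarantees that the inner while loop terminates in at most $K_L:=18\log(\Phi_{0,L})/(\mem\log m)$ iterations with high probability, where $\Phi_{0,L}$ is the potential at the start of that $L$-phase. Since $\Phi_{0,L}\leq\Phi=\sum_{\ell}|S_\ell|$, each phase contributes at most $K:=18\log\Phi/(\mem\log m)$ iterations. Now I would bound the number of distinct values $L$ takes: initially $L=\max_\ell|S_\ell|/w_\ell\leq\Delta/w_{\min}$, and once $L<1/((1+\eps)w_{\max})$ no uncovered set can satisfy the loop condition (since $|S_\ell\setminus C|\geq 1$ implies cost-ratio $\geq 1/w_{\max}$), so the outer while loop must terminate. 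Hence $L$ takes $N_L=O(\log_{1+\eps}(\Delta w_{\max}/w_{\min}))$ distinct values, and the total inner-loop count is $N_L\cdot K=O(\log(\Phi)\log_{1+\eps}(\Delta w_{\max}/w_{\min})/(\mem\log m))$ as claimed.

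\textbf{Probability.} A union bound over the $N_L$ phases of the failure event of Lemma~\ref{lem:sc_rounds} yields total failure probability at most $N_L\cdot(32m/\mem)\exp(-m^{\mem/4}/2)$. Since $N_L$ is polynomial in the input parameters (hence polynomial in $m$, assuming $\Delta w_{\max}/w_{\min}\leq\poly(m)$) while $\exp(-m^{\mem/4}/2)$ decays super-polynomially, the polynomial prefactor can be absorbed into the exponential slack, giving the stated bound $1-(64m/\mem)\exp(-m^{\mem/4}/2)$.

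\textbf{Main obstacle.} The approximation-ratio step is the most delicate: within one inner iteration, many sets are added in parallel based on the stale snapshot $C_k$, and the eligibility check uses $|S_\ell\setminus C|\geq m^{1-(i+1)\alpha}/2$ rather than the original bucket lower bound. Showing that the charging scheme still telescopes correctly—so that the $(1+\eps)$ ratio argument survives the parallel batching and the bucket slack—is where the real work lies; the other two parts are essentially bookkeeping on top of Lemma~\ref{lem:sc_rounds}.
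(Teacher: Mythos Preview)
Your overall structure matches the paper: correctness by reduction to $\eps$-greedy, and a round bound obtained by combining Lemma~\ref{lem:sc_rounds} with a count of the distinct values of $L$. The approximation-ratio sketch and the computation of $N_L$ are fine and essentially what the paper intends (the paper is even terser on correctness, simply asserting that the algorithm is an instance of $\eps$-greedy).

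The one real gap is in your probability step. A union bound over the $N_L=\Theta(\log_{1+\eps}(\Delta w_{\max}/w_{\min}))$ phases gives failure probability $N_L\cdot(32m/\mem)\exp(-m^{\mem/4}/2)$, and you then try to absorb $N_L$ into the exponential by assuming $N_L\le\poly(m)$. But the theorem makes no such assumption---$w_{\max}/w_{\min}$ can be arbitrary---and even under that assumption you would not land on the clean constant $64$. The paper sidesteps this entirely: it partitions the sequence of inner iterations into consecutive \emph{blocks} of length $K=18\ln\Phi/(\mem\ln m)$, calls a block \emph{good} if $L$ decreases at least once during it (which by Lemma~\ref{lem:sc_rounds} happens with probability at least $1-(32m/\mem)\exp(-m^{\mem/4}/2)$), and observes that $\log_{1+\eps}(Mw_{\max})$ good blocks suffice for termination. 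Then, among $2\log_{1+\eps}(Mw_{\max})$ blocks, the expected number of bad blocks is at most $2\log_{1+\eps}(Mw_{\max})\cdot(32m/\mem)\exp(-m^{\mem/4}/2)$, and Markov's inequality bounds the probability of more than $\log_{1+\eps}(Mw_{\max})$ bad blocks by $2\cdot(32m/\mem)\exp(-m^{\mem/4}/2)=(64m/\mem)\exp(-m^{\mem/4}/2)$. The Markov step is exactly what cancels the $N_L$ factor that your union bound leaves behind; replace your union bound with this block-and-Markov argument and the stated probability follows without any side assumption on the weights.
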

\begin{proof}
   The correctness follows because the algorithm implements the $\eps$-greedy algorithm for set cover.

   Let $M = \max_{\ell} \left\{ \frac{|S_\ell|}{w_{\ell}} \right\}$.
   We can prove the running time as follows.
   Let $K = \frac{18 \log \Phi}{\mem \log m}$ and consider splitting up the iterations of the inner while loop into blocks of size $K$.
   By block $t$, we will refer to iterations $tK, \ldots, (t+1)K - 1$ of the inner while loop.
   We say that a block $t$ is good if $L$ has decreased at least once during that block.
   If the algorithm has already completed by that point,
   we will instead just flip a coin which comes up heads with probability at least $1 - \frac{32m}{\mem} \exp\left( -m^{\mem/4} / 2 \right)$.
   If it comes up heads we will call that block good.
   Otherwise, we call the block bad.

   Note that after $\log_{1+\eps}(M w_{\max})$ good blocks, we are guaranteed that the algorithm has terminated.
   If we consider $2 \log_{1+\eps}(M w_{\max})$ blocks then it suffices that at most $\log_{1+\eps}(M w_{\max})$ blocks are bad.
   The expected number of bad blocks is $\frac{32m}{\alpha} \exp \left( -m^{\mem/4} / 2 \right)$ so applying Markov's Inequality shows that
   after $O\left( \frac{\ln(\Phi) \log_{1+\eps}(M w_{\max})}{\mem \ln m} \right)$ iterations, the algorithm has terminated
   with probability at least $1 - \frac{64m}{\alpha}\exp\left( -m^{\mem/4} / 2 \right)$.
   Using the trivial bound $M \leq \Delta / w_{\min}$ and replacing $\alpha$ with $\mem/8$
   completes the proof.
\end{proof}

\subsection{MapReduce Implementation}
\SectionName{logDeltaSCMR}

It is straightforward to implement most steps in Algorithm~\ref{alg:set_cover} in MapReduce.
However, we will highlight two nontrivial step here.
The first is how to propagate information such as the set of covered elements $C$ to all the machines.
To do this, it is convenient to imagine all the machines as arranged in an $O(m^{\mem})$-ary tree where the root of the tree is the central machine.
Then the central machine can pass $C$ down to its children.
These machines then pass down to their children and so on.
By doing this, all machines will know $C$ in $O\left( \frac{\ln n}{\mem \ln m} \right)$ MapReduce rounds.

The machines can also determine $|\cS_{k,i}|$ in a similar manner but starting at the leaves of tree.
Here, each machine will compute the number of sets that are in $\cS_{k,i}$ and send that quantity to its parents.
The parents then sum up the input and their own contribution to $|\cS_{k,i}|$ which they send to their own parents.
Eventually the root is able to compute $|\cS_{k,i}|$ and then propagates the number back down to the leaves as done above.
This again takes $O\left( \frac{\ln n}{\mem \ln m} \right)$ rounds in MapReduce.

We can also use a similar strategy to check that $|\cX_{i,j}|$ is small in Line~\ref{ln:alg_sc_fail} of Algorithm~\ref{alg:set_cover}.
We thus have the following theorem.

\begin{theorem}
\TheoremName{logDeltaSCMR}
   There is a MapReduce algorithm that returns a $(1+\eps) H_{\Delta}$-approximate minimum set cover
   which uses memory $O(m^{1 + \mem} \log n)$ and runs in $O\left( \frac{\ln(\Phi) \log_{1+\eps} (\Delta w_{\max} / w_{\min}) \ln(n)} {\mem^2 \ln^2(m)} \right)$ with probability
   at least $1 - O\left( \frac{m}{\mem} \right) \exp\left( -m^{\mem/4} / 2 \right)$.
\end{theorem}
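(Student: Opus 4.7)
The plan is to check each of the three claims of the theorem: correctness, memory, and round complexity. Correctness is immediate, since the MapReduce implementation executes exactly the steps of \Algorithm{set_cover}, and the preceding theorem already established that this algorithm returns a $(1+\eps)H_\Delta$-approximate set cover. So the remaining work is to describe an efficient MapReduce realization and to bound its resources.

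First, I would describe the data layout. Assign sets $S_1,\ldots,S_n$ to machines so that each machine holds at most $O(m^{1+\mem})$ elements total across its sets; this is possible since $\sum_\ell |S_\ell| \geq m^{1+c}$ is the input size, and we need roughly $n/m^{\mem}$ machines. Each machine stores, for each of its sets $S_\ell$, the pair $(\ell,w_\ell)$ and the list of elements of $S_\ell$; each element in $S_\ell$ also knows a ``covered'' bit. We arrange the machines in an $m^\mem$-ary broadcast/aggregation tree rooted at the central machine; by construction this tree has depth $O(\ln n / (\mem \ln m))$. The extra $\log n$ factor in the per-machine memory accommodates routing buffers along this tree.

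Next, I would describe how each step of one iteration of the inner while loop is implemented, and bound its round cost in terms of the tree depth $D:=O(\ln n/(\mem \ln m))$.
\begin{itemize}
\item \emph{Broadcasting $L$ and $C$} (or, equivalently, the incremental updates to the covered-bit on each element): send down the tree in $O(D)$ rounds, using the $O(m^{1+\mem} \log n)$ memory on each internal node as a buffer.
\item \emph{Each machine locally determines $|S_\ell \setminus C|$} for its sets and assigns each qualifying $S_\ell$ to its group $\cS_{k,i}$.
\item \emph{Computing $|\cS_{k,i}|$ for each $i$}: aggregate a size-$1/\alpha = O(1/\mem)$ vector of counts from the leaves up to the root, in $O(D)$ rounds, then broadcast the resulting counts back down.
\item \emph{Sampling into $\cX_{i,j}$}: with the $|\cS_{k,i}|$ in hand, each set decides independently, with the correct probability, which groups it joins, producing key--value pairs keyed by $(i,j)$.
\item \emph{Size check on $|\cX_{i,j}|$}: aggregate up the tree and broadcast the fail bit in $O(D)$ rounds; the Chernoff bound in \Claim{sc_failure} ensures this happens with the stated negligible probability.
\item \emph{Gathering all $\cX_{i,j}$ at the central machine}: since every $|\cX_{i,j}|\leq 4m^{\mem/2}$ and every member is a set of size at most $m$, the total data gathered is $O(m^{1+\mem/2}/\alpha)\cdot m^{\mem/2}=O(m^{1+\mem}/\mem)$ words, routed up the tree in $O(D)$ rounds.
\item \emph{Central computation} of which $S_\ell$ to add to the cover and corresponding updates to $C$ takes one centralized step, then is disseminated using the broadcast step above.
\end{itemize}
So each inner iteration costs $O(D) = O(\ln n / (\mem \ln m))$ MapReduce rounds.

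Finally, I would combine this with the bound from the previous theorem: the inner while loop is iterated at most $O(\ln(\Phi)\log_{1+\eps}(\Delta \wmax/\wmin)/(\mem \ln m))$ times, with failure probability at most $O(m/\mem)\exp(-m^{\mem/4}/2)$. Multiplying by the $O(D)$ rounds per iteration gives the claimed total round count of $O(\ln(\Phi)\log_{1+\eps}(\Delta \wmax/\wmin)\ln(n)/(\mem^2 \ln^2 m))$, and the failure probability is unchanged up to a constant by a union bound over iterations (already absorbed into the bound stated earlier). For memory, the dominant term is the $O(m^{1+\mem})$ storage per machine, plus the $\log n$ overhead coming from routing, giving $O(m^{1+\mem}\log n)$ as claimed.

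The main obstacle in this plan is the broadcast/aggregation: a naive ``central machine talks to every machine'' implementation would blow up the space bound, because the central machine would need to send $\Omega(n/m^\mem)$ messages in a single round, exceeding its memory. Using the $m^\mem$-ary tree is the right workaround, but then one must verify that every piece of data ever broadcast (the set $C$, the counts $|\cS_{k,i}|$, the groups $\cX_{i,j}$, the newly added sets) fits within $O(m^{1+\mem}\log n)$ at each node along the tree in each round, which is why the Chernoff-based bound $|\cX_{i,j}|\leq 4m^{\mem/2}$ from \Claim{sc_failure} is essential here.
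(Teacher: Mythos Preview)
Your proposal is correct and follows essentially the same approach as the paper: an $m^{\mem}$-ary broadcast/aggregation tree of depth $O(\ln n/(\mem\ln m))$ is used to propagate $C$, to aggregate the counts $|\cS_{k,i}|$, and to check the sizes $|\cX_{i,j}|$, and the total round count is obtained by multiplying this per-iteration cost by the iteration bound from the preceding theorem. Your write-up is in fact considerably more detailed than the paper's own (which just sketches the broadcast-tree idea and states the theorem), so there is nothing to add.
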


Using the bound $\Phi \leq nm$ yields the bound given in \Figure{table}.

\begin{remark}
    Lemma~2.5 in \cite{BRS94} gives a simple way to preprocess the input so that $w_{\max} / w_{\min} \leq mn / \eps$ as follows.
    Let $\gamma = \max_{i \in [m]} \min_{S \ni i} w(S)$.
    This is a lower bound on the cost of any cover.
    First, we add every set with cost at most $\gamma \eps / n$ to the cover; this constributes a cost of at most $\gamma \eps \leq \eps \cdot OPT$.
    Next, we remove any set with cost more than $m \gamma$ since $OPT \leq m \gamma$.
    All these steps can be done using a broadcast tree in $O( \log(n) / (\mem \log(m)))$ rounds of MapReduce.
\end{remark}

\section{Maximum weight matching}
\SectionName{matching}

We have a graph $G = (V,E,w)$ but now $w \colon E \to \R$ is a weight function on the edges.
A matching in a graph is a subset $M \subseteq E$ such that $e_1 \cap e_2 = \emptyset$ for any distinct $e_1, e_2 \in M$.
The maximum weight matching in a graph is a matching $M$ that maximizes $\sum_{e \in M} w_e$.

\subsection{The local ratio method}

\begin{tcolorbox}[title=Sequential local ratio algorithm for maximum weight matching]
	Arbitrarily select an edge $e$ with positive weight and reduce its weight from itself and its neighboring edges.
    Push $e$ onto a stack and repeat this procedure until there are no positive weight edges remaining.
    At the end, unwind the stack adding edges greedily to the matching.
\end{tcolorbox}
If the edge $e = (u,v)$ was selected,
then the weight reduction makes the updates 
$w_{e^\prime} \gets w_{e^\prime} - w_e$ for any edge $e^\prime$ such that $e^\prime \cap e \neq \emptyset$. In contrast to the minimum vertex cover algorithm, weights in the graph can be reduced to negative weights.

\begin{theorem}[Paz-Schwartzman \cite{PS17}]
	\label{thm:lr_matching}
	The above algorithm returns a matching which is at least half the optimum value.
\end{theorem}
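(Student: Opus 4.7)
The plan is to prove this by induction on the number of iterations of the weight-reduction loop, using the standard local ratio framework. Let $w^{(1)} = w, w^{(2)}, \ldots, w^{(k+1)}$ be the sequence of weight functions encountered by the algorithm, where $e_t$ is the edge selected at iteration $t$ (so $w^{(t)}_{e_t} > 0$) and the algorithm terminates at step $k+1$ when $w^{(k+1)}_{e'} \leq 0$ for every edge $e'$. For each $t$, the weight update gives the local-ratio decomposition $w^{(t)} = w_2^{(t)} + w^{(t+1)}$, where $w_2^{(t)}(e') = w^{(t)}_{e_t}$ if $e' \cap e_t \neq \emptyset$ and $0$ otherwise. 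Note that by construction $w^{(t+1)}_{e_t} = w^{(t)}_{e_t} - w^{(t)}_{e_t} = 0$; this zeroing-out will be the critical feature that makes the induction work.

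For each $t$, let $M^{(t)}$ denote the matching produced by unwinding the stack restricted to edges $e_t, e_{t+1}, \ldots, e_k$ (so $M^{(1)}$ is the algorithm's final output and $M^{(k+1)} = \emptyset$). I will prove by reverse induction on $t$ that $w^{(t)}(M^{(t)}) \geq \tfrac{1}{2}\max_{N} w^{(t)}(N)$, where the maximum is over all matchings $N$. For the base case $t = k+1$, all edge weights are nonpositive, so the empty matching is optimal and the inequality holds as $0 \geq 0$.

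For the inductive step, assume $w^{(t+1)}(M^{(t+1)}) \geq \tfrac{1}{2}\max_N w^{(t+1)}(N)$. When $e_t$ is popped from the stack, either $e_t$ is added (yielding $M^{(t)} = M^{(t+1)} \cup \{e_t\}$) or some $e' \in M^{(t+1)}$ with $e' \cap e_t \neq \emptyset$ blocks it (yielding $M^{(t)} = M^{(t+1)}$). In either case $M^{(t)}$ contains at least one edge touching $e_t$, so $w_2^{(t)}(M^{(t)}) \geq w^{(t)}_{e_t}$. On the other hand, any matching contains at most two edges touching $e_t$ (one per endpoint), so $\max_N w_2^{(t)}(N) \leq 2 w^{(t)}_{e_t}$, giving the local ratio bound $w_2^{(t)}(M^{(t)}) \geq \tfrac{1}{2}\max_N w_2^{(t)}(N)$. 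For the residual $w^{(t+1)}$, the critical observation is that $w^{(t+1)}_{e_t} = 0$, so $w^{(t+1)}(M^{(t)}) = w^{(t+1)}(M^{(t+1)})$ in both cases; combined with the inductive hypothesis this yields $w^{(t+1)}(M^{(t)}) \geq \tfrac{1}{2}\max_N w^{(t+1)}(N)$. Adding the two bounds and using $\max_N w^{(t)}(N) \leq \max_N w_2^{(t)}(N) + \max_N w^{(t+1)}(N)$ closes the induction.

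The main subtlety is that the residual weights $w^{(t+1)}$ may be negative, so it is not \emph{a priori} clear that the recursive solution for $w^{(t+1)}$ remains good for $w^{(t)}$. This is precisely what the local-ratio decomposition and the identity $w^{(t+1)}_{e_t} = 0$ are designed to handle: adding $e_t$ back in during unwinding changes the $w^{(t+1)}$-value by zero, so no information is lost. I expect the rest of the argument to be mechanical bookkeeping.
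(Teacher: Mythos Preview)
Your proof is correct. The paper's own proof (present in the source but commented out) takes a different route via LP duality: during stack unwinding it builds a feasible dual solution $(y_v)_{v\in V}$ for the matching LP by incrementing $y_u$ and $y_v$ by the reduced weight $w^t_e$ whenever $e=(u,v)$ is popped, and argues that the matching value increases by at least $w^t_e$ at each pop while the dual objective increases by exactly $2w^t_e$. Weak duality then gives the factor $2$.

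Your argument is the classic local-ratio induction instead: you decompose $w^{(t)} = w_2^{(t)} + w^{(t+1)}$ and show the greedy-unwound matching is $\tfrac{1}{2}$-approximate for each summand separately. The two proofs are essentially dual views of the same accounting; yours is more self-contained (no LP needed), while the paper's version produces an explicit dual certificate. It is worth noting that the paper's proof of the $b$-matching generalization in \Appendix{bmatching} abandons LP duality and uses an inductive argument essentially identical in structure to yours, so your approach is very much in the spirit of the paper.
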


\comment{
\begin{proof}
	The output of the algorithm is clearly a matching.
    Again, we will prove the approximation guarantee via LP duality.
    
   The primal is
   \begin{equation}
      \label{eqn:p_matching}
      \max \sum_{e \in E} w_e x_e \quad \text{s.t.} \quad \sum_{e \ni v} x_e \leq 1 \,\, \forall v \in V, \quad x \geq 0.
   \end{equation}
   The dual is
   \begin{equation}
      \label{eqn:d_matching}
      \min \sum_{v \in V} y_v \quad \text{s.t.} \quad y_u + y_v \geq w_e \,\, \forall e = (u,v) \in E, \quad y \geq 0.
   \end{equation}
   
   Let $w'$ denote the edge weights of the graph as we unwind the stack.
   We will also initialize both the primal variables $x$ and the dual variables $y$ to 0.
   Note that for the weight vector $w'$, the variables are feasible for the corresponding LP.
   Whenever we pop an edge $e = (u,v)$ off the stack, we will increment $y_u, y_v, w_e'$ by $w^t_e$, where $w^t_e$ denotes
   the weight of $e$ when it was added to the stack.
   We will also add $e$ to the matching if possible.
   Note that if $y$ was feasible with the weights $w'$ before the update then it is also feasible after the update.
   Moreover, the objective value of the dual increases by exactly $2w^t_e$.
   
   Now let us look at the increase in the primal objective. If $e$ is added to the matching then the objective value increases by $w^t_e$.
   If $e$ is not added to the matching then there is an edge $e'$ incident to $e$ that is already in the matching.
   In which case, the weight of $e'$ increases by $w^t_e$ so the primal increases by at least $w^t_e$.
   
   Once the stack has been depleted, we have $w' = w$ so $y$ is feasible for the dual LP.
   Moreover, as we unwind the stack, the objective value of the primal increases by at least half the objective value of the dual,
   so this gives a 2-approximation to the maximum weight matching.
\end{proof}
}

\subsection{Randomized local ratio}

As in \Section{fSCRLR}, we apply our randomized local ratio technique to make the algorithm above amenable to parallelization.
For intuition, consider a fixed a vertex $v$ and suppose we sample approximately $n^{\mem}$ of its edges uniformly at random.
If $e$ is the heaviest sampled edge then there are only about $d(v) / n^{\mem}$ edges incident to $v$ that are heavier than $e$.
Hence, in the local ratio algorithm, if we choose $e$ as the edge to reduce then this effectively
decreases the degree of $v$ by a factor of $n^{-\mem}$.

\begin{algorithm}
    \caption{\small
    	$2$-approximation for maximum weight matching.
        Blue lines are centralized.
    }
    \label{alg:matching2apx}
    \begin{algorithmic}[1] 
        \Procedure{ApproxMaxMatching}{$G = (V,E)$}
        	\State $E_1 \gets E$, $d_{1}(v) \gets d(v)$, $i \gets 1$
            \State $S \gets \emptyset$ \Comment{Initialize an empty stack.}
            \While{$E_i \neq \emptyset$}
            	\For{each vertex $v \in V$}  \label{matching:for1}
                	\If{ $|E_i| < 4 \eta$ }
                    	\State Let $E'_v$ be all edges in $E_i$ incident to $v$
                    \Else
                		\State Construct $E'_v \subseteq E_i \intersect \delta(v)$ by sampling i.i.d.\ with probability
   			                   $p = \min\big\{\frac{\eta}{|E_i|},1 \big\}$ \label{ln:sample}
                               \label{matching:sample}
                    \EndIf
                \EndFor
                
                \If{$\sum_v |E_v'| > 8 \eta$}
                	\State \textbf{Fail}   \label{matching:fail}
                \EndIf
                
                \CFOR{each vertex $v \in V$}  \label{matching:for2}
                	\CSTATET Let $e \in E_v'$ be the heaviest edge and apply weight reduction to $e$  \label{ln:weight_reduction}
                    \CSTATE Push $e$ onto the stack $S$  \label{matching:push}
                \ENDFOR
                
                \State Let $E_{i+1}$ be the subset of $E_i$ with positive weights  \label{matching:Ei}
                \State $\rhd$: Let $d_{i+1}(v)$ denote $|\{e \in E_{i+1} : v \in e\}|$ \label{ln:matching_di}
                \State $i \gets i + 1$
            \EndWhile
            \State Unwind $S$, adding edges greedily to the matching $M$
            \State \textbf{return} $M$
        \EndProcedure
    \end{algorithmic}
\end{algorithm}

The following claim follows by a simple Chernoff bound.
\begin{claim}
	\label{claim:fail}
	For $i \geq 1$ and conditioned on iteration $i - 1$ not failing, iteration $i$ fails with probability at most $\exp(-\eta)$.
\end{claim}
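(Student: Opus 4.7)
The plan is to apply a standard multiplicative Chernoff bound, after conditioning on the history (which fixes $E_i$). Note that once iteration $i-1$ has been executed and has not failed, the set $E_i$ is determined (it is the set of edges in $E_{i-1}$ whose reduced weight is still positive after the weight reductions on line \ref{ln:weight_reduction}). The randomness in iteration $i$ consists only of the fresh i.i.d.\ coin flips used on line \ref{matching:sample}, so we may treat $E_i$ as deterministic and bound the conditional failure probability.

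First, I would handle the trivial case $|E_i| < 4\eta$. Here the ``if'' branch of the for-loop on line \ref{matching:for1} sets $E_v' = E_i \cap \delta(v)$ for every $v$, so
\[
\sum_v |E_v'| \;=\; \sum_v |E_i \cap \delta(v)| \;=\; 2|E_i| \;<\; 8\eta,
\]
and iteration $i$ cannot fail at all. So assume $|E_i| \geq 4\eta$, in which case $p = \eta/|E_i|$.

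The key observation is that $\sum_v |E_v'|$ is a sum of $2|E_i|$ independent Bernoulli$(p)$ random variables: for each edge $e = (u,v) \in E_i$ there are two independent trials, one indicating whether $e$ is included in $E_u'$ and one indicating whether $e$ is included in $E_v'$. Thus
\[
\mu \;:=\; \mathbb{E}\Big[\sum_v |E_v'|\Big] \;=\; 2p|E_i| \;=\; 2\eta.
\]
Now I would invoke the multiplicative Chernoff upper-tail bound $\Pr[X \geq (1+\delta)\mu] \leq e^{-\delta \mu / 3}$ valid for $\delta \geq 1$, with $\delta = 3$, yielding
\[
\Pr\!\Big[\sum_v |E_v'| > 8\eta\Big] \;\leq\; e^{-\delta \mu /3} \;=\; e^{-2\eta} \;\leq\; e^{-\eta},
\]
as required.

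There is essentially no obstacle here: the whole argument turns on the simple fact that conditional on $E_i$, the samples $E_v'$ use independent fresh randomness, so standard concentration applies directly. The only mild subtlety is to remember that each edge contributes \emph{two} independent Bernoulli trials to $\sum_v |E_v'|$ (one per endpoint), which gives the factor of $2$ in $\mu = 2\eta$ and still leaves plenty of slack against the failure threshold $8\eta$.
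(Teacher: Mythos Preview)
Your proof is correct and is precisely the argument the paper intends: the paper itself only says the claim ``follows by a simple Chernoff bound,'' and you have filled in exactly those details, including the case split on $|E_i|<4\eta$ and the observation that $\sum_v |E_v'|$ is a sum of $2|E_i|$ independent Bernoulli$(p)$ variables with mean $2\eta$.
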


\begin{lemma}
	\label{lem:first_phase}
    Suppose $\eta = n^{1+\mem}$ for some constant $\mem > 0$ and $|E| = n^{1+c}$ for some $c > \mem$.
	Then, with probability at least $1 - (n+1) \cdot \exp(-n^{\mem})$:
    \begin{itemize}
    	\item
        	the first iteration does not fail; and
        \item
        	$d_2(v) \leq n^{c}$ for all $v \in V$, where $d_2$ is as defined in Line~\ref{ln:matching_di} of Algorithm~\ref{alg:matching2apx}.
    \end{itemize}
\end{lemma}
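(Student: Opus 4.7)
I would prove the two assertions separately. For the failure bound, in iteration $1$ the sampling probability is $p = \min\{\eta/|E|,1\} = n^{\mem-c} < 1$, so the expected total sample size is $\E\bigl[\sum_v |E_v'|\bigr] = 2p|E| = 2\eta$ (each edge contributes to its two endpoints). A standard Chernoff bound, exactly as packaged in Claim~\ref{claim:fail}, then yields failure probability at most $\exp(-\eta) \leq \exp(-n^\mem)$.

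For the degree bound, I would fix a vertex $v$ and split into two cases. If $d_1(v) \leq n^c$, then since weights only decrease during an iteration we have $E_2 \subseteq E_1$, hence trivially $d_2(v) \leq d_1(v) \leq n^c$. The interesting case is $d_1(v) > n^c$. Here I would order the edges incident to $v$ in $E_1$ by decreasing original weight as $e^{(1)}, e^{(2)}, \ldots$, and let $r_v$ denote the rank of the heaviest sampled edge $e_v \in E_v'$ (setting $r_v = \infty$ if $E_v' = \emptyset$). The key observation is that the weight reduction on Line~\ref{ln:weight_reduction} at $v$ subtracts $w(e_v) = w(e^{(r_v)})$ from every edge incident to $v$; therefore every $e^{(j)}$ with $j > r_v$ has original weight at most $w(e_v)$ and becomes non-positive immediately after this single reduction. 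Additional reductions coming from the other endpoints can only further decrease these weights, so none of the edges $e^{(r_v+1)}, \ldots$ can appear in $E_2$. Consequently $d_2(v) \leq r_v - 1$.

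It then remains to control $\Pr[r_v \geq n^c + 2]$. This event is exactly the event that none of the top $n^c+1$ edges $e^{(1)}, \ldots, e^{(n^c+1)}$ is sampled into $E_v'$ (this also subsumes the degenerate case $E_v' = \emptyset$). Since each is sampled independently with probability $p = n^{\mem-c}$, the probability is at most $(1-p)^{n^c+1} \leq \exp(-p \cdot n^c) = \exp(-n^\mem)$. A union bound over the failure event and the at most $n$ vertex events then gives $(n+1)\exp(-n^\mem)$, as claimed. The only mild conceptual obstacle is checking that concurrent reductions at different vertices (within the same CFOR loop) cannot revive a killed edge: since each reduction subtracts a non-negative amount, the process is monotone, and any edge driven non-positive by $v$'s reduction stays non-positive regardless of the order in which the other reductions are applied.
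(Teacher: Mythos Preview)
Your overall strategy matches the paper's, but there is a genuine gap in the degree-bound argument. You order the edges incident to $v$ by \emph{original} weight and assert that the weight reduction at $v$ subtracts $w(e^{(r_v)})$, the original weight of the highest-ranked sampled edge. However, Line~\ref{ln:weight_reduction} is executed \emph{after} reductions at vertices processed earlier in the CFOR loop. By the time the algorithm reaches $v$: (i) the edge it selects is the heaviest in $E_v'$ by \emph{current} weight, which need not be $e^{(r_v)}$; and (ii) the amount subtracted is that edge's \emph{current} weight, which can be strictly smaller than $w(e^{(r_v)})$. Your monotonicity remark at the end handles the direction ``later reductions cannot revive a killed edge,'' but not ``earlier reductions can shrink the amount $v$ subtracts.''

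A concrete obstruction: suppose $v$ has three incident edges with original weights $10,5,3$ and $E_v' = \{e^{(1)}\}$, so $r_v = 1$. If the other endpoint of $e^{(1)}$ is processed before $v$ and reduces the weight of $e^{(1)}$ to $2$, then the current weights when we reach $v$ are $2,5,3$. The only edge in $E_v'$ is $e^{(1)}$, so the reduction at $v$ subtracts just $2$, leaving two positive-weight edges. Thus $d_2(v) = 2$, whereas your argument predicts $d_2(v) \leq r_v - 1 = 0$.

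The paper closes this gap by working with the \emph{current} weights at the moment $v$ is processed: it lets $k_v$ be the number of currently positive edges at $v$ and considers the heaviest $n^c$ among those. The key point is that this set is determined by the samples $E_u'$ for $u \neq v$ (and the fixed processing order), which are independent of $E_v'$; hence the probability that $E_v'$ misses all $n^c$ of them is still at most $(1-p)^{n^c} \leq \exp(-n^\mem)$. With this correction, the rest of your union-bound strategy goes through unchanged.
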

\begin{proof}
	Let $k_v$ be the number of edges incident to $v$ with positive weight when we reach $v$ in the for loop in ~\Cref{ln:weight_reduction}.
    If $k_v \leq n^c$ then we are done so suppose $k_v > n^c$.
    The probability that we do not sample any of the heaviest $n^c$ edges that are currently
    incident to $v$ is at most $\left( 1 - \frac{n^{1+\mem}}{|E|} \right)^{n^c} \leq \exp\left(
    -\frac{n^{1+\mem+c}}{|E|} \right) \leq \exp(-n^{\mem})$.
    Combining with Claim~\ref{claim:fail} and taking a union bound completes the proof.
\end{proof}

In the subsequent analysis, we will assume that $\mem$ is a positive constant.
(Actually, it suffices to take $\mem = \omega(\log\log n/\log n)$.)

\begin{lemma}
	\label{lem:phases}
    Suppose $\eta = n^{1+\mem}$ for any constant $\mem>0$.
	Let $\Delta_i = \max_v d_i(v)$.
    For $i > 2$ and conditioning on the past $i-1$ iterations not failing, with probability at least
    $1 - (n+1) \cdot \exp(-n^{\mem/2})$:
    \begin{itemize}
    	\item
    	iteration $i$ does not fail; and
        \item
        $\Delta_{i+1} \leq \Delta_{i} / n^{\mem/4}$.
    \end{itemize}
\end{lemma}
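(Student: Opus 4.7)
The plan is to follow the template of \Lemma{first_phase}, replacing the global edge count $|E|=n^{1+c}$ with the handshake bound $|E_i|\le n\Delta_i/2$ and the target rank $n^c$ with $\lceil \Delta_i/n^{\mem/4}\rceil$. First, by Claim~\ref{claim:fail}, conditioning on iterations $1,\dots,i-1$ not failing, the probability that iteration $i$ itself fails is at most $\exp(-\eta)=\exp(-n^{1+\mem})$, well inside the target $\exp(-n^{\mem/2})$. On the complement I would fix a vertex $v$, bound $d_{i+1}(v)\le \lceil\Delta_i/n^{\mem/4}\rceil$ with probability at least $1-\exp(-n^{\mem/2})$, and union-bound over the $n$ vertices.

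For a fixed $v$, consider the moment $v$'s turn arrives in the for-loop on \Cref{ln:weight_reduction}, and let $k_v$ denote the number of edges currently incident to $v$ with positive weight. If $k_v\le \Delta_i/n^{\mem/4}$ then $d_{i+1}(v)\le k_v$ because weights only decrease from this point onwards, and we are done. Otherwise, pick any $\lceil \Delta_i/n^{\mem/4}\rceil$ currently-heaviest edges incident to $v$; each such edge was in $E_i$ and was placed in $E_v'$ independently with probability $p$, so the probability that $E_v'$ misses all of them is at most $(1-p)^{\Delta_i/n^{\mem/4}}\le \exp(-p\,\Delta_i/n^{\mem/4})$. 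The handshake bound $|E_i|\le n\Delta_i/2$ yields $p\ge 2\eta/(n\Delta_i)=2n^{\mem}/\Delta_i$, and therefore $p\cdot \Delta_i/n^{\mem/4}\ge 2n^{3\mem/4}\ge n^{\mem/2}$, which is the desired tail bound.

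When $E_v'$ does hit one of those top edges, the heaviest edge $e_v\in E_v'$ has current weight at least that of the rank-$\lceil\Delta_i/n^{\mem/4}\rceil$ currently-positive edge incident to $v$, so the weight reduction triggered at $v$ drives every edge incident to $v$ of smaller or equal current weight to a non-positive value. The remaining positive-weight edges incident to $v$ therefore number at most $\lceil \Delta_i/n^{\mem/4}\rceil$, giving $d_{i+1}(v)\le \lceil \Delta_i/n^{\mem/4}\rceil$. Union-bounding over the $n$ choices of $v$ together with the failure event from Claim~\ref{claim:fail} yields the advertised $(n+1)\exp(-n^{\mem/2})$.

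The one subtlety I anticipate, and the point I would flag as the main detail to verify, is that $E_v'$ is frozen at the start of iteration $i$, whereas the ``currently heaviest'' edges at $v$'s turn depend on the random execution order of the for-loop. Independence nevertheless survives: for any realisation of the processing order, the top-$k$ currently-positive edges at $v$'s turn form a (data-dependent) subset of $E_i$, and each element of $E_i$ was placed in $E_v'$ by an independent coin flip that is unaffected by the subsequent execution. Once this is formalised by a standard revelation-order coupling --- the same one implicit in \Lemma{first_phase} --- the Chernoff calculation above proceeds verbatim.
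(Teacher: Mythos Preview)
Your proposal is correct and follows essentially the same approach as the paper: fix a vertex, bound the probability that the sample misses all top-ranked currently-positive incident edges, use the handshake bound $|E_i|\le n\Delta_i$ to control $p$, and union-bound together with Claim~\ref{claim:fail}. The only cosmetic difference is that the paper takes the rank threshold to be $k_v/n^{\mem/4}$ rather than your $\Delta_i/n^{\mem/4}$; your explicit discussion of why the sampling coins for $E_v'$ remain independent of the (data-dependent) top-$k$ set at $v$'s turn is a point the paper leaves implicit.
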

\begin{proof}
	Let $k_v$ be the number of edges incident to $v$ with positive weight when we reach $v$ in the for loop in ~\Cref{ln:weight_reduction}.
    If $k_v \leq \Delta_i / n^{\mem/4}$ then we are done so suppose $k_v > \Delta_i / n^{\mem/4}$.
    The probability that we do not sample any of the heaviest $k_v/n^{\mem/4}$ edges that are
    currently incident to $v$ is at most $\left( 1 - \frac{n^{1+\mem}}{|E|} \right)^{k_v/n^{\mem/4}}
    \leq \exp\left( -\frac{n^{1+\mem/2} \Delta_i}{|E|} \right) \leq \exp(-n^{\mem/2})$.
    Hence, we have $d_{i+1}(v) \leq k_v/n^{\mem/4} \leq d_i(v) / n^{\mem/4}$
    with probability at least $1 - \exp(-n^{\mem/2}) = 1 - \exp( - \omega(\log n) )$.
    Combining with Claim~\ref{claim:fail} and taking a union bound completes the proof.
\end{proof}

\begin{theorem}
	\TheoremName{matchingLargeMem}
    Suppose $\eta = O(n^{1+\mem})$ for any constant $\mem>0$.
	With probability $1 - O(cn/\mem) \exp(-n^{\mem})$, \Cref{alg:matching} terminates in
    $O(c/\mem)$ iterations and returns a 2-approximate maximum matching.
\end{theorem}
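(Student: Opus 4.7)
The plan is to combine the correctness guarantee inherited from the sequential local ratio algorithm (\Theorem{lr_matching}) with the degree-shrinkage bounds of \Lemma{first_phase} and \Lemma{phases}, and then take a union bound over iterations.

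For correctness, I would first observe that Algorithm~\ref{alg:matching2apx} is just an instantiation of the sequential local ratio algorithm: in each outer iteration the algorithm selects, in parallel, one positive-weight edge per vertex (namely the heaviest sampled edge incident to it), applies the weight-reduction rule, and pushes these edges onto $S$. Since the sequential algorithm is free to pick the next edge in any order, the randomized per-vertex selection is a permissible schedule, so the approximation guarantee of \Theorem{lr_matching} carries over verbatim: conditioned on no iteration triggering the failure on line~\ref{matching:fail}, the matching $M$ produced by unwinding $S$ has weight at least $\OPT/2$.

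For the iteration count, I would chain the two lemmas. By \Lemma{first_phase}, after the first successful iteration $\Delta_2 \le n^{c}$ with probability at least $1-(n+1)\exp(-n^{\mem})$. By \Lemma{phases}, each subsequent successful iteration shrinks the maximum surviving degree by a factor of $n^{\mem/4}$, so inductively $\Delta_i \le n^{c-(i-2)\mem/4}$ as long as we have not yet failed. Choosing $T = 2 + \lceil 4c/\mem \rceil = O(c/\mem)$ gives $\Delta_T < 1$, which forces $E_T = \emptyset$ and terminates the while loop.

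Finally, I would apply a union bound over the $O(c/\mem)$ iterations. Each lemma conditions on all prior iterations succeeding, so the standard conditional-probability chaining applies: summing the per-iteration failure probabilities gives $O(cn/\mem)\exp(-n^{\mem})$, where the tighter $\exp(-n^{\mem/2})$ bound coming from \Lemma{phases} can be absorbed into the same form up to constants hidden in $\mem$ (equivalently, one may apply \Lemma{phases} with the parameter $\mem$ rescaled by a constant factor). The only mild obstacle is careful bookkeeping of the conditioning between the first iteration and the subsequent iterations, but since each per-iteration bound is exponentially small and the number of iterations is only $O(c/\mem)$, this chaining is routine.
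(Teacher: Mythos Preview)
Your proposal is correct and takes essentially the same approach as the paper's proof: correctness via \Theorem{lr_matching}, then chaining \Lemma{first_phase} and \Lemma{phases} with a union bound over the $O(c/\mem)$ iterations. One minor slip: the $\exp(-n^{\mem/2})$ bound from \Lemma{phases} is \emph{weaker} (larger), not tighter, than $\exp(-n^{\mem})$, so it cannot simply be ``absorbed''; this mismatch is already present in the theorem statement itself, and the paper's own (very terse) proof does not resolve it either.
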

\begin{proof}
	By Lemma~\ref{lem:first_phase}, the first iteration does not fail and $d_2(v) \leq n^c$ for all $v \in V$.
    By Lemma~\ref{lem:phases}, after at most $O(c/\mem)$ iterations, the algorithm has not failed and
    the total number of edges with positive weight remaining is at most $8n^{1+\mem}$.
    At this point, the algorithm completes its last iteration of weight reduction then unwinds the stack and returns a matching.
    The correctness of the algorithm follows from \Theorem{lr_matching}.
\end{proof}

The preceding analysis assumes that $\mem = \omega(\log \log n / \log n)$.
In \Appendix{matchingSmallMem} we additionally handle the case $\mem=0$ (or equivalently, $\mem=\Theta(1/\log n)$).

\subsection{MapReduce implementation}

\begin{theorem}
\TheoremName{matchingMR}
There is a MapReduce algorithm that computes a 2-approximation to the maximum weight matching using $O(n^{1+\mem})$ space per machine and
$O(c / \mem)$ rounds (when $\mem = \omega(\log \log n / \log n)$) or $O(\log n)$ rounds (when $\mem = 0$).
\end{theorem}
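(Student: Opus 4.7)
My plan is to implement \Cref{alg:matching2apx} directly in MapReduce, closely mirroring the data layout used in the vertex cover case of \Theorem{fSCMR}. I will assign each of the $m = n^{1+c}$ edges to a uniformly random machine out of $M = n^{c-\mem}$, and independently assign each vertex $v$ (together with a slot that will later store its reduction $\Delta_v := w_{e_v}$) to a uniformly random machine. A standard Chernoff argument shows that each machine then holds $O(n^{1+\mem})$ edges whp, with at most $O(n^{1+\mem-c})$ additional vertex records. One machine is designated central and will drive the local-ratio step, also keeping the global stack $S$ (which never exceeds $O(n \cdot c/\mem) = O(n^{1+\mem})$ entries since each iteration pushes at most one edge per vertex, and there are $O(c/\mem)$ iterations by \Theorem{matchingLargeMem}).

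Within one iteration, the centralized block (lines~\ref{matching:for2}--\ref{matching:push}) acts on the sample $\bigcup_v E'_v$, which has size at most $8\eta = O(n^{1+\mem})$ whenever the iteration does not fail, so it fits on the central machine and is executed in a single round. The two intrinsically parallel pieces are the sampling in line~\ref{ln:sample} and the weight-update producing $E_{i+1}$ in line~\ref{matching:Ei}. For the sampling probability $p_i$, rather than aggregating $|E_i|$ across the cluster---which would require a broadcast/reduction tree of depth $\Theta(c/\mem)$ and inflate the total round count to $O((c/\mem)^2)$---I will simply use the deterministic upper bound $|E_i| \le n^{1+c} / n^{(i-2)\mem/4}$ implied by \Lemma{first_phase} and \Lemma{phases}. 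This bound is known a priori by every machine, so sampling is genuinely parallel; a quick re-examination of \Lemma{phases} shows the progress inequalities are preserved (with the constants absorbed into the whp failure term) when $p_i$ is computed from this upper bound instead of the true value.

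For the weight-update step, I will reuse the two-hop routing pattern from the vertex cover case of \Theorem{fSCMR}. After the centralized step produces $\{\Delta_v\}_{v \in V}$, the central machine sends each $\Delta_v$ to $v$'s home machine in the first round (total output $O(n \log n) = O(n^{1+\mem})$); in the next round each home machine forwards each $\Delta_v$ it holds to every machine that stores an edge incident to $v$. Since the edge-to-machine assignment is random, the destination set for a single vertex has size at most $\min(d(v), M)$, and summing over the $\approx n/M$ vertices stored on a home machine gives expected outgoing traffic $O(n^{1+\mem})$, which concentrates whp by Chernoff. After this, each edge machine holds both endpoint reductions for every local edge, can update weights locally, and can mark which edges survive to $E_{i+1}$. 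Thus one iteration uses $O(1)$ MapReduce rounds, and combined with the $O(c/\mem)$ iteration bound from \Theorem{matchingLargeMem} the overall algorithm runs in $O(c/\mem)$ rounds.

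The main obstacle I anticipate is the second leg of this routing under adversarial degree distributions: a home machine might hold a single very high-degree vertex $v$ that still requires $\Delta_v$ to be sent to $\min(d(v), M)$ distinct destinations. I will handle this with the same careful Chernoff bound on the random edge assignment that was used in \Theorem{fSCMR} (with the role of $|S_i| \le n$ played here by $d(v) \le n$). Finally, the $\mem = 0$ case follows by instantiating the same scheme with $\mem = \Theta(1/\log n)$: memory per machine becomes $O(n^{1+1/\log n}) = O(n)$ and the iteration count becomes $O(c \log n) = O(\log n)$, yielding the claimed $O(\log n)$-round algorithm.
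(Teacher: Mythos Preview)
Your overall architecture---random assignment of edges and vertices to machines, a stateful central machine running the local-ratio step on the $O(\eta)$-size sample, and a two-hop broadcast of the per-vertex reductions through the vertex home machines---is exactly what the paper does (the paper phrases the accumulated reduction as $\phi(v)$ rather than your per-round $\Delta_v$, but this is cosmetic). Two pieces of your proposal, however, do not work as written.

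First, your premise that aggregating $|E_i|$ needs a depth-$\Theta(c/\mem)$ tree is false. There are $M = n^{c-\mem}$ machines and $c \le 1$, so $M \le n \le n^{1+\mem}$; every machine can send its local surviving-edge count directly to the central machine in one round, and the total can be broadcast back in one more. The paper simply computes $|E_i|$ this way. Your substitute---sampling with $p_i = \eta/B_i$ for the a-priori bound $B_i = n^{1+c-(i-2)\mem/4}$---is not only unnecessary but incorrect. The key step in \Lemma{phases} is
\[
\exp\!\Big(-\frac{n^{1+\mem/2}\,\Delta_i}{|E_i|}\Big) \le \exp(-n^{\mem/2}),
\]
which uses $\Delta_i \ge |E_i|/n$. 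If you replace $|E_i|$ by $B_i$ you would need $\Delta_i \ge B_i/n$, and nothing prevents the actual $\Delta_i$ (hence $|E_i|$) from being a polynomial factor below this; in that regime $p_i$ is so small that a typical vertex receives no sampled edges and the iteration makes no progress at all. So ``a quick re-examination of \Lemma{phases}'' does not save this.

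Second, the $\mem = 0$ case does not follow by instantiating $\mem = \Theta(1/\log n)$ in \Theorem{matchingLargeMem}. With such $\mem$ one has $n^{\mem/2} = O(1)$, so the per-iteration failure bound $(n+1)\exp(-n^{\mem/2})$ in \Lemma{phases} is $\Omega(n)$ and the high-probability guarantee is vacuous. The paper handles $\mem=0$ by a genuinely different argument (\Theorem{matchingSmallMem} in \Appendix{matchingSmallMem}): one shows only the expectation bound $\E[\,|E_{i+1}| \mid E_i\,] \le 0.975\,|E_i|$ via a heavy/light vertex split, and then concludes termination in $O(\log n)$ iterations by Markov's inequality.
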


\begin{proof}
	As a sequential algorithm, the correctness of \Algorithm{matching} is established by \Theorem{matchingLargeMem} in the case $\mem$ is constant and \Theorem{matchingSmallMem} when $\mem=0$.
    We now show how to parallelize \Algorithm{matching}.
    
    As in \Theorem{fSCMR}, there are $n^{c-\mem}$ machines and each edge is assigned to one of the machines with $O(n^{1+\mem})$ edges per machine.
    Each vertex (and its adjacency list) is randomly assigned to one of the machines, so the space per machine is $O(n^{1+\mem})$ whp.
    Each edge $e$ stores its original weight and maintains a bit indicating if $e \in E_i$.
    
    The local ratio steps (lines \ref{matching:for2}-\ref{matching:push}) are performed sequentially on the central machine.
    The input to the central machine is the sets $E_v'$, together with the original weights of those edges.
    The total size of the input is proportional to $\sum_v \card{E_v'}$, which is guaranteed to be $O(\eta)$ by line \ref{matching:fail}.
    The central machine is stateful: it maintains values $\phi(v)$ for each vertex $v$, initially zero.
    The value $\phi(v)$ will always equal the total value of the weight reductions for all edges incident to $v$.
    Thus, for any edge $e=\set{u,v}$ that was never added to the stack, if its original weight is $w_e$, then its modified weight is $w_e-\phi(u)-\phi(v)$.
    The weight reduction operation for edge $e=\set{u,v}$ is then straightforward: simply decrease both $\phi(u)$ and $\phi(v)$ by $w_e-\phi(u)-\phi(v)$.
    
    The main step requiring parallelization is line \ref{matching:Ei}.
    After the for-loop terminates, the central machine sends $\phi(v)$ to each vertex $v$, and informs each edge whether it was added to the stack. If so, the edge's modified weight is definitely non-positive so it cannot belong to $E_{i+1}$.
    Afterwards, each vertex $v$ then sends $\phi(v)$ to each edge $e \in \delta(v)$.
    Each edge $e=\set{u,v}$ receives $\phi(u)$ and $\phi(v)$ and, if it was not added to the stack, computes its modified weight; if this is positive then $e$ belongs to $E_{i+1}$.
    Given the knowledge of whether $e \in E_i$ and given $\card{E_i}$ (which is easy to compute in parallel), an edge can independently perform the sampling operation on line \ref{matching:sample}.
\end{proof}
\paragraph{$b$-matching.}
Using similar techniques, one can obtain a $(3-2/b+\eps)$-approximation to $b$-matching; see \Appendix{bmatching}.

\comment{
\begin{proof}
	The while loop of algorithm consists of three phases: sampling (lines 5-9), local ratio (lines 10-12), and pruning (lines 13-14).
    
    Assuming the number of edges is available from the previous iteration, the sampling phase can be
    done in a single MapReduce round similar to \Cref{alg:minvc}. With the reducer sampling the
    edges adjacent to a node $u$ with probability $n^{1+\mem} / |E_i|$.
    
    The local ratio phase can be done in a single reduce round as the number of sampled edges is
    $O(n^{1+\mem})$ w.h.p. During the local ratio phase, edges adjacent to any edge added to the stack theirs their weight decreased. To propagate the weight reduction, we sum up for each vertex $v$ the total decrease $\phi(v)$ caused by edges added to the stack adjacent to $v$. This total weight reduction can then be sent to all its neighbours. 
    
    Using the weight reduction, we can prune the edges of the graph in a single MapReduce round by collecting all the edges adjacent to node $v$ and decreasing their edge weights by $\phi(v)$. We then remove any edges that have non-positive weight. This can be implemented in the same way as the pruning phase in \Cref{alg:minvc}.
    
    Each iteration of the while loop takes at most 4 MapReduce rounds. A final round is needed to unwind the stack and produce the matching $M$.
\end{proof}
}

\section{Vertex and edge colouring}
\SectionName{colouring}

In this section, we show how to colour a graph with maximum degree $\Delta$ using $(1+o(1))\Delta$ colours in a constant number of rounds. 
As is the case with MIS, $(\Delta+1)$-vertex colouring is one of the most fundamental problems of distributed computing. 
In the CREW PRAM model, both MIS and $(\Delta+1)$-vertex colouring have algorithms that can be easily translated to $O(\log n)$-round algorithms in the MapReduce model. 
Luby's randomized algorithms for both MIS~\cite{L85} and $(\Delta+1)$-vertex colouring~\cite{L88} have clean MapReduce implementations by using one machine per processor in the CREW PRAM algorithm. Within the CREW PRAM and LOCAL model, both MIS and $(\Delta+1)$-vertex colouring have well established lower bounds. However, we are unaware of non-trivial (i.e. non-constant) lower bounds on round complexity in the MapReduce model. 

Although our algorithm does not use the randomized local ratio or the hungry-greedy paradigm developed in the previous sections, we feel that it is of independent interest as it is the first constant round algorithm for $\Delta+o(\Delta)$-vertex colouring within the MapReduce model that we are aware of. 

Recall that the number of edges is $n^{1+c}$, so $\Delta \geq n^c$ since the maximum degree is at least the average degree.
Our algorithm is very simple; first we randomly partition the vertex set into $\kappa \coloneqq n^{(c - \mem)/2}$ groups.
Within each group, the maximum degree is $(1+o(1))\Delta / \kappa$ with high probability, so $(1+o(1))\Delta / \kappa + 1$ colours suffices to colour each group.
The colour of a vertex is determined by the group that it is in and its colour within each group.
Hence, this gives a colouring with $(1+o(1))\Delta$ colours. Furthermore, we show that the subgraph induced by each group has a small number of edges. As a corollary we get a MapReduce algorithm for $(1 + o(1))\Delta$-colouring.

\begin{algorithm}
    \caption{$(1+o(1))\Delta$-vertex colouring}
    \label{alg:vertex_colouring}
    \begin{algorithmic}[1] 
        \Procedure{VertexColouring}{Graph $G = (V,E)$}
        \State Randomly partition $V$ into $\kappa$ groups, $V_1, \ldots, V_{\kappa}$
        \State Define $E_i = E[V_i]$ and $G_i = (V_i, E_i)$
        \If{$\exists i$ such that $|E_i| > 13n^{1+\mem}$}
            \State \textbf{Fail}
        \EndIf
        \For{every vertex $v$ in parallel}
            \State if $v \in V_i$ then send $N(v) \cap V_i$ to central machine $i$
        \EndFor
        \For{every central machine $i$ in parallel}
            \State Let $\Delta_i$ be max degree of $G_i$
            \State Colour $G_i$ using the standard $(\Delta_i + 1)$-vertex colouring algorithm
            \For{every $v \in V_i$}
               \State Let $c_i(v)$ be colour of $v \in V_i$ in this colouring
               \State \textbf{Output} $(i, c_i(v))$ as colour for $v$.
            \EndFor
        \EndFor
        \EndProcedure
    \end{algorithmic}
\end{algorithm}


\begin{lemma}
  \label{lem:col_degree}
  For all $\mem > 0$ and $\kappa = n^{(c - \mem)/2}$ then we have $\Delta_i \leq (1+ n^{-\mem/2} \sqrt{6 \ln n})\Delta / \kappa$ for all $i$ with probability at least $1 - 1/n$.
\end{lemma}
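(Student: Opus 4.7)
The plan is to fix a single vertex $v$, bound the probability that its group-degree exceeds the target, and then union bound over the $n$ vertices (each vertex cares only about the group to which it happens to be assigned).

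I would interpret the random partition as assigning each vertex to a uniformly random index in $\{1,\dots,\kappa\}$ independently. Fix $v$. For any fixed $i$, the indicator $\mathbf{1}[u \in V_i]$ for a neighbor $u \in N(v)$ is Bernoulli$(1/\kappa)$, and these indicators are independent over $u$ and also independent of $v$'s own group assignment. Therefore, letting $i$ be the group containing $v$, the quantity $d_i(v) = \lvert N(v) \cap V_i\rvert$ is stochastically dominated by a $\mathrm{Binomial}(\Delta, 1/\kappa)$ random variable $Y$ with mean $\mu := \Delta/\kappa$.

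Next, set $\eps := n^{-\mem/2}\sqrt{6 \ln n}$, which is at most $1$ for all sufficiently large $n$ (since $6 \ln n = o(n^\mem)$). The multiplicative Chernoff bound then gives $\Pr[Y \geq (1+\eps)\mu] \leq \exp(-\eps^2 \mu / 3)$. Using $\Delta \geq n^c$ (which holds because $\Delta$ is at least the average degree, and the graph has $n^{1+c}$ edges) together with $\kappa = n^{(c-\mem)/2}$, we have $\mu \geq n^{(c+\mem)/2}$, so
\[
\frac{\eps^2 \mu}{3} \;\geq\; \frac{6 \ln n \cdot n^{-\mem} \cdot n^{(c+\mem)/2}}{3} \;=\; 2 \ln n \cdot n^{(c-\mem)/2} \;\geq\; 2 \ln n.
\]
Hence a fixed vertex violates the bound with probability at most $1/n^2$, and a union bound over all $n$ vertices gives a total failure probability of at most $1/n$, as claimed.

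There is no genuine obstacle: the only mild subtlety is that we must apply the Chernoff bound with an upper bound on the expectation rather than the exact expectation (since $d(v)$ may be smaller than $\Delta$), which is handled by the stochastic-domination step. Everything else is a direct substitution followed by a union bound.
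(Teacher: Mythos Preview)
Your proof is correct and follows essentially the same approach as the paper: fix a vertex, apply a multiplicative Chernoff bound with $\eps = n^{-\mem/2}\sqrt{6\ln n}$ and $\mu = \Delta/\kappa \geq n^{(c+\mem)/2}$, then union bound over the $n$ vertices. You are in fact slightly more careful than the paper in making explicit the stochastic-domination step (to handle $d(v) < \Delta$) and the condition $\eps \leq 1$ needed for the quadratic form of the Chernoff exponent.
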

\begin{proof}
   For a single vertex $v$ of degree $d$, the probability that $v$ has degree greater than $(1+\eps) \Delta / \kappa$ (in the subgraph induced by its group) is at most $\exp\left(-\eps^2 \frac{\Delta}{3 \kappa}\right)$ by a standard Chernoff bound (Theorem~\ref{thm:chernoff}).
   Since $\Delta / \kappa \geq n^{c/2 + \mem/2} \geq n^\mem$, we may take $\eps = n^{-\mem/2} \sqrt{6 \ln n}$ in which case the probability that a vertex $v$ has degree greater than $(1+\eps)\Delta/\kappa$ is at most $1/n^2$.
   Taking a union bound then yields the lemma.
\end{proof}

\begin{lemma}
   We have $|E_i| \leq 13n^{1+\mem}$ with probability at least $1 - n^2 \cdot  \exp\left( -n^{\mem} \right)$.
\end{lemma}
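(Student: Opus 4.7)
The plan is to show that for each $i \in [\kappa]$, $|E_i|$ concentrates around its mean $\E[|E_i|] = |E|/\kappa^2 = n^{1+\mem}$ via a two-level Chernoff argument, and then union-bound over the $\kappa \leq n$ groups. The indicators $Y_v := \mathbf{1}[v \in V_i]$ are independent Bernoulli$(1/\kappa)$, and crucially, conditional on $v \in V_i$ the degree $d_i(v) = |N(v) \cap V_i|$ is distributed as $\mathrm{Bin}(d(v), 1/\kappa)$, since the group assignments of $v$'s neighbors are independent of $v$'s.

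First I would establish a uniform per-vertex bound: a Chernoff tail bound---the standard multiplicative form when $d(v) \geq \kappa n^{\mem}$, and the high-deviation form $\Pr[\mathrm{Bin}(m,p) \geq k] \leq (emp/k)^k$ otherwise---gives
\[
\Pr\bigl[d_i(v) > \max(2d(v)/\kappa,\, C n^{\mem})\bigr] \leq \exp(-n^{\mem})
\]
for a sufficiently large constant $C$. A union bound over the $n$ vertices yields this for every $v$ simultaneously with probability $\geq 1 - n\exp(-n^{\mem})$. On this good event,
\[
2|E_i| = \sum_{v \in V_i} d_i(v) \;\leq\; 2 \sum_{v \in V_i} d(v)/\kappa \;+\; C n^{\mem}\,|V_i|.
\]
A Chernoff bound on $|V_i| \sim \mathrm{Bin}(n, 1/\kappa)$ gives $|V_i| \leq 2n/\kappa$, making the second term $O(n^{1+\mem}/\kappa) = o(n^{1+\mem})$. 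The first term is the weighted sum $\sum_v Y_v\,(d(v)/\kappa)$ of independent Bernoullis with mean $2n^{1+\mem}$ and maximum weight $\Delta/\kappa \leq n/\kappa$; a weighted Chernoff says it is $\leq 4 n^{1+\mem}$ with failure probability $\exp(-\Omega(\kappa n^{\mem}/\Delta)) \leq \exp(-\Omega(n^{(c+\mem)/2})) \leq \exp(-n^{\mem})$, where we use $\Delta \leq n$ and $c > \mem$. Combining these estimates gives $|E_i| \leq 13\,n^{1+\mem}$ for a single $i$ with failure probability $O(n)\exp(-n^{\mem})$; a union bound over the $\kappa \leq n$ groups establishes the stated $1 - n^2\exp(-n^{\mem})$ guarantee.

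The main delicate points are twofold. For light vertices with $d(v) \ll \kappa n^{\mem}$, the multiplicative Chernoff is too weak to give an $\exp(-n^{\mem})$ tail on $d_i(v)$, so the high-deviation tail form $(emp/k)^k$ is required to control all vertices uniformly. Second, the weighted Chernoff on $\sum_v Y_v\, d(v)/\kappa$ has exponent $\Omega(\kappa n^{\mem}/\Delta)$, which barely exceeds $n^{\mem}$ in the worst case $\Delta = \Theta(n)$---this is precisely where the standing assumption $c > \mem$ is invoked.
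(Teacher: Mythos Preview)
Your approach is correct and genuinely different from the paper's. The paper does not use a two-level Chernoff argument at all. Instead, it applies the Hajnal--Szemer\'edi theorem to the line graph of $G$: since the line graph has maximum degree at most $2\Delta-1$, one obtains a partition of the edge set into $2\Delta$ matchings $F_1,\dots,F_{2\Delta}$, each of size at least $n^{1+c}/(4\Delta)$. Within a matching the indicator events $\{e\in E_i\}$ are mutually independent (no two edges share a vertex), so a single Chernoff bound controls $|F_j\cap E_i|$ directly; summing over $j$ and union-bounding over $i,j$ finishes. Your route avoids Hajnal--Szemer\'edi entirely, trading that structural decomposition for two layers of concentration (first on the $d_i(v)$, then on the weighted sum $\sum_v Y_v\,d(v)/\kappa$) plus the observation that the two bounds can be combined by a union over bad events rather than by independence. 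This is more elementary and arguably more robust, but it requires more bookkeeping --- in particular the light-vertex case and the careful accounting that the weighted-Chernoff exponent just barely beats $n^{\mem}$. The paper's argument is shorter because the matching decomposition manufactures independence in one stroke.

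One small slip: your stated failure exponent for the weighted Chernoff is $\exp(-\Omega(\kappa n^{\mem}/\Delta))$, but the Chernoff exponent is $\Omega(\text{mean}/\text{max weight}) = \Omega(n^{1+\mem}\kappa/\Delta)$; with $\Delta\leq n$ this gives $\Omega(\kappa n^{\mem}) = \Omega(n^{(c+\mem)/2})$, matching your next step. The extraneous $/\Delta$ appears to be a typo, not a gap in the argument.
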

\begin{proof}
   By the Hajnal-Szemer\'{e}di Theorem (Theorem~\ref{thm:hajnal}) applied to the line graph of $G$, we may partition the edges into $2\Delta$ sets $F_1, \ldots, F_{2\Delta}$ such that the following holds for all $j \in [2\Delta]$:
   \begin{itemize}
      \item $|F_j| \geq n^{1+c} / (4\Delta)$; and
      \item if $e, e' \in F_j$ and $e \neq e'$ then $e \cap e' = \emptyset$.
   \end{itemize}
   
   For analysis, fix $E_i$ and an edge class $F_j$. For each $e \in F_j$, let $X_{j,e}$ be the indicator random variable which is 1 if edge $e$ ends up in $E_i$.
   Then $\E \sum_{e \in F_j} X_{j,e} = |F_j| / \kappa^2$.
   Since any distinct $e, e' \in F_j$ do not share a common vertex, it follows that $\{X_{j,e}\}_{e \in F_j}$ are mutually independent random variables.
   Therefore, we may apply a Chernoff bound (Theorem~\ref{thm:chernoff}) to get
   \begin{multline*}
      \Pr\Big[\sum_{e \in F_j} X_{j,e} > 13 |F_j| / \kappa^2\Big]
      \leq
      \exp\left( -12 \frac{|F_j|}{3\kappa^2} \right) \\
      \leq
      \exp\left( -\frac{n^{1+c}}{\Delta \kappa^2} \right)
      \leq 
      \exp\left( -\frac{n^{c}}{\kappa^2} \right)
      =
      \exp\left( -n^{\mem} \right).
   \end{multline*}
   Taking a union bound over all $i$ and $j$ gives the claim.
\end{proof}

\begin{corollary}
   Algorithm~\ref{alg:vertex_colouring} returns a $\left( 1 + n^{-\mem/2} \sqrt{6 \ln n} + n^{-\mem} \right) \Delta$-colouring of $G$ with high probability.
\end{corollary}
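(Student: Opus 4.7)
The plan is to simply combine the two preceding lemmas with a straightforward counting argument, as the hard work has already been done in bounding $\Delta_i$ and $\card{E_i}$.

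First I would verify that the output is a proper coloring of $G$. Each vertex $v \in V_i$ receives the label $(i, c_i(v))$. For any edge $\{u,v\} \in E$, either the endpoints lie in different groups (in which case the first coordinates differ) or they both lie in some $V_i$ (in which case they form an edge of $G_i$, so the $(\Delta_i+1)$-coloring subroutine assigns $c_i(u) \neq c_i(v)$). Either way the endpoints receive distinct labels.

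Next I would count the total number of distinct labels. By \Lemma{col_degree}, with probability at least $1 - 1/n$ every group satisfies
\[
   \Delta_i \;\leq\; \bigl(1 + n^{-\mem/2}\sqrt{6\ln n}\bigr)\Delta/\kappa,
\]
so each group uses at most $\Delta_i + 1$ colors, and the total label count is bounded by
\[
   \kappa\cdot\bigl(\Delta_i + 1\bigr) \;\leq\; \bigl(1 + n^{-\mem/2}\sqrt{6\ln n}\bigr)\Delta + \kappa.
\]
To absorb the additive $\kappa$ term into the stated approximation factor, I would use that the maximum degree is at least the average degree, giving $\Delta \geq n^c$, hence
\[
   \kappa \;=\; n^{(c-\mem)/2} \;\leq\; n^{c} \cdot n^{-(c+\mem)/2} \;\leq\; \Delta \cdot n^{-\mem},
\]
where the last inequality uses the standing assumption $c \geq \mem$. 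Substituting yields the target bound of $\bigl(1 + n^{-\mem/2}\sqrt{6\ln n} + n^{-\mem}\bigr)\Delta$ colors.

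Finally, the algorithm can fail only at the explicit failure check when some $\card{E_i} > 13 n^{1+\mem}$, but the second lemma rules this out with probability at least $1 - n^2\exp(-n^{\mem})$. A union bound over both failure events (the group-sizes being well-behaved, and the per-group degrees being small) keeps the overall failure probability at $o(1)$, giving the ``with high probability'' conclusion. There is no substantive obstacle here; the only thing to be careful about is correctly relating $\kappa$ to $\Delta$ via the inequality $\Delta \geq n^c$, which is what makes the additive $+\kappa$ term fit inside the $n^{-\mem}\Delta$ slack.
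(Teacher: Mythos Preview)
Your proposal is correct and follows essentially the same approach as the paper: bound the number of colours by $\kappa(\Delta_i+1)$, apply \Lemma{col_degree}, and absorb the additive $\kappa$ into $n^{-\mem}\Delta$. You are actually more thorough than the paper's proof, which leaves the inequality $\kappa \leq n^{-\mem}\Delta$ implicit and does not explicitly invoke the second lemma to rule out the algorithm's failure check; both of these you spell out correctly.
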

\begin{proof}
   We can colour each $G_i$ with $\Delta_i + 1$ colours using the standard greedy colouring algorithm. 
   By Lemma~\ref{lem:col_degree} $\Delta_i \leq (1+ n^{-\mem/2} \sqrt{6 \ln n})\Delta / \kappa$ with probability at least $1 - 1/n$.
   Hence, we use at most $\kappa(\Delta_i + 1) \leq \left( 1 + n^{-\mem/2} \sqrt{6 \ln n} + n^{-\mem} \right) \Delta$ colours.
\end{proof}

\begin{theorem}
\label{thm:vtxcolouring}
If $\mem = \omega(\log \log n / \log n)$ and the memory on each machine is $O(n^{1 + \mem})$ then there is a MapReduce algorithm for $(1 + o(1))\Delta$-colouring which succeeds with high probability.
\end{theorem}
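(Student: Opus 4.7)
The plan is to show that Algorithm~\ref{alg:vertex_colouring} admits a faithful MapReduce implementation whose resource bounds follow almost immediately from the two preceding lemmas. First, the random partition of $V$ into $\kappa = n^{(c-\mem)/2}$ groups can be realized in a single round by having each vertex independently pick its group label (or use a shared hash seed). Since the total input has $n^{1+c}$ edges and each machine holds $O(n^{1+\mem})$ memory, there are $\Omega(n^{c-\mem})$ machines available, and $\kappa = n^{(c-\mem)/2} \leq n^{c-\mem}$, so we can earmark $\kappa$ of them as ``central machines'', one per group.

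In the next round, each vertex $v$ assigned to group $i$ forwards $N(v)\cap V_i$ to central machine $i$. The incoming traffic at machine $i$ is $2|E_i|$, so the memory constraint reduces exactly to the bound $|E_i|\leq 13n^{1+\mem}$ supplied by the previous lemma; the trivial bound $|V_i|\leq n$ handles the vertex payload. The preceding lemma gives this bound with probability at least $1 - n^2\exp(-n^{\mem})$, and here is where the hypothesis $\mem = \omega(\log\log n/\log n)$ is used: it ensures $n^{\mem} = \omega(\log n)$, so that the failure probability is $o(1)$ even after the union bound over $\kappa$ groups.

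Each central machine $i$ then locally runs the standard sequential greedy $(\Delta_i+1)$-colouring on $G_i$ and outputs $(i, c_i(v))$ as the final colour of each $v\in V_i$. This is a single round of purely local computation. By Lemma~\ref{lem:col_degree}, with probability at least $1-1/n$ we have $\Delta_i \leq (1+n^{-\mem/2}\sqrt{6\ln n})\Delta/\kappa$ for every $i$ simultaneously, so the total number of colours used is
\[
\kappa(\Delta_i+1) \;\leq\; (1 + n^{-\mem/2}\sqrt{6\ln n})\,\Delta + \kappa \;=\; (1+o(1))\Delta,
\]
using $\kappa = n^{(c-\mem)/2}$ and $\Delta \geq n^c$ to get $\kappa/\Delta = o(1)$.

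The main obstacle is not the colouring correctness (which follows from the Hajnal--Szemer\'edi-style analysis already done in the lemmas) but the memory accounting on the central machines, and this is resolved precisely by the $|E_i|$ bound combined with the hypothesis on $\mem$. A union bound over the $O(1)$ probabilistic events gives overall success with high probability in $O(1)$ rounds and $O(n^{1+\mem})$ memory per machine, as required.
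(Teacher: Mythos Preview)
Your proposal is correct and follows essentially the same approach as the paper: the theorem is an immediate consequence of the preceding corollary (which bounds the number of colours by $(1+n^{-\mem/2}\sqrt{6\ln n}+n^{-\mem})\Delta$) together with the observation that Algorithm~\ref{alg:vertex_colouring} admits a constant-round MapReduce implementation with $O(n^{1+\mem})$ memory per machine, which is exactly what the $|E_i|\leq 13n^{1+\mem}$ lemma guarantees. Your explicit identification of where the hypothesis $\mem=\omega(\log\log n/\log n)$ is used---both to make $n^{-\mem/2}\sqrt{6\ln n}=o(1)$ and to make the $n^2\exp(-n^{\mem})$ failure probability vanish---is a nice clarification that the paper leaves implicit.
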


\begin{remark}
$(1+o(1))\Delta$-edge colouring can be achieved with almost the same algorithm, partitioning the edges into groups instead of vertices and colouring the groups with the algorithm of Misra and Gries~\cite{MG92}.
\end{remark}

\begin{theorem}
\label{thm:edgecolouring}
There are constant-round MapReduce algorithms for $(1+o(1))\Delta$-vertex colouring and $(1+o(1))\Delta$-edge colouring that succeed w.h.p.
\end{theorem}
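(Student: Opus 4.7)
The vertex colouring claim is already established by Theorem~\ref{thm:vtxcolouring}, which gives a constant-round $(1+o(1))\Delta$-vertex colouring algorithm whenever $\mem = \omega(\log \log n/\log n)$. So the main task is to prove the edge colouring claim, following the recipe sketched in the remark: partition the edge set (rather than the vertex set) randomly into $\kappa$ groups, colour each group on its own machine using the Misra--Gries $(\Delta+1)$-edge colouring algorithm, and concatenate the group index with the within-group colour.

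The plan is to set $\kappa = n^{c-\mem}$, so that each group has expected size $m/\kappa = n^{1+\mem}$, matching the per-machine memory. Randomly assign each edge independently to one of the $\kappa$ groups and let $E_i$ denote group $i$ and $G_i = (V, E_i)$. The algorithm ships $E_i$ (together with its incident vertices) to machine $i$, which runs Misra--Gries on $G_i$ to produce a $(\Delta_i+1)$-edge colouring $c_i$, where $\Delta_i$ is the maximum degree of $G_i$. The final colour of an edge $e \in E_i$ is the pair $(i, c_i(e))$, so the total number of colours used is $\kappa(\Delta_i+1)$.

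Two concentration facts drive correctness. First, $|E_i|$ is a sum of i.i.d.\ Bernoullis with mean $n^{1+\mem}$, and a standard Chernoff bound (as in Lemma~\ref{lem:col_degree}) gives $|E_i| = O(n^{1+\mem})$ for all $i$ w.h.p., so each subgraph fits on its machine. Second, for each vertex $v$, the degree $d_i(v)$ of $v$ in $G_i$ is a sum of $d(v)$ independent Bernoullis with mean $d(v)/\kappa \leq \Delta/\kappa$. Since the average degree is $n^c$ we have $\Delta \geq n^c$, hence $\Delta/\kappa \geq n^\mem = \omega(\log n)$ under the assumption $\mem = \omega(\log \log n/\log n)$. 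A Chernoff bound then gives $d_i(v) \leq (1+ n^{-\mem/2}\sqrt{6\ln n})\,\Delta/\kappa$ with probability at least $1 - 1/n^3$, and a union bound over the $n \kappa \leq n^2$ choices of $(v,i)$ yields $\Delta_i \leq (1 + o(1))\Delta/\kappa$ for every $i$, w.h.p. Consequently the total number of colours is $\kappa(\Delta_i + 1) \leq (1+o(1))\Delta + \kappa = (1+o(1))\Delta$, using $\kappa = n^{c-\mem} \leq n^{-\mem}\Delta = o(\Delta)$.

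Finally, the round complexity is constant: one round to compute group assignments and route each edge (together with the identities of its endpoints) to its designated machine, one round for each machine to run Misra--Gries locally, and one round to emit the labels. The main obstacle is the joint concentration of $\Delta_i$ over all $(v,i)$ pairs; this is exactly where the assumption $\mem = \omega(\log\log n/\log n)$ enters, to guarantee that $\Delta/\kappa \geq n^\mem$ is large enough that a single Chernoff estimate, inflated by only a $1+o(1)$ factor, survives the union bound.
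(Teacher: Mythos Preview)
Your proposal is correct and follows exactly the approach the paper sketches in its remark: randomly partition the edges into $\kappa$ groups, run Misra--Gries locally, and output the pair (group, local colour). One small fix: because each vertex now appears in all $\kappa$ groups, the union bound in your degree-concentration step runs over $n\kappa$ pairs rather than $n$ vertices, so with $\eps = n^{-\mem/2}\sqrt{6\ln n}$ the per-pair bound is only $1/n^{2}$, not $1/n^{3}$; replacing $6$ by $9$ (or any larger constant, still $o(1)$) repairs this without affecting the conclusion.
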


\appendix
\clearpage

\section{Improved algorithm for maximal independent set}
\AppendixName{improvedMIS}

In this section, we discuss how to improve the algorithm in \Section{MIS} to obtain an algorithm that rounds in $O(c/\mem)$ rounds instead of $O(1/\mem^2)$.
The basic idea is that if we sample $n^{(i+1)\alpha}$ vertices from the set of vertices that have degree $[n^{1-i\alpha}, n^{1-(i-1)\alpha})$ and proceed as in Algorithm~\ref{alg:mis}, then the degree of almost all vertices decreases by a factor of $n^{\alpha}$.
In particular, this implies that the number of edges decrease by a factor of $n^{\alpha}$ which gives a algorithm using $O(c/\alpha)$ rounds.
We will set $\alpha = \mem/8$ in the analysis below.

\begin{algorithm}
    \caption{Improved algorithm for maximal independent set. Blue lines are centralized.}
    \label{alg:mis_v2}
    \begin{algorithmic}[1] 
        \Procedure{MIS2}{$G = (V,E)$}
        	\State $I_1 \gets \{ v \colon d(v)=0 \}, E_1 \gets E, V_1 \gets V, k \gets 1$
            \While{$|E_k| \geq n^{1+\mem}$}
            	\State $I \gets I_k$
                \State Let $V_{k,i} = \{ v \in V_k \colon n^{1-i\alpha} \leq d_I(v) < n^{1-(i-1)\alpha} \}$, for $i \in \{1,\ldots,1/\alpha\}$.
            	\For{$i = 1, \ldots, 1/\alpha$}  \label{mis2:for1}
                	\State Draw $n^{(i+1)\alpha}$ groups of $n^{\mem/2}$ vertices from $V_{k,i}$, say $\cX_{i,1}, \ldots, \cX_{i,n^{(i+1)\alpha}}$  \label{mis2:draw}
                \EndFor
                \CFOR{$i = 1, \ldots, 1/\alpha$}  \label{mis2:for2}
                	\CFOR{$j = 1, \ldots, n^{(i+1)\alpha}$}
                      \CIFT{$\exists v_{i,j} \in \cX_{i,j}$ such that $d_I(v_{i,j}) \geq n^{1-(i+1)\alpha}$}
                          \CSTATE $I \gets I \cup \{v_{i,j}\}$  \label{mis2:updateI}
                      \ENDIF
                    \ENDFOR
                \ENDFOR
                \State $I_{k+1} \gets I, V_{k+1} \gets \{v \colon d_{I_{k+1}}(v) > 0\}, E_{k+1} \gets E[V_{k+1}]$
                \State $k \gets k + 1$
            \EndWhile
            \State Find maximal independent set in $(V_k, E_k)$ and add it to $I$ \Comment{Total edges is $< n^{1+\mem}$}
            \State \textbf{return} $I$
        \EndProcedure
    \end{algorithmic}
\end{algorithm}

In the algorithm, the set $V_{k,i}$ corresponds to the set of vertices in iteration $k$ with degree between $n^{1-i\alpha}$ and $n^{1-(i-1)\alpha}$.
We first show that most vertices in $V_{k,i}$ lose a significant number of their neighbours.
This will be important to show good progress in removing the edges.
\begin{lemma}
	\label{lem:Vki}
	Let $V_{k,i}$ be as in the algorithm and $V_{k,i}' = \{v \in V_{k,i} : d_{I_{k+1}}(v) \geq n^{1 - (i+1)\alpha}\}$.
   In other words, $V_{k,i}'$ is the set of vertices within $V_{k,i}$ that did not decrease their degree by at least a factor of $n^{i\alpha}$.
   Then $|V_{k,i}'| \leq |V_{k,i}| / n^{\mem/4}$ w.h.p.
\end{lemma}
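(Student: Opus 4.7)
The plan is to adapt the argument of \Lemma{mis_progress1} with parameters appropriate to the new degree threshold $n^{1-(i+1)\alpha}$ and group count $n^{(i+1)\alpha}$. Fix $k$ and $i$, and index the central phase for our fixed $i$ by $j = 1,\ldots,n^{(i+1)\alpha}$. Let $I^{(j)}$ denote the value of $I$ just before group $\cX_{i,j}$ is processed, and set $H_j = \{v \in V_{k,i} : d_{I^{(j)}}(v) \geq n^{1-(i+1)\alpha}\}$. The sequence $|H_j|$ is non-increasing in $j$, and $V_{k,i}' \subseteq H_{\text{end}}$, where $H_{\text{end}}$ is $H$ at the end of our $i$'s processing (since later $i'$ iterations only lower $d_I$). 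So it suffices to show that $|H_{\text{end}}| \leq |V_{k,i}|/n^{\mem/4}$ w.h.p.

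The first step is to set up a ``good event'' $\mathcal{G}$ by a Chernoff-style estimate on the sampling. Conditioned on $\cX_{i,1},\ldots,\cX_{i,j-1}$ the group $\cX_{i,j}$ is still an i.i.d.\ sample of $n^{\mem/2}$ vertices from $V_{k,i}$, and in particular is independent of $H_j$. Hence whenever $|H_j| > |V_{k,i}|/n^{\mem/4}$, the probability that $\cX_{i,j}$ misses $H_j$ entirely is at most $(1 - 1/n^{\mem/4})^{n^{\mem/2}} \leq \exp(-n^{\mem/4})$. A union bound over the at most $n$ groups shows that $\Pr[\mathcal{G}] \geq 1 - n \exp(-n^{\mem/4})$, where $\mathcal{G}$ is the event that every group with $|H_j|$ still above threshold contains a still-heavy vertex, and therefore triggers an addition on the corresponding line of the central phase.

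The second step is to derive the bound by contradiction on $\mathcal{G}$. Suppose $|H_{\text{end}}| > |V_{k,i}|/n^{\mem/4}$. Then by monotonicity $|H_j| > |V_{k,i}|/n^{\mem/4}$ for every $j$, so on $\mathcal{G}$ every group triggers a successful addition, producing $n^{(i+1)\alpha}$ additions in total. Each such addition places into $N^+(I)$ at least $n^{1-(i+1)\alpha}$ previously-alive vertices --- namely, the alive neighbours of the vertex being added --- and these batches are pairwise disjoint across additions because once a vertex enters $N^+(I)$ it is no longer alive at any later moment. Summing, we insert at least $n^{(i+1)\alpha} \cdot n^{1-(i+1)\alpha} = n$ distinct new vertices into $N^+(I) \subseteq V$, forcing $V = N^+(I)$ and hence $H_{\text{end}} = \emptyset$, which contradicts $|H_{\text{end}}| \geq 1$. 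Therefore $|V_{k,i}'| \leq |H_{\text{end}}| \leq |V_{k,i}|/n^{\mem/4}$ on $\mathcal{G}$.

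The step I expect to need the most care is the probabilistic one: because $H_j$ depends on the randomness of $\cX_{i,1},\ldots,\cX_{i,j-1}$, one cannot directly bound $\Pr[\cX_{i,j} \cap H_j = \emptyset]$ without conditioning on the past. Since the groups are drawn independently from $V_{k,i}$, this conditioning is clean and the same sampling estimate applies group-by-group; after this the analysis reduces to the counting/exhaustion argument already used in \Lemma{mis_progress1}.
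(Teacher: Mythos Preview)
Your proposal is correct and follows essentially the same approach as the paper's proof: both argue that whenever the count of still-heavy vertices in $V_{k,i}$ exceeds the threshold $|V_{k,i}|/n^{\mem/4}$, each sampled group hits a heavy vertex with probability at least $1-\exp(-n^{\mem/4})$, and since each successful addition removes at least $n^{1-(i+1)\alpha}$ alive vertices, there can be at most $n^{(i+1)\alpha}$ such additions before the graph is exhausted. Your write-up is somewhat more careful than the paper's---you make explicit the conditioning issue (that $H_j$ depends on earlier groups but the $\cX_{i,j}$ are drawn independently from $V_{k,i}$) and the disjointness of the removed batches---but these are refinements of presentation, not a different argument.
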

\begin{proof}
	Suppose we are currently processing group $\cX_{i,j}$.
    If the number of vertices $v$ in $V_{k,i}$ with $d_I(v) \geq n^{1-(i+1)\alpha}$ is at most
    $|V_{k,i}|/n^{\mem/4}$ then we are done.
    Otherwise, with probability at least $1-(1-n^{-\mem/4})^{n^{\mem/2}} \geq 1 - \exp(-n^{\mem/4})$, the group $\cX_{i,j}$ contains a vertex $v_{i,j}$ with $d_I(v_{i,j}) \geq n^{1-(i+1)\alpha}$ and we add it to the independent set.
    This can only happen at most $n^{(i+1)\alpha}$ times because if it happens $n^{(i+1)\alpha}$ times then there are no more vertices in the graph. Since we sampled precisely $n^{(i+1)\alpha}$ groups of vertices, we expect $|V_{k,i}'| \leq |V_{k,i}| / n^{\mem/4}$ w.h.p.
\end{proof}
By choosing $\alpha = \mem/8$, we can guarantee that every vertex class $V_{k,i}$ decrease their degree by a factor of $n^{\alpha}$ with high probability. This gives the following lemma.
\begin{lemma}
	We have $|E_{k+1}| \leq 2|E_k|/n^{\mem/8}$ w.h.p.
\end{lemma}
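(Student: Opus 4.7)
The plan is to express $2|E_{k+1}| = \sum_v d_{I_{k+1}}(v)$, partition the sum according to the classes $V_{k,i}$, and then control each class using \Lemma{Vki}. Two preliminary observations make the representation valid: (i) $v \in V_{k+1}$ iff $d_{I_{k+1}}(v) > 0$, and for such $v$ the quantity $d_{I_{k+1}}(v)$ coincides with the degree of $v$ in the induced subgraph $E_{k+1}$, so $2|E_{k+1}| = \sum_v d_{I_{k+1}}(v)$; and (ii) $\{V_{k,i}\}_{i=1}^{1/\alpha}$ partitions $V_k$ (vertices outside $V_k$ contribute $0$ to the sum since $I_k \subseteq I_{k+1}$).

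Fix $i$. Split the vertices of $V_{k,i}$ into those in $V_{k,i}'$ and those outside. Monotonicity $I_k \subseteq I_{k+1}$ gives $d_{I_{k+1}}(v) \leq d_{I_k}(v) < n^{1-(i-1)\alpha}$ for every $v \in V_{k,i}$, whereas the definition of $V_{k,i}'$ supplies the sharper bound $d_{I_{k+1}}(v) < n^{1-(i+1)\alpha}$ for $v \in V_{k,i} \setminus V_{k,i}'$. Plugging in \Lemma{Vki}'s bound $|V_{k,i}'| \leq |V_{k,i}|/n^{\mem/4}$ and using $\alpha = \mem/8$ (so that $\mem/4 = 2\alpha$), both pieces contribute $|V_{k,i}| \cdot n^{1-(i+1)\alpha}$, yielding
\[
\sum_{v \in V_{k,i}} d_{I_{k+1}}(v) \;\leq\; 2\,|V_{k,i}| \cdot n^{1-(i+1)\alpha}.
\]

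Next, use the lower bound $d_{I_k}(v) \geq n^{1-i\alpha}$ valid for $v \in V_{k,i}$ to get $|V_{k,i}| \cdot n^{1-(i+1)\alpha} \leq n^{-\alpha} \sum_{v \in V_{k,i}} d_{I_k}(v)$. Summing over $i \in \{1,\ldots,1/\alpha\}$ and using $2|E_k| = \sum_v d_{I_k}(v)$ gives $2|E_{k+1}| \leq 4\,n^{-\alpha}|E_k|$, i.e., $|E_{k+1}| \leq 2|E_k|/n^{\mem/8}$ as claimed. The high-probability guarantee follows by a union bound of \Lemma{Vki}'s failure probability over the $1/\alpha = O(1/\mem)$ classes.

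The only nontrivial piece of engineering is the choice $\alpha = \mem/8$: it is calibrated so that the factor $n^{-\mem/4}$ saved by \Lemma{Vki} on $|V_{k,i}'|$ exactly offsets the gap between the crude bound $n^{1-(i-1)\alpha}$ and the sharper bound $n^{1-(i+1)\alpha}$, leaving a uniform $n^{-\alpha}$ shrinkage per class; once this is in place the remaining work is routine summation.
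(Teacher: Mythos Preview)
Your proof is correct and follows essentially the same route as the paper: partition the degree sum $2|E_{k+1}|=\sum_v d_{I_{k+1}}(v)$ over the classes $V_{k,i}$, bound the contribution of $V_{k,i}'$ versus $V_{k,i}\setminus V_{k,i}'$ via \Lemma{Vki}, and compare against $2|E_k|=\sum_i d_{I_k}(V_{k,i}) \geq \sum_i |V_{k,i}|\,n^{1-i\alpha}$ to extract the $n^{-\alpha}$ factor. Your additional remarks on why the partition is exhaustive and on the calibration of $\alpha=\mem/8$ are helpful but do not change the argument.
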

This lemma implies that after $O(c/\mem)$ rounds, we can fit all the data onto a single machine.
\begin{proof}
	Let $S \subseteq V$ and define $d_I(S) = \sum_{v\in S} d_I(v)$. 
	We will lower bound $d_{I_k}(V_{k,i})$ and upper bound $d_{I_{k+1}}(V_{k,i})$.
    We have $d_{I_k}(V_{k,i}) \geq |V_{k,i}| n^{1-i\alpha}$.
    On the other hand,
    \begin{align*}
    d_{I_{k+1}}(V_{k,i})
     &~\leq~ |V_{k,i}'| n^{1-(i-1)\alpha} + |V_{k,i} \setminus V_{k,i}'| n^{1-(i+1)\alpha} \\
     &~\leq~ |V_{k,i}| n^{-\mem/4} n^{1-(i-1)\alpha} + |V_{k,i}| n^{1-(i+1)\alpha} \\
     &~=~ 2 |V_{k,i}| n^{1-(i+1)\mem/8},
    \end{align*}
    by Lemma~\ref{lem:Vki} and using $\alpha = \mem/8$.
    Since the sets $V_{k,i} \cap V_{k+1}$ form a partition of $V_{k+1}$, we have
    \begin{align*}
    2 |E_{k+1}|
      &= \sum_i d_{I_{k+1}}(V_{k,i} \cap V_{k+1}) \\
      &= \sum_i d_{I_{k+1}}(V_{k,i})
      \leq 2 \sum_i |V_{k,i}| n^{1-(i+1)\mem/8} \\
      &\leq 2 \sum_i d_{I_k}(V_{k,i}) n^{-\mem/8} \\
      & = 4 |E_k| n^{-\mem/8},
    \end{align*}
    proving the claim.
\end{proof}

\begin{theorem}
\TheoremName{MIS2}
There is a MapReduce algorithm to find a maximal independent set in $O(c/\mem)$ rounds
and $O(n^{1+\mem})$ space per machine, w.h.p.
\end{theorem}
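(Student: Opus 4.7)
The plan is to combine the two ingredients already established: the structural progress lemma which shows $|E_{k+1}| \leq 2|E_k|/n^{\mem/8}$ w.h.p., and a MapReduce simulation of each iteration of Algorithm~\ref{alg:mis_v2} in a constant number of rounds. Starting from $|E_1| \leq n^{1+c}$, applying the progress lemma iteratively gives $|E_k| < n^{1+\mem}$ after $O(c/\mem)$ iterations of the while loop, at which point the entire remaining graph fits on a single machine and a maximal independent set can be computed directly. Taking a union bound over these $O(c/\mem)$ iterations keeps the overall failure probability small.

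Next I would describe the distributed implementation of a single iteration along the same lines as the proofs of \Theorem{fSCMR} and \Theorem{matchingMR}. Vertices (with their adjacency lists) are randomly assigned to the $M = n^{c-\mem}$ machines; a Chernoff bound gives $O(n^{1+\mem})$ space per machine w.h.p. Each vertex $v$ stores its current value of $d_{I_k}(v)$, so it can locally determine which class $V_{k,i}$ it belongs to. The sampling step on line~\ref{mis2:draw} is then embarrassingly parallel: each machine samples its own vertices. The centralized block (lines~\ref{mis2:for2}--\ref{mis2:updateI}) is executed on a single machine; after the block finishes, the coordinator broadcasts (via an $n^{\mem}$-ary tree of constant depth, as in earlier proofs) the small set $I_{k+1} \setminus I_k$ of newly added vertices, and each vertex $v$ then asks its neighbours whether they lie in $N^+(I_{k+1})$ in order to update $d_{I_{k+1}}(v)$ for the next iteration.

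The main technical point to verify is that the central machine's input fits in $O(n^{1+\mem})$ words. For group index $i$, the coordinator receives $n^{(i+1)\alpha}$ sampled sets of $n^{\mem/2}$ vertices, each of degree at most $n^{1-(i-1)\alpha}$; the total payload for group $i$ is therefore at most
\[
n^{(i+1)\alpha} \cdot n^{\mem/2} \cdot n^{1-(i-1)\alpha} \;=\; n^{1+\mem/2+2\alpha}.
\]
Summing over the $1/\alpha$ values of $i$ and plugging in $\alpha = \mem/8$ yields $O(n^{1+3\mem/4}/\mem) = O(n^{1+\mem})$, so the central machine fits comfortably within its budget. This, combined with the $O(c/\mem)$ bound on the number of iterations and the constant number of MapReduce rounds required to implement each iteration, establishes the theorem.

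The principal obstacle is the bookkeeping around the broadcast step: the naive approach of shipping $I_{k+1}$ to all $M$ machines could blow up the output of the coordinator beyond its $O(n^{1+\mem})$ budget. The fix, as used in the matching proof, is to broadcast only the increment $I_{k+1} \setminus I_k$ through a shallow aggregation tree of constant depth, so a single iteration still costs only $O(1)$ MapReduce rounds and the overall bound of $O(c/\mem)$ rounds is preserved.
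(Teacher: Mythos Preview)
Your approach is essentially the paper's: invoke the edge-decay lemma to bound the number of while-loop iterations by $O(c/\mem)$, send the sampled groups (together with their alive neighbourhoods) to a central machine, and verify the central machine's input is $O(n^{1+\mem})$ via the computation
\[
\sum_{i=1}^{1/\alpha} n^{(i+1)\alpha}\cdot n^{\mem/2}\cdot n^{1-(i-1)\alpha}
\;=\; (1/\alpha)\,n^{1+\mem/2+2\alpha},
\]
which with $\alpha=\mem/8$ is $O(n^{1+\mem})$. That part is fine and matches the paper's proof sketch exactly.

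The gap is in your update step. You propose to disseminate $I_{k+1}\setminus I_k$ via ``an $n^{\mem}$-ary tree of constant depth,'' citing the matching proof. But an $n^{\mem}$-ary tree spanning $M=n^{c-\mem}$ machines has depth $(c-\mem)/\mem$, not $O(1)$; with $O(c/\mem)$ while-loop iterations this would give $O((c/\mem)^2)$ rounds, which is exactly the blow-up the paper incurs for general set cover in \Theorem{fSCMR} and is trying to avoid here. Moreover, the matching proof does \emph{not} use a broadcast tree: the central machine there sends $\phi(v)$ directly to the single machine hosting vertex $v$, which is only $O(n)$ words of total output and takes one round.

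The fix is the same direct-addressing trick used in the proof of the simpler MIS theorem: the central machine knows $N^+(I_{k+1})$ (it has the adjacency lists of every vertex it added), so it sends each vertex $v$ a single bit indicating whether $v\in N^+(I_{k+1})$. That is $O(n)$ words of output, well within budget. Each surviving vertex then queries its neighbours for their bits and recomputes $d_{I_{k+1}}(v)$ locally. This costs $O(1)$ rounds per iteration, and the $O(c/\mem)$ total follows.
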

\begin{proof}[Proof sketch]
The pseudocode in \Algorithm{mis_v2} is written in the manner of a sequential algorithm, but it is easy to parallelize.
The main step that is performed in parallel is the sampling (lines \ref{mis2:for1}-\ref{mis2:draw}).
The sets $\cX_{i,j}$ are then sent to a central machine.
Lines \ref{mis2:for2}-\ref{mis2:updateI} are performed sequentially on the central machine.
All other steps are straightforward to perform in the MapReduce model.

The space usage on the central machine is dominated by the total size of the neighbourhoods of all vertices in the sets $\cX_{i,j}$.
Thus, the required space per machine is proportional to
\begin{align*}
\sum_{i=1}^{1/\alpha} \: \sum_{j=1}^{n^{(i+1)\alpha}} \sum_{v \in \cX_{i,j}} d_I(v)
 & \leq
 \sum_{i=1}^{1/\alpha} n^{(i+1)\alpha} \cdot n^{\mem/2} \cdot n^{1-(i-1)\alpha} \\
 & = (1/\alpha) \cdot n^{1+\mem/2+2\alpha}.
\end{align*}
Since $\alpha=\mem/8$, this gives the claimed space bound.
\end{proof}

\section{Maximal clique}
\AppendixName{maxclique}

We can compute a maximal clique via a reduction to MIS.
This is, however, nontrivial because naively complementing the graph could result in a graph that is too large to fit into memory.
To resolve this, we will use a relabeling scheme which satisfies the following properties:
\begin{itemize}
   \item if there are $k$ active vertices then every active vertex has a label in $[k]$.
   \item every vertex knows $k$ and the label of all its neighbours.
\end{itemize}
Suppose we had such a labelling and consider a vertex $v$.
Suppose $N$ is the set of active neighbours of $v$.
Then its degree in the complement graph is $k - |N|$ and its active neighbourhood in the complement graph is $[k] \setminus N$.
Observe that, while the complement graph could have $\Omega(n^2)$ edges, there is no space issue because
each round only requires us to compute $O(n^{1+\mem})$ edges of the complement graph \emph{in total}.

We now describe the relabelling scheme.
Initially, the two conditions are trivially satisfied if we assume each vertex has a unique label in $[n]$.
To maintain the invariant, we add the following relabeling procedure:
\begin{enumerate}
   \item
      Suppose there are $k$ active vertices (the central machine always knows which vertices are active).
      Pick a permutation $\sigma \colon V \to [n]$ such that if $v$ is active then $\sigma(v) \in [k]$ and if $v$ is inactive then $\sigma(v) > k$.
      Send $\sigma(v)$ and $k$ to vertex $v$.
   \item
      Each vertex $v$ knows $\sigma(v)$.
      If $u$ is a neighbour of $v$ then $v$ queries $u$ for $\sigma(u)$.
      It then knows which of its neighbours are still active (since it knows $k$).
\end{enumerate}

\begin{corollary}
	\CorollaryName{maxclique}
   There is a MapReduce algorithm to find a maximal clique in $O(1/\mem)$ rounds w.h.p.
\end{corollary}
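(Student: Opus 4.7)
The plan is to reduce maximal clique in $G$ to maximal independent set in the complement graph $\bar{G}$ and simulate \Algorithm{mis_v2} on $\bar{G}$. Applying \Theorem{MIS2} with $c=1$, since $\bar{G}$ has at most $\binom{n}{2}$ edges, would yield the desired $O(1/\mem)$ round count, and correctness is immediate from the reduction.

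The hard part is that $\bar{G}$ may have $\Omega(n^2)$ edges when $G$ is sparse, too many to materialize within the per-machine memory budget. To handle this, I would employ the relabeling scheme described just before the corollary. At the start of each outer iteration, the central machine holds the current active set $A$ (the vertices $G$-adjacent to every element of the growing independent set $I$), chooses a bijection $\sigma \colon A \to [k]$ with $k = |A|$, and broadcasts $\sigma$ and $k$. Each $v \in A$ then spends a single round querying its $G$-neighbours for their $\sigma$-labels, so that $v$ learns $\sigma(N_G(v) \cap A)$ and hence its own $\bar{G}$-degree within the active set, namely $d_{\bar{G},A}(v) = k - 1 - |N_G(v) \cap A|$. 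This is precisely the quantity consulted by the sampling step (\Cref{mis2:for1}--\Cref{mis2:draw}), so the sampling can proceed verbatim.

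For the centralized step (\Cref{mis2:for2}--\Cref{mis2:updateI}), I would have each sampled $v$ transmit whichever of $\sigma(N_G(v) \cap A)$ or its complement $\sigma(N_{\bar{G}}(v) \cap A) = [k] \setminus (\set{\sigma(v)} \cup \sigma(N_G(v) \cap A))$ is smaller, together with $\sigma(v)$ and a flag. The central machine can reconstruct the other set since these two partition $[k] \setminus \set{\sigma(v)}$. It then tests heaviness of each sampled $v$ against the running $I$, adds $v$ to $I$ when warranted, and updates $A \gets A \cap N_G(v)$ using the same transmitted data. By sending the smaller representation, the total data received is bounded by $\sum_{v \in \text{sampled}} |N_{\bar{G}}(v) \cap A|$, which is exactly the sum the proof of \Theorem{MIS2} controls at $O(n^{1+\mem})$.

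The main thing to verify is that the per-iteration overhead (broadcasting $\sigma$, propagating membership queries, and reconstructing on the central machine) adds only $O(1)$ MapReduce rounds and respects the $O(n^{1+\mem})$ memory budget. This holds because every vertex only needs labels of its $G$-neighbours, so the propagation traverses $G$-edges (total traffic $O(m)$, distributing to $O(n^{1+\mem})$ per machine), and the central state $(A, I)$ fits in $O(n)$ bits. Combined with the $O(1/\mem)$ outer iterations inherited from \Theorem{MIS2} applied to $\bar{G}$, this gives the $O(1/\mem)$ MapReduce-round bound claimed.
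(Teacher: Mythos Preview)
Your proposal is correct and follows essentially the same approach as the paper: reduce to MIS on $\bar{G}$ via the relabeling scheme, then invoke \Theorem{MIS2} with $c=1$ to get $O(1/\mem)$ rounds. Your ``send whichever representation is smaller'' trick is a nice implementation detail that the paper leaves implicit (the paper merely observes that each round needs only $O(n^{1+\mem})$ complement-graph edges in total, so computing $[k]\setminus N$ on demand is affordable), but the underlying argument is the same.
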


\section{Matching with $O(n)$ space per machine}
\AppendixName{matchingSmallMem}

In this section we discuss the matching algorithm (\Algorithm{matching}) in the case $\mem=0$, which corresponds to having $O(n)$ space per machine.

\begin{lemma}
	Suppose $\eta = n$.
    Then none of the iterations fail w.h.p.
    Moreover, conditioned on none of the iterations failing, we have
	$\E\left[|E_{i+1}| \mid E_i \right] \leq 0.975 |E_i|$.
\end{lemma}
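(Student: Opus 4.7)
The lemma splits into two parts: (a) no iteration fails w.h.p., and (b) $\E[|E_{i+1}|\mid E_i] \leq 0.975\,|E_i|$ conditional on iteration $i$ not failing. For (a), Claim~\ref{claim:fail} already bounds a single iteration's failure probability by $\exp(-\eta) = \exp(-n)$. Each successful iteration pushes at least one edge onto the stack (with high probability some $E_v'$ is nonempty, since $(1-p)^{2|E_i|} \leq e^{-2n}$), which strictly shrinks $E_i$, so the number of iterations is at most $|E| = O(n^2)$. A union bound then yields total failure probability $e^{-\Omega(n)}$.

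For (b), my plan is to lower bound $\E[|E_i \setminus E_{i+1}| \mid E_i]$ vertex by vertex, counting, for each $a \in V$, the expected number of edges in $\delta(a) \cap E_i$ killed by the single weight-reduction triggered by $e_a$, the heaviest sampled edge at $a$. If we rank the $d_a := d_i(a)$ edges at $a$ in decreasing order of weight, the rank-$r$ edge is killed iff at least one rank $\leq r$ edge lies in $E_a'$, which happens with probability $1-(1-p)^r$. Summing over ranks yields
\[
\phi(d_a) \;:=\; \sum_{r=1}^{d_a}\bigl(1-(1-p)^r\bigr) \;=\; d_a - \frac{(1-p)\bigl(1-(1-p)^{d_a}\bigr)}{p}.
\]
A direct two-regime calculation then gives $\phi(d) \geq \tfrac{1}{4}\min(d,\, p d^2)$: for $pd \geq 2$, $\phi(d) \geq d/2$ via $(1-p)^d \leq e^{-2}$; for $pd \leq 1$, $\phi(d) \geq pd(d+1)/4$ via the elementary inequality $1-(1-p)^r \geq pr/2$.

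To turn this per-vertex bound into a global shrinkage, I would split $V$ into high-degree $H = \{a : d_a \geq 2/p\}$ and low-degree $L = V \setminus H$; since $\sum_a d_a = 2|E_i|$, at least one of the two groups holds $\geq |E_i|$ of the total degree. If $H$ dominates then $\sum_a \phi(d_a) \geq \sum_{a \in H} d_a/4 \geq |E_i|/4$. If $L$ dominates, Cauchy--Schwarz gives $\sum_{a \in L} d_a^2 \geq (\sum_{a\in L} d_a)^2/|L| \geq |E_i|^2/n$, and substituting $p = n/|E_i|$ yields $\sum_{a\in L} p d_a^2/4 \geq |E_i|/4$. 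Either way, $\sum_a \phi(d_a) \geq |E_i|/4$. Since each removed edge is counted at most twice (once per endpoint), $\E[|E_i \setminus E_{i+1}| \mid E_i] \geq \tfrac{1}{2}\sum_a \phi(d_a) \geq |E_i|/8$, which is already stronger than the claimed $0.025\,|E_i|$.

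The main obstacle is the low-degree regime: for each individual low-degree vertex $\phi(d_a) = O(p d_a^2)$ is sublinear in $d_a$, so vertex-by-vertex shrinkage fails and one must aggregate via Cauchy--Schwarz. A minor subtlety is that an edge may also be removed by weight-reductions triggered at non-endpoint vertices, but since we only want a lower bound on $|E_i \setminus E_{i+1}|$, restricting attention to the two endpoint reductions is safe.
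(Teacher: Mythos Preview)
Your overall strategy is sound and actually yields a sharper constant than the paper, but the core per-edge claim has a genuine gap. You assert that ``the rank-$r$ edge is killed iff at least one rank $\leq r$ edge lies in $E_a'$,'' ranking by weights at the start of iteration $i$. Even the ``if'' direction fails: the vertices are processed sequentially, and weight reductions at vertices handled \emph{before} $a$ can alter the relative ordering of edges at $a$. Concretely, if $e_1,e_2\in\delta(a)$ have original weights $10,5$ but an earlier reduction drops $e_1$ to $3$, then sampling only $e_1$ into $E_a'$ triggers a reduction by $3$, leaving $e_2$ alive with weight $2$---contrary to your claim. Your final remark about ``non-endpoint reductions'' does not address this: the problem is that reductions elsewhere change which edge at $a$ is heaviest and by how much the reduction at $a$ cuts.

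The bound $\E[d_i(a)-d_{i+1}(a)]\geq\phi(d_i(a))$ is nonetheless true, but it needs the conditioning the paper uses. Let $K_a$ be the number of positive-weight edges at $a$ when $a$ is reached and let $R_a$ be the rank (by \emph{current} weight, among those $K_a$ edges) of the heaviest sampled one. Since $E_a'$ is independent of the history, $\Pr[R_a\leq r\mid K_a]=1-(1-p)^r$, so $\E[K_a-(R_a-1)\mid K_a]=\phi(K_a)$. Then $d_{i+1}(a)\leq R_a-1$ gives $\E[d_i(a)-d_{i+1}(a)\mid K_a]\geq (d_i(a)-K_a)+\phi(K_a)\geq\phi(d_i(a))$, the last step because $d\mapsto d-\phi(d)=\tfrac{(1-p)(1-(1-p)^d)}{p}$ is increasing. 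With this repair your Cauchy--Schwarz argument for light vertices goes through and delivers $\E[|E_{i+1}|\mid E_i]\leq\tfrac{7}{8}|E_i|$.

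For comparison, the paper takes a coarser route: it only extracts progress from heavy vertices ($d_i(v)\geq|E_i|/n$), showing that with probability at least $1/2$ such a vertex loses a $0.1$-fraction of its edges (via exactly the conditioning-on-$k_v$ argument above), and uses the trivial bound for light vertices together with $\sum_{v\text{ light}}d_i(v)\leq|E_i|$. This avoids Cauchy--Schwarz entirely at the cost of the weaker constant $0.975$. Your approach buys a better constant; the paper's buys simplicity. A minor technical point: your uniform bound $\phi(d)\geq\tfrac14\min(d,pd^2)$ is not quite covered by the two regimes $pd\geq 2$ and $pd\leq 1$ as stated, but the asymptotic inequality $1-\tfrac{1-e^{-x}}{x}\geq\min(x,1)/4$ (equivalent to $e^{-x}\geq(1-x/2)^2$ for $x\leq 2$) fills the gap.
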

\begin{proof}
	The first part of the lemma is by Claim~\ref{claim:fail}.
    
    Let $H_i = \{v : d_i(v) \geq |E|/n\}$ and $L_i = V \setminus H_i$.
    We call $H_i$ the heavy vertices at iteration $i$ and $L_i$ the light vertices at iteration $i$.
    Let $c = 0.9$.
    We claim that if $v$ is heavy then with probability at least $1/2$, we have
    $d_{i+1}(v) \leq c\cdot d_i(v)$.
    Let $k_v$ be the number of edges incident to $v$ with positive weight when we reach $v$ in the for loop in \Cref{ln:weight_reduction}.
    If $k_v \leq c \cdot d_i(v)$ then we are done so suppose $k_v > c \cdot d_i(v)$.
    The probability that we do not sample any of the heaviest $ck_v$ edges that are currently incident to $v$ is at most $\left( 1 - \frac{n}{|E|} \right)^{ck_v} \leq \exp\left( - c^2 \right) < 1/2$.
    
    Observe that $\sum_{v \in L_i} d_i(v) \leq |E_i|$ so $\sum_{v \in H_i} d_i(v) \geq |E_i|$.
    Hence, conditioned on $E_{i}$, we have
    \begin{eqnarray*}
    	\E |E_{i+1}| &\leq& \frac{1}{2} \left (\sum_{v \in L_i} d_i(v) +
        \sum_{v \in H_i} \left[ \frac{1}{2} d_i(v) + \frac{1}{2} cd_i(v) \right] \right) \\
        &\leq&
        \frac{3+c}{4} |E_i| = 0.975 |E_i|,
    \end{eqnarray*}
    proving the second part of the lemma.
\end{proof}

\begin{theorem}
	\TheoremName{matchingSmallMem}
	With probability at least $1 - 1/n$, Algorithm~\ref{alg:matching2apx} terminates in $O(\log n)$ iterations and returns a 2-approximate maximum matching.
\end{theorem}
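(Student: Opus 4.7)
The plan is to combine the preceding lemma's expected geometric decrease of $|E_i|$ with Markov's inequality, then invoke Theorem~\ref{thm:lr_matching} for the approximation guarantee. The preceding lemma gives us two ingredients: (a) each iteration fails with probability at most $\exp(-n)$, and (b) conditioned on no failure so far, $\E[|E_{i+1}| \mid E_i] \leq 0.975\,|E_i|$. Since $|E_0| \leq \binom{n}{2} \leq n^2$, iterating the tower property yields $\E[|E_i| \mid \text{no failures through iteration } i] \leq (0.975)^i \cdot n^2$.

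First I would fix a target iteration count $T = C \log n$ for a sufficiently large constant $C$ (depending on $\log(1/0.975)$) so that $(0.975)^T n^2 \leq 1/(2n)$. By Markov's inequality applied to the nonnegative integer random variable $|E_T|$, we get $\Pr[|E_T| \geq 1 \mid \text{no failures}] \leq 1/(2n)$, i.e.\ $|E_T| = 0$ with conditional probability at least $1 - 1/(2n)$. Next, I would bound the unconditional failure probability: a union bound over $T = O(\log n)$ iterations, combined with Claim~\ref{claim:fail}, gives total failure probability at most $T \exp(-n) \leq 1/(2n)$ for $n$ large. A final union bound then shows that with probability at least $1 - 1/n$, the algorithm completes $T = O(\log n)$ iterations without failing and exhausts all positive-weight edges.

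Once $E_i = \emptyset$, the while loop terminates and the stack is unwound to produce a matching $M$. Correctness, namely that $M$ is a $2$-approximation to the maximum weight matching, follows directly from Theorem~\ref{thm:lr_matching}: \Algorithm{matching2apx} is simply an implementation of the sequential local ratio algorithm in which the order of edge selection is determined by random sampling, and the local ratio analysis is agnostic to that order.

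The main obstacle, though modest, is justifying that the conditional-expectation recurrence composes cleanly across iterations. I would handle this by defining a stopped process that freezes $|E_i|$ once any iteration fails, so that the bound $\E[|E_{i+1}| \mid E_i] \leq 0.975\,|E_i|$ holds unconditionally on the modified process; the tower property then yields $\E[|E_T|] \leq (0.975)^T n^2$ without conditioning technicalities, and the failure event is handled by its separate union bound. Everything else is routine book-keeping.
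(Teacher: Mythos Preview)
Your proposal is correct and follows the same approach as the paper: iterate the expected geometric decrease from the preceding lemma, apply Markov's inequality after $T = O(\log n)$ iterations, and invoke Theorem~\ref{thm:lr_matching} for the approximation guarantee.

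One technical wrinkle worth flagging: your ``freeze on failure'' stopped process does not do what you claim, since freezing gives $|\tilde E_{i+1}| = |\tilde E_i|$ on the failure event, which violates the recurrence you want to iterate. The paper handles the same bookkeeping differently: it couples the original algorithm to a modified process $\{E_i'\}$ that never short-circuits when $|E_i|$ is small, so the recurrence $\E[|E_{i+1}'| \mid E_i'] \leq 0.975\,|E_i'|$ holds for all $i$ unconditionally; it then applies Markov to $|E_\tau'|$ with threshold $8n$ (rather than $1$) and notes that $|E_\tau'| < 8n$ forces the original algorithm to terminate by iteration $\tau+1$. Your argument is easily repaired---set the auxiliary process to zero on failure rather than freezing it, or simply absorb the $T\exp(-n)\cdot n^2$ contribution from the failure event into the expectation bound---but as written the justification is slightly off.
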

\begin{proof}
	Let us consider a modified version of Algorithm~\ref{alg:matching2apx} where the condition to terminate when $|E_i| < 8\eta$ is removed.
    Let $E_i'$ denote the sequence of edges in the modified algorithm and $E_i$ to denote the sequence of edges in the unmodified version of Algorithm~\ref{alg:matching2apx}.
    There is a simple coupling which ensures that:
    \begin{itemize}
    	\item $E_1' = E_1$;
        \item $E_{i+1}' = E_{i+1}$ if $|E_{i}| \geq 8n$; and
        \item if $|E_i| < 8n$ then Algorithm~\ref{alg:matching2apx} terminates in iteration $i+1$.
    \end{itemize}
    Observe that the final condition implies that
    \[
       \Pr[\text{Algorithm~\ref{alg:matching} terminates by iteration $\tau+1$}] \geq \Pr[|E_{\tau}'| < 8n].
    \]
    Choose $\tau = 200\log(n) > \log(n^2) / \log(1/0.975)$.
    By the previous lemma, we have $\E |E_{\tau}'| \leq 1$ so by Markov's Inequality we have
    $\Pr[|E_{\tau}'| \geq 8n] \leq 1/(8n)$.
    In particular, Algorithm~\ref{alg:matching} terminates in $O(\log(n))$ iterations with probability at least $1 - 1/n$.
    
    The correctness follows from \Theorem{lr_matching}.
\end{proof}

\comment{

\section{$O(\log m)$-approximation for Unweighted Set Cover}
\SectionName{logmSCNick}

Following the notation of the previous section,
we wish to solve $\min \setst{ \card{X} }{ S(X) = U = [m] }$,
the unweighted set cover problem.
Let $OPT$ denote a fixed optimal solution to the set cover problem.
It is well-known that the natural greedy algorithm produces a set
$X$ with $S(X)=U$ and $\card{X} \leq (1+\ln m) \card{OPT}$.

To obtain our result in the MapReduce model, we will first explain
an alternative way to derive a sequential algorithm obtaining a comparable result.

\begin{theorem}
Assume $\eps \leq 1/2$.
Let $X$ be the output of \textsc{ApproxSC2}.
Then $S(X) = U$ and $\card{X} \leq \ceil{(1+4\eps) \ln(m) \card{OPT}}$.
\end{theorem}

\begin{algorithm}
    \caption{An $O(\log m)$-approximation for minimum set cover}
    \label{alg:minsc2}
    \begin{algorithmic}[1] 
        \Procedure{ApproxSC2}{$S_1,\ldots,S_n \subseteq U = [m]$}
            \Procedure{MaxCoverage}{$k, ~\eps$}
                \State $X_0 \leftarrow \emptyset$
                \For{$i=0,\ldots,k-1$}
                    \State Pick any $e_{i+1} \not\in X_{i-1}$ with $\card{S_{e_{i+1}} \setminus S(X_i)}
                        \geq (1-\eps) \max_x \, \card{S_x \setminus S(X_i)}$
                        \label{alg:unwtdSCmax}
                    \State Let $X_{i+1} \leftarrow X_i \union \set{e_{i+1}}$
                \EndFor
            \EndProcedure
            \State Let $\cX$ be an empty list
            \State Let $\alpha = \ln(m)/(1-\eps)$
            \ForAll{values $L$ of the form $(1+\eps)^i$ in the interval $[1, (1+\eps)m]$}
                \label{alg:unwtdSCforall}
                \State Let $X \leftarrow \textsc{MaxCoverage}(\ceil{\alpha L}, \eps)$
                \State If $S(X)=U$, add $X$ to the list $\cX$
            \EndFor
            \State \textbf{return} an $X \in \cX$ that minimizes $\card{X}$
        \EndProcedure
    \end{algorithmic}
\end{algorithm}

\textsc{MaxCoverage} implements the so-called ``$\epsilon$-greedy algorithm''
for the Maximum Coverage problem $ \max \setst{ \card{S(X)} }{ \card{X} \leq k } $,
meaning that, in line~\ref{alg:unwtdSCmax}, an $\epsilon$-approximate maximizer is chosen.

\begin{proof}
Obviously one of the values $L$ chosen in \textsc{ApproxSC2}
will satisfy $\card{OPT} \leq L \leq (1+\eps) \card{OPT}$.
Let us call this value $L^*$.
Consider the invocation of \textsc{MaxCoverage} with $k=\ceil{\alpha L^*}$,
and let $X_g$ denote its output.  Obviously $\card{X_g} \leq \ceil{\alpha L^*}$.

It remains to show that $S(X_g)=U$.
Fix $X^* \in \argmax \setst{ \card{S(X)} }{ \card{X} \leq L^* } $,
and note that the cardinality is bounded by $L^*$ not by $k$.
A small adjustment to the standard analysis of the Maximum Coverage problem yields
$$
\card{S(X^*)} - \card{S(X_i)}
    ~\leq~ \Big(1-\frac{1-\eps}{L^*}\Big)^i \cdot \card{S(X^*)} \qquad\forall i \geq 0.
$$
By definition of $L^*$, we have $S(X^*) = U = [m]$.
In particular, setting $i=k =\ceil{\alpha L^*}$, the output $X_g$ of \textsc{MaxCoverage} satisfies
$$
m - \card{S(X_g)}
    ~\leq~ \Big(1-\frac{1-\eps}{L^*}\Big)^{\ceil{\alpha L^*}} \cdot m
    ~<~ \exp(-\ln m) \cdot m ~=~ 1. 
$$
By integrality, we must have $S(X_g) = U$.
Therefore \textsc{ApproxSC2} adds this set $X_g$ is to the list $\cX$.

It follows that final output of \textsc{ApproxSC2} 
is a set $X$ with $S(X)=U$ and $\card{X} \leq \ceil{\alpha L^*}$.
\end{proof}

\subsection{MapReduce implementation}

Now we explain how how to design an MapReduce algorithm for the set cover problem.
Let $\Delta = \max_i \card{S_i}$.
The \textsc{GreedyScalingMR} algorithm of Kumar et al.~\cite{KMVV15}
can implement the \textsc{MaxCoverage} procedure in
$O(\log(\Delta)/\epsilon \delta)$ rounds
and $O(k \Delta n^\delta \log n)$ space per machine, for any $\delta>0$.
To implement \textsc{ApproxSC2}, we execute
all iterations of loop on line~\ref{alg:unwtdSCforall} in parallel.

\begin{theorem}
If the amount of space per machine is $O(\card{OPT} \Delta n^\delta \log n)$ 
then the unweighted set cover problem can be solved
in $O(\log(\Delta)/\epsilon \delta)$ rounds.
\end{theorem}

\section{Under contstruction: $O(\log m)$-approximation for Weighted Set Cover}

As in the previous section, we wish to solve the problem $\min \setst{ w(X) }{ S(X) = U }$.
It is well-known that the natural greedy algorithm produces a set
$X$ with $S(X)=U$ and $w(X) \leq (1+\ln m) w(OPT)$.

To obtain our result in the MapReduce model, we will first explain
an alternative sequential algorithm that obtains a comparable result.
The algorithm builds on the techniques of Khuller, Moss and Naor \cite{KMN99}
for solving the ``budgeted coverage problem'' $\max \setst{ \card{S(X)} }{ w(X) \leq L }$.

\begin{theorem}
Let $X$ be the output of \textsc{ApproxSC2}. Then
$S(X) = U$ and $w(X) \leq (1+\eps) (1+\ln m) L$.
\end{theorem}

\begin{algorithm}
    \caption{An $O(\log m)$-approximation for minimum weight set cover}
    \label{alg:minsc2}
    \begin{algorithmic}[1] 
        \Procedure{ApproxSC2}{$S_1,\ldots,S_n \subseteq U = [m], ~w_1,\ldots,w_n \in \bR_{>0}$}
            \Procedure{BudgetedCoverage}{$L \in \bR, ~\alpha$}
                \State $Y \leftarrow \setst{ x \in [n] }{ w_x \leq L }$
                \State $X_0 \leftarrow \emptyset$, $i \leftarrow 0$
                \While{$Y$ is non-empty}
                    \State Pick any $e_{i+1} \in \argmax_{x \in Y} \card{S_x \setminus S(X_i)}/w_x$
                    \State Remove $e_{i+1}$ from $Y$
                    \If{$w(X_i \union \set{e_{i+1}}) \leq \alpha L$}
                        \State Let $X_{i+1} \leftarrow X_i \union \set{e_{i+1}}$
                        \State Let $i \leftarrow i + 1$
                    \EndIf
                \EndWhile
                \State $X_g \leftarrow X_i$
                \State Pick any $x^* \in \argmax_{w_x \leq L} \card{S_x \setminus S(X_g)}$
                \State \textbf{return} $X_f \leftarrow X_g \union \set{x^*}$
            \EndProcedure
            \State Let $\cX$ be an empty list
            \State Let $\wmin \leftarrow \min_x w_x$ and $\wmax \leftarrow \max_x w_x$ 
            \ForAll{values $L$ of the form $(1+\eps)^i$ in the interval $[\wmin, (1+\eps)n \, \wmax]$}
                \State Let $X \leftarrow \textsc{BudgetedCoverage}(L, \ln n)$
                \State If $S(X)=U$, add $X$ to the list $\cX$
            \EndFor
            \State \textbf{return} an $X \in \cX$ that minimizes $w(X)$
        \EndProcedure
    \end{algorithmic}
\end{algorithm}

\comment{
\begin{itemize}
\item Let $X_0 \leftarrow \emptyset$, $Y \leftarrow [m]$, $i \leftarrow 0$
\item While $Y$ is non-empty
\begin{itemize}
    \item Pick $e_{i+1} \in \argmax_{x \in Y}
                    \card{S_x \setminus S(X_i)}/w_x$
    \item Remove $e_{i+1}$ from $Y$
    \item If $w_{e_{i+1}} + \sum_{x \in X_i} w_x \leq \alpha L$
    \begin{itemize}
        \item Let $X_{i+1} \leftarrow X_i \union \set{e_{i+1}}$
        \item Let $i \leftarrow i + 1$
    \end{itemize}
\end{itemize}
\item Pick $x^* \in \argmax_{x \in [m]} \card{S_x \setminus X_i}$
\item Output $X_f = X_i \union \set{x^*}$.
\end{itemize}
}

\begin{proof}
Obviously $\wmin \leq w(OPT) \leq n \wmax$,
and so one of the values $L$ used in \textsc{ApproxSC2}
will satisfy $w(OPT) \leq L \leq (1+\eps) w(OPT)$.
Let us call this value $L^*$.

Consider the invocation of \textsc{BudgetedCoverage} with $L=L^*$,
and let $X_f$ denote its output.
Note that 
$$
w(X_f) ~=~ w(X_g) + w_{x^*} ~\leq~ \alpha L^* + L^* ~\leq~ (1+\eps)(1+\ln m) w(OPT),
$$
since \textsc{BudgetedCoverage} enforces the constraint $w(X_i) \leq \alpha L$,
and $x^*$ was selected to have $w_{x^*} \leq L$.

It remains to show that $S(X_f)=U$.
Following Khuller et al.,
consider the first iteration of \textsc{BudgetedCoverage} in which a member of $OPT$ was removed
from $Y$ but not added to $X_{i+1}$, and let $\ell$ be the value of $i$ during that iteration.
If there is no such iteration, then $X_f$ is a superset of $OPT$,
so it trivially holds that $S(X_f) = U$.
The remainder of the analysis is performed at the time of this iteration.
For notational convenience we define $X_{\ell+1} = X_\ell \union \set{e_{\ell+1}}$,
even though the algorithm did not actually define $X_{\ell+1}$ in this way.

\comment{
\begin{claim}[Khuller et al.~\cite{KMN99}]
\ClaimName{KMN1}
For all $i = 1,\ldots,\ell+1$, we have
$$\card{S(X_i)} - \card{S(X_{i-1})}
    ~\geq~ \frac{w_{e_i}}{L} \big( \card{S(OPT)} - \card{S(X_{i-1})} \big)
$$
\end{claim}

\begin{proof}
\begin{align*}
&
\card{ S(OPT) } - \card{ S(X_{i-1})} \\
~\leq~&
\card{ S(OPT) \setminus S(X_{i-1})} \\
~=~&
\card{ S(OPT \setminus X_{i-1}) \setminus S(X_{i-1})} \\
~\leq~&
\frac{ L }{ \sum_{j \in OPT \setminus X_{i-1}} w_j }
     \cdot {\textstyle \card{ \Union_{j \in OPT \setminus X_{i-1}} S_j \setminus S(X_{i-1})} }
     \qquad\text{(since $\sum_{j \in OPT} w_j \leq L$)}\\
~\leq~&
L \cdot \frac{ \sum_{j \in OPT \setminus X_{i-1}} \card{S_j \setminus S(X_{i-1})} }
     { \sum_{j \in OPT \setminus X_{i-1}} w_j } \\
~\leq~&
L \cdot \max_{j \in OPT \setminus X_{i-1}} \frac{ \card{S_j \setminus S(X_{i-1})} }{ w_j } \\
~\leq~&
L \cdot \frac{ \card{S_{e_i} \setminus S(X_{i-1})} }{ w_{e_i} }
    \qquad\text{($e_i$ was the greedy choice to add to $X_{i-1}$, and $Y \supseteq OPT \setminus X_{i-1}$)} \\
~=~&
L \cdot \frac{ \card{S(X_i)} - \card{S(X_{i-1})} }{ w_{e_i} } 
\end{align*}
The assumption that $i \leq \ell+1$ is used to conclude that $Y \supseteq OPT \setminus X_{i-1}$.
\end{proof}
}

\begin{claim}[Khuller et al.~\cite{KMN99}]
\ClaimName{KMN2}
For all $i = 0,\ldots,\ell+1$,
$$\card{S(X_i)} \geq \Big(1 - \prod_{k=1}^i \big(1-\frac{w_{e_k}}{L}\big) \Big) \cdot \card{S(OPT)}.$$
\end{claim}

\comment{
\begin{proof}
By induction, the case $i=0$ being trivial.
So suppose $i \geq 1$.
\begin{align*}
\card{S(X_i)}
    &~=~ \card{S(X_{i-1})} + \big( \card{S(X_i)} - \card{S(X_{i-1})} \big) \\
    &~\geq~ \card{S(X_{i-1})} + \frac{w_{e_i}}{L} \big( \card{S(OPT)} - \card{S(X_{i-1})} \big)
        \qquad\text{(by \Claim{KMN1})}\\
    &~=~ \Big( 1 - \frac{w_{e_i}}{L} \Big) \card{S(X_{i-1})} + \frac{w_{e_i}}{L} \card{S(OPT)} \\
    &~\geq~ \Big( 1 - \frac{w_{e_i}}{L} \Big) \Big(1 - \prod_{k=1}^{i-1} (1-\frac{w_{e_k}}{L}) \Big) \card{S(OPT)}
        + \frac{w_{e_i}}{L} \card{S(OPT)}
        \qquad\text{(induction)}\\
    &~=~ \Big(1 - \prod_{k=1}^{i} (1-\frac{w_{e_k}}{L}) \Big) \card{S(OPT)}.
\end{align*}
The last line here is using the identity $(1-a)(1-b)+a = 1-(1-a)b$.
\end{proof}
}

\begin{fact}
\FactName{convexity}
Fix $\tau \leq t$.
Let $\Lambda = \setst{ \lambda \in [0,1]^t }{ \sum_{k=1}^t \lambda_k \geq \tau }$.
The function $1-\prod_{k=1}^t (1-\lambda_k)$ is minimized over $\Lambda$ 
by the point $\lambda^*$ with $\lambda^*_k = \tau/t$ for all $k$.
\end{fact}

By \Claim{KMN2}, we have
$$
\card{S(X_{\ell+1})}
    ~\geq~ \Bigg(1 - \prod_{k=1}^{\ell+1} \Big(1-\frac{w_{e_k}}{L}\Big) \Bigg) \cdot \card{S(OPT)}.
$$
Since $e_{\ell+1}$ was not added to the greedy solution, $\sum_{k=1}^{\ell+1} w_{e_k} \geq L \ln n$.
Since each $w_{e_k} \leq L$, we must have $\ell+1 \geq \ln n$.
Therefore we may apply \Fact{convexity} with
$\lambda_k = w_{e_k}/L$, $\tau = \ln n$, and $t=\ell+1$, to obtain
\begin{align*}
\card{S(X_{\ell+1})}
    ~\geq~ \Bigg(1 - \Big(1-\frac{\ln n}{\ell+1}\Big)^{\ell+1} \Bigg) \cdot \card{S(OPT)} 
    ~\geq~ (1 - 1/n) \cdot \card{S(OPT)}.
\end{align*}
As $\card{S(OPT)} \leq n$, we have $\card{S(X_{\ell+1})} \geq \card{S(OPT)}$.
%
Due to our artificial definition of $X_{\ell+1}$, it is not true that $X_{\ell+1} \subseteq X_f$.
However, since $X_\ell \subseteq X_g$ and by choice of $x^*$,
$$
\card{S(X_f)}
    ~=~ \card{S(X_g \union \set{x^*})} 
    ~\geq~ \card{S(X_g \union \set{e_{\ell+1}})} 
    ~\geq~ \card{S(X_\ell \union \set{e_{\ell+1}})} 
    ~=~ \card{S(X_{\ell+1})} 
    ~\geq~ \card{S(OPT)}.
$$
Thus, the invocation of \textsc{BudgetedCoverage} with $L=L^*$
outputs $X_f$ with $S(X_f)=U$ and $w(X_f) \leq (1+\eps)(1+\ln n) w(OPT)$.
\end{proof}

\subsection{MapReduce implementation}

Now we explain how how to design an MapReduce algorithm based on \textsc{ApproxSC2}.
The main observations are that
\begin{itemize}
\item the ``budgeted coverage problem''
$\max \setst{ \card{S(X)} }{ w(X) \leq L }$ is an instance of submodular maximization
over a hereditary set system.
\item \textsc{BudgetedCoverage} is a greedy-like algorithm satisfying certain
consistency conditions.
\end{itemize}
Consequently, an efficient MapReduce algorithm for \textsc{BudgetedCoverage}
can be obtained using the framework of Barbosa et al.~\cite{BENW16}.

The consistency property needed by Barbosa et al.\ is as follows.
First, we will assume that the $\argmax$ operations in \textsc{BudgetedCoverage}
follow an arbitrary but fixed tie-breaking rule.
Next, let $\textsc{BC}(A)$ denote the output of the \textsc{BudgetedCoverage} algorithm
(for some particular parameters $L$ and $\alpha$)
when provided only the sets $\setst{ S_x }{ x \in A }$.

\begin{claim}
Let $U$ and $V$ be disjoint subsets of $[n]$.
Suppose that $\textsc{BC}(U \union \set{v}) = \textsc{BC}(U)$ for all $v \in V$.
Then $\textsc{BC}(U \union V) = \textsc{BC}(U)$.
\end{claim}
\begin{proof}[Proof sketch]
Inductively one can show that
$\textsc{BC}(U)$,
$\textsc{BC}(U \union \set{v})$, and
$\textsc{BC}(U \union V)$ all build the same intermediate set $X_i$,
and so terminate with the same set $X_f$.
Since all have the same state $X_i$,
at the time that $v$ is considered by 
$\textsc{BC}(U \union \set{v})$ and $\textsc{BC}(U \union V)$,
they must behave identically, and so neither of them will add $v$ to the solution.
\end{proof}

\comment{
\begin{proof}
Suppose $\textsc{BC}(U \union V)$ adds $e_{i+1}$ to $X_{i+1}$ but $\textsc{BC}(U)$
does not add the same element; consider the first time that this happens.
At this point $X_i$ is the same in both executions.
We will see that all cases lead to a contradiction.

\textit{Case 1:}
If $e_{i+1} \in U$, then $\textsc{BC}(U)$ draws the same element,
so in both executions the comparison $w(X_i \union \set{e_{i+1}}) \leq \alpha L$ must also be the
same, and therefore the decision to add $e_{i+1}$ must also be the same.
Contradiction.

\textit{Case 2:}
If $e_{i+1} \in V$, then the execution $\textsc{BC}(U \union \set{v_k})$
also has the same set $X_i$, and also draws this element $e_{i+1}$.
The claim's hypothesis tells us that the execution $\textsc{BC}(U \union \set{v_k})$
does not add the element $v_k$.
But, in both executions the comparison $w(X_i \union \set{e_{i+1}}) \leq \alpha L$ must also be the
same. Contradiction.

Thus $X_g$ must be the same set in all executions
$\textsc{BC}(U)$, $\textsc{BC}(U \union V)$, and $\textsc{BC}(U \union \set{v_k})$ for all $k$.
Since $v_k$ was not chosen as the element $x^*$ in the execution
$\textsc{BC}(U \union \set{v_k})$, then there is some element $u \in U$
that is preferable as $x^*$ to all $v_k$ according to the $\argmax$ computation.
This same element $u$ is available to all executions
$\textsc{BC}(U)$, $\textsc{BC}(U \union V)$, and $\textsc{BC}(U \union \set{v_k})$ for all $k$.
It follows that the choice of $x^*$ is the same in all executions.
\end{proof}
}

The framework of Barbosa et al.\ yields a MapReduce implementation of 
\textsc{BudgetedCoverage}...

\subsubsection{Space usage}

There is a small issue regarding space usage (that does not arise with unweighted set cover).
Suppose that \textsc{ApproxSC2} outputs a set $X$.
Observe that $\card{X}$ can be arbitrarily bigger than $\card{OPT}$,
even though we have the guarantee $w(X_f)/w(OPT) = O(\ln n)$.

To address this issue, imagine for now that we know the values $\card{OPT}$ and $w(OPT)$.
Let us define $\beta = \eps w(OPT)/\card{OPT}$ and
apply \textsc{ApproxSC2} with modified weights $w_i' = w_i + \beta$.
Note that $w'(OPT) \leq w(OPT) + \beta \card{OPT} = (1+\eps) w(OPT)$.
So, for some value $L \leq (1+\eps) w'(OPT) \leq (1+\eps)^2 w(OPT)$, the algorithm
will succeed in finding a set $X_f$ with $S(X_f) = U$
and $w'(X_f) \leq (1+\eps)(1+\ln n) L$, and hence $w(X_f) \leq (1+\eps)^3 (1+\ln n) w(OPT)$.
This leads to a bound on 
$$
\card{X_f}
    ~\leq~ \frac{w'(X_f)}{\beta}
    ~\leq~ \frac{(1+\eps)(1+\ln n) L}{\eps w(OPT)/\card{OPT}}
    ~\leq~ (1+\eps)^3 (1+\ln n) \card{OPT} / \eps.
$$
`
We can try all pairs of powers-of-two $(K,L)$,
$K \in [1,n]$ and $L \in [\wmin, (1+\eps) n \wmax]$,
where $K$ serves as an upper bound for $\card{OPT}$
and $L$ serves as an upper bound for $w(OPT)$.
We may try all values of $L$ in parallel.
Unfortunately it seems we need to do ``galloping binary search'' to
find $K$ with $K = O(\card{OPT})$, but with out using too much memory.
}

\section{Maximum weight $b$-matching}
\AppendixName{bmatching}
\subsection{Local ratio method}

\begin{tcolorbox}[title=Sequential local ratio algorithm for maximum weight $b$-matching]
	Arbitrarily select an edge $e = (u,v)$ with positive weight, say $w(e)$.
    Reduce the weight of $e$ by $w(e)$.
    For all other $e' \ni u$, reduce its weight by $w(e)/b(u)$.
    For all other $e' \ni v$, reduce its weight by $w(e)/b(v)$.
    Push $e$ onto a stack and repeat this procedure until there are no positive weight edges remaining.
    At the end, unwind the stack adding edges greedily to the matching.
\end{tcolorbox}
\begin{theorem}
	Let $b = \max_v b(v)$.
    The above algorithm returns a $(3 - 2/\max\{2,b\})$-approximate matching.
\end{theorem}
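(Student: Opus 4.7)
The plan is to argue by induction on the recursion depth of the algorithm, using the standard local ratio framework. Write $w = w_1 + w_2$, where $w_1$ is the weight function subtracted in the very first step (after selecting some positive-weight edge $e=(u,v)$ with weight $w(e)$) and $w_2 = w - w_1$ is the residual. Let $M$ be the matching produced and let $M^*$ be an optimal $b$-matching with respect to $w$. By induction applied to the recursive call on the subgraph of positive $w_2$-weight edges, and by the fact that edges with $w_2(e') \leq 0$ contribute non-positively to $w_2(M^*)$, we obtain $w_2(M^*) \leq \alpha\, w_2(M)$ for $\alpha := 3 - 2/\max\{2,b\}$. Thus it suffices to prove the one-step inequality $w_1(M^*) \leq \alpha\, w_1(M)$; then combining gives $w(M^*) \leq \alpha\, w(M)$.

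Next, I would upper-bound $w_1(M^*)$ by a short case analysis on whether $e \in M^*$. Recall that $w_1(e)=w(e)$, $w_1(e')=w(e)/b(u)$ for every other $e' \ni u$, and $w_1(e')=w(e)/b(v)$ for every other $e' \ni v$, and is zero elsewhere. If $e \in M^*$, then at most $b(u)-1$ edges of $M^*$ other than $e$ meet $u$ and at most $b(v)-1$ meet $v$, giving
\[
w_1(M^*) \leq w(e)\Bigl(1 + \tfrac{b(u)-1}{b(u)} + \tfrac{b(v)-1}{b(v)}\Bigr) = \Bigl(3 - \tfrac{1}{b(u)} - \tfrac{1}{b(v)}\Bigr) w(e).
\]
If $e \notin M^*$, then at most $b(u)$ edges of $M^*$ meet $u$ and at most $b(v)$ meet $v$, giving $w_1(M^*) \leq 2w(e)$. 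Taking the max of the two cases and using $1/b(u)+1/b(v) \geq 2/b$, we get $w_1(M^*) \leq \alpha \, w(e)$.

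Now I need to lower-bound $w_1(M) \geq w(e)$. Consider the moment the stack is unwound and $e$ is popped. Either $e$ is added to $M$, in which case $w_1(M) \geq w_1(e) = w(e)$, or $e$ is rejected because adding it would violate $b(u)$ or $b(v)$ at one of its endpoints. In the latter case, suppose $u$ is saturated; then $M$ contains $b(u)$ edges incident to $u$, each of which was pushed after $e$ (since they were popped first), so each contributes $w(e)/b(u)$ to $w_1(M)$, for a total of $b(u)\cdot w(e)/b(u)=w(e)$. The case when $v$ is saturated is symmetric. In all cases, $w_1(M) \geq w(e)$. Combining with the upper bound on $w_1(M^*)$ yields $w_1(M^*) \leq \alpha\, w_1(M)$, completing the induction.

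The main obstacle, as in the standard matching analysis, is the book-keeping in the lower bound on $w_1(M)$: one must verify that whenever $e$ is rejected from $M$, the edges saturating its endpoint were really pushed later (hence popped earlier) so that they actually lie in $M$ at the time $e$ is considered. A secondary point worth checking carefully is the base case and the treatment of possibly-negative residual weights $w_2$; these are handled by noting that the algorithm only pushes edges whose current weight is strictly positive, so the edges dropped from consideration can only decrease $w_2(M^*)$ versus $w_2(M^* \cap E^+)$, validating the inductive step on the reduced subproblem.
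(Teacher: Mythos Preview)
Your proposal is correct and follows essentially the same approach as the paper's proof: both establish the key one-step inequalities $w_1(M^*) \leq (3-2/b)\,w(e)$ and $w_1(M) \geq w(e)$ via the same case analysis (whether $e$ is added to $M$ or blocked by a saturated endpoint), and both wrap this in an induction along the stack. The only cosmetic difference is that you phrase it as the classical local-ratio decomposition $w = w_1 + w_2$ with recursion, whereas the paper tracks the invariant ``$M_i$ is $(3-2/b)$-approximate for $G_i$'' step by step as the stack is unwound; these are two standard presentations of the same argument.
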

\begin{proof}
   We assume $b \geq 2$ since the case $b = 1$ is exactly Theorem~\ref{thm:lr_matching}.

   Let $M_0 = \emptyset$ and, for $i \geq 1$, let $M_i$ be the matching maintained after we have unwinded $i$ edges from the stack.
   Next, let $G_0$ be the (weighted) graph just before we begin unwinding the edges and, for $i \geq 1$, let $G_i$ be the graph with the last $i$ weight reductions reversed.
   (So, $G_0, G_1, \ldots$ all have the same vertex and edge sets; only the edge weights are different.)

   We claim that $M_i$ is a $(3 - 2/b)$-approximate matching to $G_i$ for all $i = 0, 1, 2, \ldots$ which proves the claim.
   The case $i = 0$ is trivial since $G_0$ has no positive edge weights, so an empty matching is an optimal matching.

   Now suppose that $M_{i-1}$ is a $(3-2/b)$-approximate matching to $G_{i-1}$.
   Let $\OPT_{i-1}, \OPT_i$ denote an optimal matching of $G_{i-1}, G_i$, respectively and $w_{i-1}, w_i$ be the respective weight functions.
   Let $e_i = (u_i, v_i)$ be the $i$th edge that was popped from the stack.
   Note that
   \begin{align*}
      w_i(\OPT_i) & \leq w_{i-1}(\OPT_{i-1}) + w_i(e_i) + \left( \frac{b(u_i) - 1}{b(u_i)} + \frac{b(v_i) - 1}{b(v_i)} \right) w_i(e_i) \\
      & = w_{i-1}(\OPT_{i-1}) + w_i(e_i) \left( 3 - \frac{1}{b(u_i)} - \frac{1}{b(v_i)} \right) \\
      & \leq w_{i-1}(\OPT_{i-1}) + w_i(e_i) (3 - 2/b).
   \end{align*}
   Next, we give a lower bound for $w_i(M_i)$.
   Suppose first that we add $e_i$ to the matching, i.e.~$M_{i} = M_{i-1} \cup \{e_i\}$.
   Then
   \begin{align*}
      w_i(M_i) \geq w_{i-1}(M_{i-1}) + w_i(e_i).
   \end{align*}
   On the other hand, suppose we do not add $e_i$ to the matching, i.e.~$M_{i} = M_{i-1}$.
   In this case, $M_i$ must contain at least $b(u_i)$ edges incident to $e_i$ or $b(v_i)$ edges incident to $e_i$.
   This implies that $w_i(M_i) \geq w_{i-1}(M_{i-1}) + w_i(e_i)$ because undoing the weight reduction increases the weight of the edges incident to $u_i$ (other than $e_i$) by $w(e_i) / b(u_i)$
   and similarly for $v_i$.
   In either case, we have
   \begin{align*}
      (3 - 2/b) w_i(M_i) & \geq (3 - 2/b) (w_{i-1}(M_{i-1}) + w_i(e_i)) \\
      & \geq w_{i-1}(\OPT_{i-1}) + (3 - 2/b) w_i(e_i) \\
      & \geq w_i(\OPT_i),
   \end{align*}
   where the second inequality used the assumption that $M_{i-1}$ is a $(3-2/b)$-approximate matching to $G_{i-1}$.
\end{proof}

\subsection{Randomized local ratio}
Unfortunately, we cannot translate the above local ratio method for $b$-matching to the MapReduce model in the same way that we did for matching.
The issue is as follows.
Consider a vertex and suppose that it has $b$ edges all of unit weight.
In the matching case (i.e.~$b = 1$) if any of these edges were chosen by the local ratio algorithm then every edge incident to the vertex would be killed off by the weight reduction step.
On the other hand if $b > 1$ then the weight of the remaining edges after looking at $t < b$ edges is $(1 - 1/b)^t > 0$.
In particular, we have only managed to kill all the edges after we have looked at all the edges!

To fix this, we will consider an $\eps$-adjusted local ratio method which is inspired by the algorithm of Paz and Schwartzman~\cite{PS17} for weighted matchings in the semi-streaming model.
As in \Theorem{matchingMR}, we maintain a variable $\phi(v)$ for each vertex $v$ which corresponds to the sum of the weight reductions incident to vertex $v$.
However, rather than killing an edge $(u,v)$ if $w_{(u,v)} \leq \phi(u) + \phi(v)$, we kill the edge if $w_{(u,v)} \leq (1+\eps)(\phi(u) + \phi(v))$.
This corresponds to applying a reduction with a multiplier of either $1$ or $1+\eps$.
It is not hard to show that this gives a $(3 - 2/\max\{2,b\} + 2\eps)$-approximate $b$-matching.
Indeed, the only change to the above proof is to replace the inequality $w_i(\OPT_i) \leq w_{i-1}(\OPT_{i-1}) + (3 - 2/b)w_i(e_i)$
with $w_i(\OPT_i) \leq w_{i-1}(\OPT_{i-1}) + (3 - 2/b + 2\eps)w_i(e_i)$.

Let us now give some intuition for the MapReduce implementation of the $\eps$-adjusted local ratio method.
Consider a fixed a vertex $v$ and suppose we sample approximately $2b(v)$ of its edges uniformly at random. 
In expectation, $b(v)$ of the sampled edges will have weight at least the median weight edge adjacent to $v$. By reducing all of these edges in the local ratio algorithm, we can effectively remove half of the edges adjacent to $v$. Similarly, by drawing $\tO(b(v)n^{\mem})$ edges, we decrease the degree of $v$ by a factor of $n^{-\mem}$.

To implement the sequential local ratio algorithm in MapReduce, we essentially use the same algorithm as the one for maximum matching, adding up to $b \log \left(\delta^{-1}\right)$ edges to the stack for each vertex, where $\delta = \eps/(1+\eps)$. To simplify the analysis of the algorithm, we randomly sample a fixed number of edges from each vertex instead of sampling with uniform probability from the graph.
\begin{algorithm}
    \caption{$(3-2/b+2\eps)$-approximation for maximum weight $b$-matching. Blue lines are centralized.}
    \label{alg:matching}
    \begin{algorithmic}[1] 
        \Procedure{ApproxBMaxMatching}{$G = (V,E)$}
        	\State $\delta \gets \eps/(1+\eps)$
        	\State $E_1 \gets E$, $d_{1}(v) \gets d(v)$, $i \gets 1$
            \State $S \gets \emptyset$ \Comment{Initialize an empty stack.}
            \While{$E_i \neq \emptyset$}
            	\For{each vertex $v \in V$}
                	\If{ $|E_i| < 2b \ln \left(\delta^{-1}\right) n^{1+\mem}$ }
                    	\State Let $E'_v$ be all edges in $E_i$ incident to $v$ \label{ln:bmatching_done}
                    \Else
                		\State Randomly sample $b(v) \ln \left(\delta^{-1}\right) n^{\mem}$ edges from $E_i$ incident to $v$ and add to $E'_v$ \label{ln:bmatching_sample}
                    \EndIf
                \EndFor
                
                \CFOR{each vertex $v \in V$}
                	\CSTATE $j \gets 1$
                	\CWHILE{$j \leq b(v) \ln \left(\delta^{-1}\right) $} \label{ln:bmatching_wr}
                      \CSTATET Let $e \in E_v'$ be the heaviest edge and
                      apply an $\eps$-adjusted weight reduction
                      \CSTATE Push $e$ onto the stack $S$
                      \CSTATE Remove $e$ from $E_v'$
                      \CSTATE $j \gets j+1$
                    \ENDWHILE
                \ENDFOR
                
                \State Let $E_{i+1}$ be the subset of $E_i$ with positive weights
                \State $d_{i+1}(v) \gets |\{e \in E_{i+1} : v \in e\}|$
                	\label{bmatching:dupdate}
                \State $i \gets i + 1$
            \EndWhile
            \CSTATE Unwind $S$, adding edges greedily to the $b$-matching $M$
            \State \textbf{return} $M$
        \EndProcedure
    \end{algorithmic}
\end{algorithm}

The analysis of the algorithm is similar to the proof for maximum matching.
Instead of repeating the entire proof, we show a variant of Lemma~\ref{lem:phases} below and note that all the other lemmas extend similarly.

\begin{lemma}
    Suppose $\eta = n^{1+\mem}$ for some constant $\mem > 0$.
	Let $\Delta_i = \max_v d_i(v)$.
    For $i > 2$ with probability at least
    $1 - \frac{n^2}{\delta} \cdot \exp(-n^{\mem/2} / 2)$, it holds that $\Delta_{i+1} \leq \Delta_{i} / n^{\mem/4}$.
\end{lemma}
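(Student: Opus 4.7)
I will follow the structure of the proof of \Lemma{phases} for matching, adapting it to the per-vertex sampling and the $\eps$-adjusted local ratio used in \Algorithm{matching}. Fix a vertex $v$ and let $k_v$ be the number of positive-weight edges incident to $v$ at the moment processing of $v$ begins in iteration $i+1$. If $k_v \le \Delta_i/n^{\mem/4}$ the conclusion for $v$ is automatic, so assume $k_v > \Delta_i/n^{\mem/4}$. Focus on the $k_v/n^{\mem/4}$ heaviest edges at $v$ and let $W$ denote the original weight of the lightest edge in this ``top'' group. The goal is to show that after iteration $i+1$ only top-group edges at $v$ can remain with positive adjusted weight, yielding $d_{i+1}(v) \le k_v/n^{\mem/4} \le \Delta_i/n^{\mem/4}$.

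The next step is a Chernoff bound on the sampling step (line~\ref{ln:bmatching_sample}). Among the $b(v)\ln(\delta^{-1})n^{\mem}$ edges drawn uniformly from $v$'s $k_v$ positive-weight neighbours, the expected number drawn from the top group is $b(v)\ln(\delta^{-1})n^{3\mem/4}$, so a standard multiplicative Chernoff bound gives that at least $b(v)\ln(\delta^{-1})$ top-group edges are sampled with probability at least $1-\exp(-n^{\mem/2}/2)$. Conditioning on this event, the inner while loop at line~\ref{ln:bmatching_wr} pops the $b(v)\ln(\delta^{-1})$ heaviest sampled edges in order; since top-group samples all originally have weight at least $W$, the popped edges coincide (in some order) with top-group samples. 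Each popped edge $e$ contributes $w(e)/b(v)$ to $\phi(v)$, where $w(e)$ is its \emph{current} weight. A short recursion (see below) shows $\phi(v) \ge W(1-\delta) = W/(1+\eps)$ at the end of the iteration, so the $\eps$-adjusted kill condition $w(e') \le (1+\eps)(\phi(u)+\phi(v))$ is met by every edge $e'=(u,v)$ with original weight $\le W$. Hence only top-group edges survive. A union bound over the $n$ vertices and the $O(n\ln(\delta^{-1}))=O(n/\delta)$ sampling/processing events produced at them gives the claimed failure probability $\tfrac{n^2}{\delta}\exp(-n^{\mem/2}/2)$.

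The main obstacle, and the reason $\ln(\delta^{-1})$ appears both in the sample size and in the while-loop length, is exactly this within-iteration weight decay: when the $j$-th top-group edge is popped, its current weight is strictly less than $W$ because the earlier $j-1$ pops at $v$ (and any contributions funneled in from the other endpoint, which only help) have already reduced it. Tracking this gives the recursion $\phi_j(v) \ge \phi_{j-1}(v) + (W-\phi_{j-1}(v))/b(v)$, whose closed form is $\phi_j(v) \ge W\bigl(1-(1-1/b(v))^j\bigr)$; plugging in $j=b(v)\ln(\delta^{-1})$ and using $(1-1/b(v))^{b(v)} \le e^{-1}$ yields $\phi_{b(v)\ln(\delta^{-1})}(v) \ge W(1-\delta) = W/(1+\eps)$, which is exactly what the $\eps$-adjustment in the kill condition is tuned to consume. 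Verifying this recursion is robust to the currently-popped edge sometimes coming from outside the top group (which only raises $\phi(v)$), and that the $\phi(u)$ contribution at the other endpoint never hurts, are the remaining bookkeeping steps in a full proof.
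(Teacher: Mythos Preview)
Your proof follows essentially the same route as the paper: isolate the top $k_v/n^{\mem/4}$ heaviest edges at $v$, argue that w.h.p.\ at least $b(v)\ln(\delta^{-1})$ of them are sampled, and then use the geometric decay $(1-1/b(v))^{b(v)\ln(1/\delta)}\le\delta$ together with the $\eps$-adjusted kill rule to eliminate every non-top edge. The only methodological difference is in the sampling step: the paper partitions the $b(v)\ln(\delta^{-1})n^{\mem}$ samples into $b(v)\ln(\delta^{-1})$ blocks of size $n^{\mem}$ and shows each block hits a distinct heavy edge, whereas you use a single Chernoff bound on the heavy-sample count; both yield the same $\exp(-n^{\mem/2}/2)$ tail. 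One small correction: the samples at line~\ref{ln:bmatching_sample} are drawn from the $d_i(v)\le\Delta_i$ edges of $E_i$ incident to $v$, not from the $k_v\le d_i(v)$ edges still alive when processing of $v$ begins, so the expected number of heavy samples is $\ge b(v)\ln(\delta^{-1})n^{\mem}\cdot\frac{k_v/n^{\mem/4}}{\Delta_i}\ge b(v)\ln(\delta^{-1})n^{\mem/2}$ rather than $b(v)\ln(\delta^{-1})n^{3\mem/4}$---still ample for your Chernoff bound, and you should also dispatch the trivial branch $|E_i|<2b\ln(\delta^{-1})n^{1+\mem}$ where line~\ref{ln:bmatching_done} takes all edges.
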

\begin{proof}
   If $\card{E_i} < 2b \ln(\delta^{-1}) n^{1+\mem}$ then the local ratio algorithm is performed on the entire graph, and so $E_{i+1}=\emptyset$ and the lemma is trivial.
   So, assume that $\card{E_i} \geq 2b \ln(\delta^{-1}) n^{1+\mem}$.

	Let $k_v$ be the number of edges incident to $v$ with positive weight when we reach $v$ in the for loop in line \ref{ln:weight_reduction}.
    If $k_v \leq \Delta_i / n^{\mem/4}$ then we are done, so suppose $k_v > \Delta_i / n^{\mem/4}$.
    
    Define an edge to be heavy if it is one of the top $k_v / n^{\mem/4}$ heaviest edges.
    We claim that we sampled at least $b(v) \ln(1/\delta)$ distinct heavy edges w.h.p.
    To see this, first split the sampled edges into $b(v) \ln(1/\delta)$ groups each with $n^{\mem}$ edges.
    Then, using a union bound, the probability that we \emph{cannot} pick a distinct heavy edge from each group is bounded above by
    \begin{equation}
       \label{eqn:bmatching_prob}
       \begin{multlined}
          \sum_{t=1}^{b(v) \ln(1/\delta)} \left( 1 - \frac{k_v / n^{\mem/4} - t}{\Delta_i} \right)^{n^{\mem}} \\
          \leq
          b(v) \ln(1/\delta) \left( 1 - \frac{\Delta_i / n^{\mem/2} - b(v) \ln(1/\delta)}{\Delta_i} \right)^{n^{\mem}}.
      \end{multlined}
    \end{equation}
    By assumption, we have $|E_i| \geq 2b(v) \ln \left(\delta^{-1}\right) n^{1+\mem}$,
    so since maximum degree is at least average degree, $\Delta_i/n^{\mem/2} \geq 2b(v) \log(1/\delta)$.
    Hence, \eqref{eqn:bmatching_prob} is at most
    \[
       b(v) \ln(1/\delta) \left( 1 - \frac{1}{2n^{\mem/2}} \right)^{n^{\mem}} \leq b(v) \ln(1/\delta) \exp\left( -n^{\mem/2} / 2 \right).
    \]
    So we have sampled at least $b(v) \ln(1/\delta)$ heavy edges
    with probability at least $1 - b(v) \ln(1/\delta) \exp\left( -n^{\mem/2} / 2 \right)$.
    We will condition on this event for the rest of the proof.
    
    Recall that in the algorithm we add the top $b(v) \ln(1/\delta)$ edges to the stack (note that the ordering of the remaining weights are unchanged in the while loop starting at Line~\ref{ln:bmatching_wr} because we subtract the same quantity from each edge).
    Let $w_L$ be the weight of the lightest of these edges \emph{before} we perform the weight reductions.
    Imagine for now that the algorithm performs ordinary weight reductions rather than $\eps$-adjusted reductions.
    Then, after performing the weight reductions, the weight of all non-heavy edges is at most $w_L (1 - 1/b(v))^{b(v) \ln(1/\delta)} \leq w_L \delta$.
    Since we choose $\delta = \eps/(1+\eps)$, this implies that we have reduced the weight of all non-heavy edges by a $1/(1+\eps)$ fraction of their original weight.
    So, since the algorithm actually performs $\eps$-adjusted weight reductions, it actually reduces the weight of the non-heavy edges to a non-positive value.
    These edges can now be safely discarded from the graph, so we have $d_{i+1}(v) \leq k_v/n^{\mem/4}$ after the end of the weight reduction phase (line \ref{bmatching:dupdate}). Taking a union bound completes the proof.
\end{proof}

\begin{theorem}
\label{thm:bmatch}
There is a MapReduce algorithm that computes a $(3-2/b+2\eps)$-approximation to the maximum weight $b$-matching in $O(c/\mem)$ rounds when $\mem > 0$ and $O(\log n)$ rounds when $\mem = 0$.
The memory requirement is $O(b \log(1/\eps) n^{1+\mem})$.
\end{theorem}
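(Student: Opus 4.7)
The plan is to follow closely the proof of \Theorem{matchingMR}, making use of the $b$-matching analog just established. Correctness of Algorithm~\ref{alg:matching} as a sequential procedure is immediate: Algorithm~\ref{alg:matching} is an instantiation of the $\eps$-adjusted local ratio method for $b$-matching, where the random sampling merely determines the order in which edges are selected for weight reduction. Since the $\eps$-adjusted method gives a $(3-2/b+2\eps)$-approximation regardless of the selection order, the algorithm returns a $(3-2/b+2\eps)$-approximate $b$-matching whenever it does not fail. An iteration fails only if the total number of sampled edges exceeds the central machine's memory, which happens with probability at most $\exp(-\Omega(n^{\mem}))$ by a standard Chernoff argument (analogous to Claim~\ref{claim:fail}).

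For the round complexity when $\mem>0$, I would combine the degree reduction lemma proved just above with an analog of \Lemma{first_phase}: a single round of sampling $b(v)\log(1/\delta) n^{\mem}$ edges per vertex drives $\Delta_2 \leq n^{c}$ w.h.p. Thereafter, the lemma gives $\Delta_{i+1} \leq \Delta_i / n^{\mem/4}$ w.h.p. per iteration, so $|E_{i+1}| \leq n \Delta_{i+1}$ falls below $2 b \log(1/\delta) n^{1+\mem}$ within $O(c/\mem)$ iterations; one additional iteration handles the residual graph entirely on the central machine (via line~\ref{ln:bmatching_done}). A union bound over the $O(c/\mem)$ iterations gives overall success probability $1 - O(cn^2/\mem\delta) \exp(-n^{\mem/2}/2)$, which is high provided $\mem = \omega(\log\log n / \log n)$.

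The MapReduce implementation mirrors that of \Theorem{matchingMR}. There are $n^{c-\mem}$ machines; each edge is assigned to one of them (with $O(n^{1+\mem})$ edges per machine) and each vertex with its adjacency list is randomly placed on a machine (so each machine holds $O(n^{1+\mem})$ incidences w.h.p.). The sampling step (line~\ref{ln:bmatching_sample}) is performed locally by each vertex. The central machine keeps a stateful value $\phi(v)$ for each vertex, updated by the $\eps$-adjusted weight reduction so that the current modified weight of an edge $e=\{u,v\}$ not yet on the stack equals $w_e - (1+\eps)(\phi(u) + \phi(v))$. After the central machine processes the $O(b \log(1/\delta) n^{1+\mem})$ sampled edges, it broadcasts the updated $\phi(v)$ values to all vertices; each edge then recomputes its modified weight and determines its membership in $E_{i+1}$. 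Knowledge of $|E_i|$ for the next round is aggregated similarly. The per-machine memory is dominated by the central machine's $O(b\log(1/\eps) n^{1+\mem})$ load.

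The main obstacle is the case $\mem=0$, since the high-probability geometric decrease of $\Delta_i$ above requires $\mem$ to be at least a slowly-growing function of $n$. The fix is to follow the strategy of \Theorem{matchingSmallMem}: instead of tracking $\Delta_i$, argue in expectation about $|E_i|$. One shows that for each heavy vertex (those with degree at least $|E_i|/n$), the $b(v)\log(1/\delta)\, n^{\mem}$ sampled edges hit a constant fraction of the top edges with constant probability, so the contribution of heavy vertices to $|E_{i+1}|$ is at most a constant fraction of their contribution to $|E_i|$; light vertices contribute at most $|E_i|$ in total. This yields $\E[|E_{i+1}| \mid E_i] \leq \gamma |E_i|$ for some constant $\gamma<1$, and Markov's inequality over $O(\log n)$ rounds gives termination w.h.p. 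Coupling this with the unmodified algorithm (as in \Theorem{matchingSmallMem}) and invoking the correctness of the $\eps$-adjusted local ratio method completes the argument.
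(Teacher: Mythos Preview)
Your proposal is correct and follows exactly the approach the paper intends: the paper does not give an explicit proof of this theorem, but instead proves the $b$-matching variant of \Lemma{phases} and states that ``all the other lemmas extend similarly'' to the matching case, leaving the reader to assemble the analogs of \Lemma{first_phase}, \Theorem{matchingLargeMem}, \Theorem{matchingSmallMem}, and \Theorem{matchingMR}. Your write-up does precisely this assembly, including the expectation-based argument for $\mem=0$ and the $\phi(v)$-based MapReduce implementation, so there is nothing substantive to add.
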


\comment{
\section{Older content: Minimum weight vertex cover}
Let $G = (V,E,w)$ be a graph where $w \colon V \to \R_{\geq 0}$ is a weight vector on the vertices.
A \emph{vertex cover} is a set $U \subseteq V$ such that every edge in $E$ is incident to at least one vertex in $U$.
A minimum weight vertex cover is a vertex cover $U$ that minimizes $\sum_{v \in U} w_v$.

\subsection{Local ratio method}
The local ratio algorithm for minimum vertex cover is given below.
\begin{tcolorbox}[title=Basic algorithm for minimum weight vertex cover]
	Arbitrarily select an edge $e$ whose vertices have positive weight and
    reduce the minimum of the two vertex weights from the two vertices.
    Remove all vertices with weight 0 from the graph and add them to the cover.
    Repeat so long as edges remain.
\end{tcolorbox}
The weight reduction step works as follows.
If the edge $e = (u,v)$ was selected and the vertex weights are $w_u$ and $w_v$, respectively,
then the weight reduction makes the updates $w_u \gets w_u - \min\{w_u, w_v\}$ and $w_v \gets w_v - \min\{w_u, w_v\}$.
\begin{theorem}
	\label{thm:lr_vc}
	The above algorithm gives a $2$-approximation to the minimum weight vertex cover problem.
\end{theorem}

\begin{proof}
	The output of the algorithm gives a valid vertex cover.
    To see this, suppose that an edge was not in the cover.
    Then both of the endpoints must have positive weight (since we output all nodes with zero weight).
    This is impossible, as the algorithm does not terminate so long as there are edges with both vertices having positive weight.

	We now prove that the vertex cover that is output by the algorithm is indeed a 2-approximate vertex cover.
	The linear program relaxation for minimum weight vertex cover is
    \begin{equation}
      \label{eqn:vc_p}
      \min \sum_{v \in V} w_v x_v \quad \text{s.t.} \quad x_u + x_v \geq 1 \,\, \forall (u,v) \in E, \quad x \geq 0.
   \end{equation}
   Taking the dual of \eqref{eqn:vc_p} gives
   \begin{equation}
      \label{eqn:vc_d}
      \max \sum_{e \in E} y_e \quad \text{s.t.} \quad \sum_{e \ni v} y_e \leq w_v \,\, \forall v \in V, \quad y \geq 0.
   \end{equation}
   For the proof, we can modify the basic algorithm to construct a feasible dual solution as follows.
   We initialize $y$ to be the $0$ vector.
   Consider the $t$-th iteration of the algorithm and suppose that we select the edge $e = (u,v)$.
   Let $w'_u, w'_v$ denote the current weights of the vertices $u, v$, respectively.
   Then we will increment $y_e$ by $\min\{w'_u, w'_v\}$ and decrement both $w'_u, w'_v$ by the same amount
   ($y_e'$ does not change for $e' \neq e$).
   To see that $y$ is feasible, we claim that at any iteration $t$ we have $w'_v + \sum_{e \ni v} y_e = w_v$.
   Indeed, this is true because whenever we select an edge $e$ incident to $v$, increment $w'_v$ and $y_e$ by the same amount.
   Finally, let $U$ be the set of vertices with weight $0$ at the end of the algorithm.
   Then $U$ is a vertex cover, as mentioned above, and since $y$ is feasible we have
   \[
      2\cdot \OPT \geq 2 \sum_{e \in E} y_e = \sum_{v \in V} \sum_{e \ni v} y_e \geq \sum_{v \in U} \sum_{e \ni v} y_e = \sum_{v \in U} w_v.
   \]
   To conclude, we have that $U$ is a 2-approximate vertex cover to the minimum weight vertex cover problem.
\end{proof}

\subsection{Local ratio and random sampling}
Our algorithm for approximate minimum weight vertex is similar to the algorithm in \cite{LMSV11} for finding maximal matchings.
The key difference is that in each iteration we apply the local ratio method to ensure that the number of edges that we need to consider
goes down by a large factor in each iteration.
\begin{algorithm}
    \caption{2-approximation for minimum weight vertex cover}
    \label{alg:minvc}
    \begin{algorithmic}[1] 
        \Procedure{ApproxMinVC}{$G = (V,E)$}
        	\State $G_1 \gets G$, $E_1 \gets E$, $i \gets 1$
            \While{$E_i \neq \emptyset$}
            	\State Sample each edge in $E_i$ with probability
                	   $p = \min\left(1, \frac{2\eta}{|E_i|}\right)$ and add it to $E'$ \label{alg:minvcsample}
                \If{$|E'| > 6\eta$}
                	\State \textbf{Fail}
                \EndIf
                \State Run the local ratio method on $G' = (V, E')$ \label{alg:minvclocalratio}
                \State Let $I$ be the set of vertices with positive weight
                \State $E_{i+1} \gets E[I]$
                \State $i \gets i + 1$
            \EndWhile
            \State \textbf{return} all $v \in V$ with weight 0
        \EndProcedure
    \end{algorithmic}
\end{algorithm}

We will need the following technical lemma from \cite{LMSV11} which says that any subgraph of $G$ induced
by the vertices either contains few edges or contains an edge that was sampled in \Cref{alg:minvcsample} of the algorithm.
\begin{lemma}[\cite{LMSV11}]
	\label{lem:filter}
	Let $G = (V,E)$ be a graph and $E' \subseteq E$ be a set where each $e \in E$ is added to $E'$ with probability $p$.
    Let $I \subseteq V$ and let $E[I]$ denote the set of edges with both endpoints in $I$.
    Then, with probability at least $1 - e^{-n}$, either $|E[I]| \leq 2n/p$ or $E[I] \cap E \neq \emptyset$.
\end{lemma}
\begin{corollary}
	\label{cor:filter}
	Let $I$ be the set of vertices with positive weight after \Cref{alg:minvclocalratio} of \Cref{alg:minvc}.
    Then $E[I] \leq 2n/p$ with probability at least $1 - e^{-n}$.
\end{corollary}
\begin{proof}
	Suppose that $E[I] > 2n/p$.
    Then by Lemma~\ref{lem:filter}, there exists with high probability $e = (u,v) \in E[I]$ that was sampled in the sampling step.
    Hence, the local ratio method will set either $w(u), w(v)$ to 0 which contradicts that $I$ is the set of vertices with positive weight.
\end{proof}

\begin{theorem}
	Suppose $|E| = O(n^{1+c})$ and let $\eta = n^{1+\mem}$.
    Then \Cref{alg:minvc} terminates in $O( c / \mem )$ iterations and returns a 2-approximate vertex
    cover w.h.p. Furthermore, if $\mem = 0$ and $\eta = 40n$, \Cref{alg:minvc} terminates in $O( \log n )$ iterations.
\end{theorem}
\begin{proof}
	By Theorem~\ref{thm:lr_vc}, the vertices give a 2-approximation to the minimum vertex cover. It remains to bound the number of iterations.
	By a standard Chernoff bound, an iteration fails with probability at most $\exp(-n^{1+\mem}) \leq \exp(-n)$.
    Combining this with Corollary~\ref{cor:filter}, after iteration $i$,
    we have $|E_{i+1}| \leq |E_i| / n^{\mem} \leq n^{1 + c - i \mem}$ w.h.p. When $\mem = 0$, choosing $\eta = 40n$ gives $|E_{i+1}| \leq |E_i| / 2$ w.h.p. 
    Hence, the algorithm terminates and outputs a set of vertices in at most $\lceil c / \mem \rceil$
    iterations when $\mem > 0$ and $O(\log n)$ iterations when $\mem = 0$.
\end{proof}

\subsection{MapReduce implementation}

As a corollary, we get the following theorem in the MapReduce model.
\begin{theorem}
There is a MapReduce algorithm that computes a 2-approximation to the minimum weight vertex cover in
$O(c / \mem)$ rounds when $\mem > 0$ and $O(\log n)$ rounds when $\mem = 0$.
\end{theorem}
\begin{proof}
	It suffices to show that each iteration of the while loop within \Cref{alg:minvc} can be done in a constant number of iterations. Lines 4-9 of \Cref{alg:minvc} consists of three phases: sampling, local ratio, and pruning. 
    
    Given the number of edges (which we assume is available from the previous iteration), the sampling phase can be done in one mapping round on every while loop iteration except the first. Here the mapper simply receives an edge $e = (u,v)$ as input, sends it to the local ratio reducer with probability $p$, and with probability $1-p$ to a secondary passthrough reducer that simply returns its input.
    
    Clearly the local ratio phase can be done in one round on one machine in a single reduce round,
    since the number of sampled edges is $O(n^{1+\mem})$. This local ratio reducer then sends out a list of $O(n)$ vertices that remain alive.
    
    The pruning of edges on Lines 8 and 9 can be done in at most two more MapReduce rounds. Each mapper maps an edge $(u,v)$ to the reducer with key $u$. Additionally, for each node $u$ we notify the reducer with key $u$ whether node $u$ is alive or not. The reducers with dead vertices simply throw away their edges, while the reducers with live vertices simply returns their input. Doing this for one more round, mapping edges $(u,v)$ to the reducer with key $v$, completes the pruning step.
    
    Hence each iteration of the while loop takes at most 3 MapReduce rounds.
\end{proof}

%

}

\section{Auxiliary Results}
\begin{theorem}[Chernoff bound]
   \label{thm:chernoff}
   Let $X_1, \ldots, X_n$ be independent random variables such that $X_i \in [0,1]$ with probability 1. Define $X = \sum_{i=1}^n X_i$ and let $\mem = \E X$.
   Then, for any $\eps > 0$, we have
   \[
      \Pr[X \geq (1 + \eps) \mem ] \leq \exp\left( - \frac{\min\{\eps, \eps^2\} \mem}{3} \right).
   \]
\end{theorem}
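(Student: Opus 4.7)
The plan is to follow the textbook exponential-moment (Chernoff/Bernstein) argument and then to discharge the concrete constants by checking a one-variable elementary inequality.

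First I would apply Markov's inequality to $e^{tX}$ for a parameter $t>0$ to be chosen later, giving
\[
   \Pr[X \geq (1+\eps)\mem] \;=\; \Pr[e^{tX} \geq e^{t(1+\eps)\mem}] \;\leq\; \frac{\E[e^{tX}]}{e^{t(1+\eps)\mem}}.
\]
By independence, $\E[e^{tX}] = \prod_{i=1}^n \E[e^{tX_i}]$. Since $X_i \in [0,1]$, the function $x \mapsto e^{tx}$ is convex on $[0,1]$, so $e^{tX_i} \leq 1 + X_i(e^t - 1)$ pointwise. Taking expectations and using $1+y \leq e^y$, I get $\E[e^{tX_i}] \leq 1 + \mem_i(e^t-1) \leq \exp(\mem_i(e^t-1))$, where $\mem_i := \E X_i$. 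Multiplying over $i$ yields $\E[e^{tX}] \leq \exp(\mem(e^t-1))$, and hence
\[
   \Pr[X \geq (1+\eps)\mem] \;\leq\; \exp\bigl(\mem(e^t - 1) - t(1+\eps)\mem\bigr).
\]

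Next I would optimize by setting $t = \ln(1+\eps) > 0$. Substituting gives the clean form
\[
   \Pr[X \geq (1+\eps)\mem] \;\leq\; \exp\bigl(-\mem\bigl((1+\eps)\ln(1+\eps) - \eps\bigr)\bigr).
\]
The remaining step, which is the only non-routine piece, is the elementary inequality
\[
   (1+\eps)\ln(1+\eps) - \eps \;\geq\; \frac{\min\{\eps,\eps^2\}}{3} \qquad \text{for all } \eps > 0.
\]
I would verify this by splitting into the two regimes $\eps \in (0,1]$ and $\eps > 1$. For $\eps \in (0,1]$, set $f(\eps) = (1+\eps)\ln(1+\eps) - \eps - \eps^2/3$; checking $f(0)=0$, $f'(0)=0$, and showing $f''(\eps) = 1/(1+\eps) - 2/3 \geq 0$ on $[0,1/2]$ and handling $[1/2,1]$ by direct computation (or by a more convenient bound such as $\ln(1+\eps) \geq \eps - \eps^2/2$ combined with algebra) gives $f(\eps)\geq 0$. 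For $\eps > 1$, I would use that $g(\eps) := (1+\eps)\ln(1+\eps) - \eps - \eps/3$ is increasing (its derivative is $\ln(1+\eps) - 1/3 > 0$ for $\eps > 1$) and $g(1) = 2\ln 2 - 4/3 > 0$.

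The main obstacle, such as it is, is purely calibrating constants in this last elementary inequality; the probabilistic content is entirely captured by the exponential-moment bound in the first step. Once the inequality is established, combining it with the displayed bound yields $\Pr[X \geq (1+\eps)\mem] \leq \exp(-\min\{\eps,\eps^2\}\mem/3)$, which is the claim.
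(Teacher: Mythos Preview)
Your argument is the standard exponential-moment proof and is correct; the only part that needs a word of care is the elementary inequality on $[1/2,1]$, but your sketch there can be completed cleanly by noting that $f'(\eps)=\ln(1+\eps)-2\eps/3$ satisfies $f'(1)=\ln 2 - 2/3>0$ and $f''<0$ on $(1/2,1]$, so $f'>0$ throughout $[1/2,1]$ and hence $f$ is increasing from $f(1/2)\geq 0$.

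As for comparison: the paper does not prove this statement at all. It is listed under ``Auxiliary Results'' as a standard Chernoff bound and is invoked without proof, so there is no argument in the paper to compare your approach against.
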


\begin{theorem}[Hajnal-Szemer\'{e}di]
   \label{thm:hajnal}
   Every graph with $n$ vertices and maximum degree bounded by $k-1$ can be coloured using $k$ colours so that each colour class has at least $\lfloor n / k \rfloor \geq n/(2k)$ vertices.
\end{theorem}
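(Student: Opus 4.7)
The plan is to argue by induction on the number of edges of $G$, following the classical approach. The base case $E(G) = \emptyset$ is immediate: partition $V$ arbitrarily into $k$ classes of sizes $\lfloor n/k \rfloor$ and $\lceil n/k \rceil$. For the inductive step, pick any edge $uv \in E(G)$ and apply the hypothesis to $G - uv$, which still has maximum degree at most $k-1$; this yields an equitable proper $k$-coloring $V_1, \ldots, V_k$ of $G - uv$. If $u$ and $v$ lie in distinct classes, the same partition is a valid equitable coloring of $G$. Otherwise they share a common class, say $V_1$, and the task reduces to producing a proper recoloring that separates $u$ and $v$ while preserving the multiset of class sizes.

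To effect this recoloring, I would introduce an auxiliary digraph $D$ on the color classes with an arc $V_i \to V_j$ whenever some vertex $w \in V_i$ has no neighbor in $V_j$; such an arc certifies a legal unit transfer of $w$ from $V_i$ to $V_j$. The goal is to exhibit either a directed path from $V_1$ to a class strictly smaller than $V_1$ or, when all class sizes already coincide, a directed cycle through $V_1$. Cascading transfers along this structure, arranged so that $u$ initiates the chain, yields a proper $k$-coloring whose class sizes form the same multiset up to a one-unit shift (or no change at all, in the cyclic case) and in which $u$ and $v$ now lie in different classes; this closes the induction.

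The main obstacle, and the true content of the Hajnal--Szemer\'edi theorem, is showing that this augmenting structure must always exist. Suppose for contradiction it does not; let $A$ be the union of all classes reachable from $V_1$ in $D$ (including $V_1$) and let $B$ be its complement, with $b$ classes. By construction every vertex of $A$ has at least one neighbor in each class of $B$, so each vertex of $A$ has degree at least $b$ in $G$. Double counting the edges across the cut $(A,B)$ using this fact together with the overall degree bound $\Delta(G) \leq k-1$, and combining with the assumed lower bounds on the sizes of classes in $B$, should yield a numerical contradiction. The delicate point, which is the real subtlety of the theorem, is handling the case in which $u$ itself obstructs a transfer along some candidate chain; the standard remedy is to pre-select $u$ as a vertex of $V_1$ minimizing its neighborhood in a carefully chosen small class and, if a single round of augmentation still fails, iterate the construction after a local modification of the coloring. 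This intricate structural step is precisely why Hajnal and Szemer\'edi's original paper is devoted essentially entirely to executing it.
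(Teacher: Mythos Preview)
The paper does not prove this statement at all: Theorem~\ref{thm:hajnal} is listed in the ``Auxiliary Results'' appendix and is simply quoted as a known theorem from the literature, with no proof given. So there is nothing to compare your proposal against.

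As for the proposal itself, what you have written is an outline of the classical inductive scheme, but by your own admission it is not a proof. You correctly identify that the entire difficulty lies in showing the augmenting structure exists, and then defer that step to Hajnal and Szemer\'edi's original argument. The counting sketch you give (every vertex of $A$ has a neighbor in each class of $B$, hence degree at least $b$, now double-count) is the right starting intuition but does not by itself yield a contradiction; the actual argument requires the elaborate recoloring cascades and the careful case analysis you allude to but do not carry out. So as a proof this is incomplete, though as a description of where the content of the theorem lies it is accurate. Since the paper merely cites the result, no more is needed here.
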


\ifpdf
    \bibliographystyle{plainurl}
\else
    \bibliographystyle{plain}
\fi
\begin{small}
\bibliography{refs}
\end{small}

\end{document}